\definecolor{purple}{rgb}{0.5, 0.0, 0.5}
\definecolor{dark_green}{rgb}{0.0, 0.5, 0.0}
\definecolor{mygray}{gray}{0.6}
\DeclareMathAlphabet{\mathpzc}{OT1}{pzc}{m}{it}
\newcommand{\white}{\color{white}}
\newcommand{\Cc}{\mathcal{C}}
\newcommand{\ow}{\mathcal{O}}
\newcommand{\K}{\mathcal{K}}
\newcommand{\M}{\mathcal{M}}
\newcommand{\Mb}{\bold{M}}
\newcommand{\Qb}{\bold{Q}}
\newcommand{\Qcal}{\mathcal{Q}}
\newcommand{\Rb}{\bold{R}}
\newcommand{\Rt}{\tilde{R}}
\newcommand{\Otil}{\tilde{O}}
\newcommand{\wb}{\bold{w}}
\newcommand{\Ub}{\bold{U}}
\newcommand{\vb}{\bold{v}}
\newcommand{\Vb}{\bold{V}}
\newcommand{\xb}{\bold{x}}
\newcommand{\Xb}{\bold{X}}
\newcommand{\xbh}{{\hat{\bold{x}}}}
\newcommand{\yb}{\bold{y}}
\newcommand{\ellb}{\bar{\ell}}
\newcommand{\ellg}{\grave{\ell}}
\newcommand{\xit}{\tilde{\xi}}
\newcommand{\Fb}{\bold{F}}
\newcommand{\Gbp}{\bold{G}_\mathrm{p}}
\newcommand{\zb}{\bold{z}}
\newcommand{\Zb}{\bold{Z}}
\newcommand{\Pib}{\bold{\Pi}}
\newcommand{\Pibt}{\tilde{\Pib}}
\newcommand{\Omb}{\bold{\Omega}}
\newcommand{\Ombp}{\Omb_\mathrm{p}}
\newcommand{\D}{\mathcal{D}}
\newcommand{\Scal}{\mathcal{S}}
\newcommand{\Z}{\mathbb{Z}}
\newcommand{\R}{\mathbb{R}}
\newcommand{\C}{\mathbb{C}}
\newcommand{\E}{\mathbb{E}}
\newcommand{\N}{\mathbb{N}}
\newcommand{\gt}{\tilde{g}}
\newcommand{\gh}{\hat{g}}
\newcommand{\Pb}{\bold{P}}
\newcommand{\Sb}{\bold{S}}
\newcommand{\Sbt}{\tilde{\bold{S}}}
\newcommand{\Sbp}{\Sb_\mathrm{p}}
\newcommand{\Ab}{\bold{A}}
\newcommand{\Abt}{{\tilde{\bold{A}}}}
\newcommand{\Abph}{\hat{\Ab}_\mathrm{p}}
\newcommand{\Abg}{{\grave{\bold{A}}}}
\newcommand{\bb}{\bold{b}}
\newcommand{\bbh}{\hat{\bold{b}}}
\newcommand{\bbt}{\tilde{\bold{b}}}
\newcommand{\bbg}{{\grave{\bold{b}}}}
\newcommand{\Bb}{\bold{B}}
\newcommand{\ab}{\bold{a}}
\newcommand{\Cb}{\bold{C}}
\newcommand{\Db}{\bold{D}}
\newcommand{\Eb}{\bold{E}}
\newcommand{\eb}{\bold{e}}
\newcommand{\Ib}{\bold{I}}
\newcommand{\Hb}{\bold{H}}
\newcommand{\Hbh}{\hat{\bold{H}}}
\newcommand{\Hbt}{\tilde{\bold{H}}}
\newcommand{\Ht}{\tilde{H}}
\newcommand{\Ubt}{\tilde{\bold{U}}}
\newcommand{\Vbh}{\hat{\bold{V}}}
\newcommand{\Vbt}{\tilde{\bold{V}}}
\newcommand{\Sigb}{\bold{\Sigma}}
\newcommand{\rpm}{\raisebox{.2ex}{$\scriptstyle\pm$}}
\newcommand{\sfsty}[1]{\ensuremath{\mathsf{#1}}}  
\newcommand{\Enc}{\sfsty{Enc}}
\DeclarePairedDelimiter\ceil{\lceil}{\rceil}
\DeclareMathOperator{\GL}{GL}
\DeclareMathOperator{\tr}{tr}
\newtheorem{Thm}{Theorem}
\newtheorem{Cor}[Thm]{Corollary}
\newtheorem{Prop}[Thm]{Proposition}
\newtheorem{Lemma}[Thm]{Lemma}
\newtheorem{Def}[Thm]{Definition}
\newtheorem{Rmk}[Thm]{Remark}
\newcommand{\ind}{\text{\color{white}.$\quad$}}
\begin{document}
\title{Orthonormal Sketches for Secure Coded Regression}

\author{
  \IEEEauthorblockN{$\textbf{Neophytos Charalambides}^{\natural}$, $\textbf{Hessam Mahdavifar}^{\natural}$, $\textbf{Mert Pilanci}^{\sharp}$, \textbf{and} $\textbf{Alfred O. Hero III}^{\natural}$}
  \thanks{This work was partially supported by grant ARO W911NF-15-1-0479.$\quad$
  All proofs can be found online in \cite{CMPH22}.}
  \IEEEauthorblockA{$\text{\white.}^{\natural}$EECS Department University of Michigan $\text{\white.}^{\sharp}$EE Department Stanford University\\
  Email: neochara@umich.edu, hessam@umich.edu, pilanci@stanford.edu, hero@umich.edu}
\vspace{-5mm}
}

\maketitle

\begin{abstract}
In this work, we propose a method for speeding up linear regression distributively, while ensuring security. We leverage randomized sketching techniques, and improve straggler resilience in asynchronous systems. Specifically, we apply a random orthonormal matrix and then subsample in \textit{blocks}, to simultaneously secure the information and reduce the dimension of the regression problem. In our setup, the transformation corresponds to an encoded encryption in an \textit{approximate} gradient coding scheme, and the subsampling corresponds to the responses of the non-straggling workers; in a centralized coded computing network. We focus on the special case of the \textit{Subsampled Randomized Hadamard Transform}, which we generalize to block sampling; and discuss how it can be used to secure the data. We illustrate the performance through numerical experiments.
\end{abstract}

\section{Introduction and Preliminaries}
\label{intro}
\vspace{-1mm}

We propose a method to securely speed up linear regression by simultaneously leveraging random projections and distributed computations. Random projections are a classical way of performing dimensionality reduction, and are widely used in algorithmic and learning contexts \cite{Vem05,Woo14,DMMS11,DM16}. Distributed computations in the presence of stragglers have gained a lot of attention in the information theory community. Coding-theoretic approaches have been adopted for this \cite{LLPPR17,reisizadeh2017coded,li2016coded,li2017coding,LSR17,dutta2016short,ramamoorthy2019universally,YSRKSA18,RRG20,CPH20c,CPH20b,OUG20,OBGU20,CMH21}, and fall under the framework of \textit{coded computing} (CC). Data security is also an increasingly important issue in CC \cite{LA20}.

We focus on iterative sketching for steepest descent (SD) in the context of solving overdetermined linear systems. We propose to apply a random orthonormal projection before distributing the data, and then perform stochastic steepest descent (SSD) distributively on the transformed system. A special case of such a projection is the \textit{Subsampled Randomized Hadamard Transform} (SRHT) \cite{DMMS11}, which relates to the \textit{fast Johnson-Lindenstrauss transform} \cite{AC06,JL84}. The benefit of applying an orthonormal matrix transformation is that we rotate and/or reflect the orthonormal basis, which \textit{cannot} be reversed without knowledge of the transformation. This is leveraged to give security guarantees, while simultaneously ensuring that we recover well-approximated gradients, and guaranteeing convergence to the solution of the linear system.

We note that in CC, the workers are assumed to be heterogeneous and all are assumed to have the same expected response time. In the proposed method, we stop receiving computations once a fixed fraction of workers respond, which results in a different sketch at each iteration. A predominant task which has been studied in the CC framework is the gradient computation of differentiable and additively separable objective functions \cite{TLDK17,HASH17,OGU19,CMH20,YA18,RTTD17,CP18,CPE17,WCP19,BWE19,WLS19,KKR19,HYKM19,CHZP18,CPH20a}. These schemes are collectively called \textit{gradient coding} (GC). We note that iterative sketching has proven to be a powerful tool for second-order methods \cite{PW16,LLDP20}, though it has not been explored in first-order methods. Since we consider modified problems at each iteration, the method we propose is an \textit{approximate} GC scheme. Related approaches have been proposed in \cite{CP18,RTTD17,CPE17,WCP19,BWE19,WLS19,KKR19,HYKM19,CHZP18,CPH20a}. Two benefits of our approach are that we do not require a decoding step, and an encoding by the workers; at each iteration.

Another benefit of our proposed approach, is that random projections secure the information from potential eavesdroppers, honest but curious; and colluding workers. We show information theoretic security for the case where a random orthonormal projection is utilized in our sketching algorithm. Furthermore, the security of the SRHT, which is a crucial aspect, has not been extensively studied. Unfortunately, the SRHT is inherently insecure, which we show. We propose a modified projection which guarantees computational security.

There are related works to what we propose. The work of \cite{BP20} focuses on parameter averaging for variance reduction, but only mentions a security guarantee for the Gaussian sketch, derived in \cite{ZWL08}. Another line of work is that of \cite{KSD17,KSDY19}, which focuses on introducing redundancy through equiangular tight frames (ETFs), and partitioning the system into smaller linear systems, and then averaging the solutions of a fraction of them. A drawback is also the fact that some of these ETFs are over $\C$. The authors of \cite{SKD18} study privacy of random projections, though make the assumption that the projections meet the `$\varepsilon$-MI-DP constraint'. Lastly, a secure GC scheme is studied in \cite{YA19}, though this work does not utilize sketching.

The paper is organized as follows. In \ref{coded_LR} we review the framework and background for coded linear regression, and the $\ell_2$-subspace embedding property. In \ref{encr_CC} we present the proposed algorithm, and in \ref{block_SRHT_sec} the special case where the projection is the Hadamard transform, which we refer to as \textit{block-SRHT}. In \ref{security_sec} we present the security guarantee of our algorithm, and the modified version of the block-SRHT; which guarantees computational security. Finally, we present numerical experiments in \ref{exper_sec}; and concluding remarks in \ref{concl_sec}.
\vspace{-4mm}

\section{Coded Linear Regression}
\label{coded_LR}
\vspace{-1mm}

\subsection{Least Squares Approximation and Steepest Descent}
\label{LR_SD}

In linear least squares approximation \cite{DMMS11}, we approximate
\begin{equation}
\label{x_star_pr_lr}
  \xb_{ls}^{\star} = \arg\min_{\xb\in\R^d}\Big\{L_{ls}(\Ab,\bb;\xb)\coloneqq\|\Ab\xb-\bb\|_2^2\Big\}
\end{equation}
where $\Ab\in\R^{N\times d}$ and $\bb\in\R^N$. This corresponds to the regression coefficients $\xb$ of the model $\bb=\Ab\xb+\vec{\varepsilon}$, which is determined by the dataset $\D=\left\{(\ab_i,b_i)\right\}_{i=1}^N\subsetneq \R^d\times\R$ of $N$ samples, where $(\ab_i,b_i)$ represent the features and label of the $i^{th}$ sample, i.e. $\Ab=\big[\ab_1 \ \cdots \ \ab_N \big]^T$ and $\bb=\big[b_1 \ \cdots \ b_N \big]^T.$

We address the overdetermined case where $N\gg d$. Existing exact methods find a solution vector $\xb_{ls}^{\star}$ in $\ow(Nd^2)$ time, where $\xb_{ls}^{\star}=\Ab^{\dagger}\bb$. A common way to approximate $\xb_{ls}^{\star}$ is through SD, which iteratively updates the gradient
$$ g_{ls}^{[t]}\coloneqq\nabla_{\xb}L_{ls}(\Ab,\bb;\xb^{[t]}) = 2\Ab^T(\Ab\xb^{[t]}-\bb) $$
followed by updating the parameter vector: $\xb^{[t+1]}\gets\xb^{[t]}-\xi_t\cdot g_{ls}^{[t]}$. The step-size $\xi_t$ is determined by the central server. The exponent $[t]$ indexes the iteration $t=1,2,3,...$ which we drop when it is clear from the context.

\subsection{The Straggler Problem and Gradient Coding}
\label{str_problem}

Gradient coding is deployed in centralized computation networks, i.e. a central server communicates $\xb^{[t]}$ to $m$ workers; who perform computations and then communicate back their results. The central server distributes the dataset $\D$ among the $m$ workers, to facilitate the solution of optimization problems with additively separable and differentiable objective functions. For linear regression \eqref{x_star_pr_lr}, the data is partitioned as
\vspace{-1mm}
\begin{equation}
\label{part_data}
  \Ab=\Big[\Ab_1^T \ \cdots \ \Ab_K^T\Big]^T \quad \text{ and } \quad \bb=\Big[\bb_1^T \ \cdots \ \bb_K^T\Big]^T
\end{equation}
\vspace{-1mm}
where $\Ab_i\in\R^{\tau\times d}$ and $\bb_i\in\R^{\tau}$ for all $i$, and $\tau=N/K$. To simplify the presentation, we assume that $K|N$. Then
we have $L_{ls}(\Ab,\bb;\xb) = \sum_{i=1}^KL_{ls}(\Ab_i,\bb_i;\xb)$. A regularizer $\mu R(\xb)$ can also be added to $L_{ls}(\Ab,\bb;\xb)$ if desired.

We denote the row vectors of a matrix $\Mb$ by $\Mb_{(i)}$, and the column vectors by $\Mb^{(j)}$. Our embedding results are presented in terms of an arbitrary partition $\N_N=\sqcup_{\iota=1}^K\K_\iota$, for $\N_N\coloneqq\{1,\cdots,N\}$ the index set of $\Mb$'s rows. The notation $\Mb_{(\K_\iota)}$ denotes the submatrix of $\Mb$ comprised of the rows indexed by $\K_\iota$. That is: $\Mb_{(\K_\iota)}=\Ib_{(\K_\iota)}\cdot\Mb$, for $\Ib_{(\K_\iota)}$ the corresponding submatrix of $\Ib_N$. We call $\Mb_{(\K_\iota)}$ the `$\iota^{th}$ block of $\Mb$'.

In GC \cite{TLDK17}, the servers encode their computations $g_i\coloneqq\nabla_{\xb}L_{ls}(\Ab_i,\bb_i;\xb)$; which are then communicated to the central server. We refer to $g_i$'s as \textit{partial gradients}. Once a certain fraction of encoded partial gradients is received, the central server applies a decoding step to recover the gradient $g=\nabla_{\xb}L_{ls}(\Ab_i,\bb_i;\xb)=\sum_{i=1}^K g_i$. This can be computationally prohibitive, and is carried out at every iteration. To the best of our knowledge, the lowest decoding complexity is $\ow\left((s+1)\cdot\ceil{\frac{m}{s+1}}\right)$; where $s$ is the number of stragglers \cite{CMH20}.

In our proposed approach we trade time; by not requiring a decoding step, with accuracy of approximating $\xb_{ls}^{\star}$. Unlike conventional GC schemes, in this paper the workers carry out the computation on the encoded data. The resulting approximation is the solution to the modified least squares problem
\begin{equation}
\label{x_til_pr_lr}
  \xbh_{ls} = \arg\min_{\xb\in\R^d}\Big\{L_{\Sb}(\Ab,\bb;\xb)\coloneqq\|\Sb(\Ab\xb-\bb)\|_2^2\Big\}\
\end{equation}
for $\Sb\in\R^{r\times N}$ a sketching matrix, with $r<N$. This is the core idea in our approximation, where we incorporate iterative sketching with orthonormal matrices; and generalizations of the SRHT for $\Sb$, for our GC approach. The projection, is also what provides security against the workers and eavesdroppers.

\subsection{The $\ell_2$-Subspace Embedding Property}
\label{embedding_subs}

For the analysis of the sketching matrices $\Sbp$ we propose, we consider any orthonormal basis $\Ub\in\R^{N\times d}$ of the column-space of $\Ab$, i.e. $\text{im}(\Ab)=\text{im}(\Ub)$.

Recall that the \textit{$\ell_2$-subspace embedding property} \cite{Woo14} states that any $\yb\in\text{im}(\Ub)$ satisfies:
\begin{align}
\label{subsp_emb_id}
\|\Sbp\yb\|_2\leqslant_\epsilon\|\yb\|_2  \ \iff \ \|\Ib_d-(\Sbp\Ub)^T(\Sbp\Ub)\|_2\leqslant \epsilon
\end{align}
for $\epsilon>0$.\footnote{$\|\vec{a}\|\leqslant_\epsilon\|\vec{b}\| \quad\iff\quad (1-\epsilon)\cdot\|\vec{b}\|\leqslant\|\vec{a}\|\leqslant(1+\epsilon)\cdot\|\vec{b}\|$} In turn, this characterizes the approximation's error of the solution $\xbh_{ls}$ of \eqref{x_til_pr_lr} for $\Sb\gets\Sbp$, as
$$ \|\Ab\xbh_{ls}-\bb\|_2 \leqslant \frac{1+\epsilon}{1-\epsilon}\|\Ab\xb_{ls}^{\star}-\bb\|_2 \leqslant (1+\ow(\epsilon))\|\Ab\xb_{ls}^{\star}-\bb\|_2 $$
and $\|\Ab(\xb_{ls}^{\star}-\xbh_{ls})\|_2\leqslant\epsilon\|(\Ib_N-\Ub\Ub^T)\bb\|_2$.

\section{Block Subsampled Orthonormal Sketches}
\label{encr_CC}

Sampling blocks for sketching least squares has not been explored as extensively as sampling rows, though there has been interest in using ``block-iterative methods'' for solving systems of linear equations \cite{Elf80,Gut06,NT14,RN20}. Our interest in sampling blocks, is to invoke results and techniques from \textit{randomized numerical linear algebra} (RandNLA) to CC. Specifically, we apply the transformation before partitioning the system and sharing it between the workers, who will compute the respective partial gradients. Then, the slowest $s$ workers will be considered as stragglers and disregarded. The proposed sketching matrices are summarised in Algorithm \ref{alg_orthog_sketch}.
\vspace{-2mm}

\begin{algorithm}[h]
\label{alg_orthog_sketch}
\SetAlgoLined
{\small
  \KwIn{$\Ab\in\R^{N\times d}$, $\tau=\frac{N}{K}$, and $q=\frac{r}{\tau}>\frac{d}{\tau}$}
  \KwOut{$\Sbp\in\R^{r\times N}$, and the sketch $\Abph\in\R^{r\times d}$}
  \textbf{Initialize:} $\Omb_{\text{part}}=\bold{0}_{q\times K}$\\
  \textbf{Construct:} $\Pib\in\R^{N\times N}$ {\footnotesize a random orthonormal matrix}\\
  \For{$i=1$ to $q$}
    {
      uniformly sample with replacement $j_i$ from $\N_K$\\
      $(\Omb_{\text{part}})_{i,j_i}=\sqrt{N/r}=\sqrt{K/q}$
    }
  $\Ombp\gets\Omb_{\text{part}}\otimes\Ib_\tau$\\
  $\Sbp\gets\Ombp\cdot\Pib$\\
  $\Abph\gets\Sbp\cdot\Ab$\\
}
\caption{Subsampled Orthonormal Sketches}
\end{algorithm}
\vspace{-2mm}

To construct $\Abph$, we first transform the orthonormal basis $\Ub$ by applying $\Pib$ to $\Ab$. Then, we subsample $q$ many blocks from $\Pib\Ab$, to reduce the dimension. Finally, we normalize by $\sqrt{N/r}$ in order to reduce the variance of the estimator $\Abph$. Analogous steps are carried out on $\Pib\bb$, to construct $\bbh$.

\subsection{Distributed Steepest Descent and Iterative Sketching}
\label{distr_grad_desc}

We now discuss the workers' computational tasks, in the case where SD is carried out distributively. The encoding corresponds to $\Abt=\Gbp\cdot\Ab$ and $\bbt=\Gbp\cdot\bb$ for $\Gbp\coloneqq\sqrt{\frac{N}{r}}\cdot\Pib$, which are then partitioned into $K$ blocks $(\Abt_i,\bbt_i)$; similar to \eqref{part_data}, and distributed to the workers. Specifically, $\Abt_i=\Ib_{(\K_i)}\cdot(\Gbp\Ab)$ and $\bbt_i=\Ib_{(\K_i)}\cdot(\Gbp\bb)$. This differs to most GC schemes, in that the encoding is usually done locally by the workers on the computed results, at each iteration.

If each worker respectively computes $\nabla_{\xb}L_{ls}(\Abt_i,\bbt_i;\xb^{[t]})=2\Abt_i^T(\Abt_i^T\xb^{[t]}-\bbt_i)$ at iteration $t$, and the index multiset of the first $q$ responsive workers is $\Scal^{[t]}$, the aggregated gradient
\vspace{-1mm}
\begin{equation}
\label{gr_update}
  \gh^{[t]} = 2\cdot\sum\limits_{j\in\Scal^{[t]}}\Abt_j^T\left(\Abt_j\xb^{[t]}-\bbt_j\right)
\end{equation}
is equal to the gradient of $L_{\Sb}$ for $\Sb\gets\Sbp^{[t]}$ the induced sketching matrix at each iteration, i.e. $\gh^{[t]}=\nabla_{\xb}L_{\Sbp^{[t]}}(\Ab,\bb;\xb^{[t]})$. The sampling matrix $\Ombp^{[t]}$ and index set $\Scal^{[t]}$, correspond to the $q$ responsive workers. The number of stragglers we mitigate is therefore $s=m-q$. In this way, distributed SD is performed on the modified least squares problem \eqref{x_til_pr_lr}. The number of responsive workers $q$ in the CC framework is determined by the \textit{mother runtime distribution} $F(T)$, where $T$ denotes time \cite{LLPPR17}. Considering homogeneous workers, for a specified stopping time $\tilde{T}$, we have $q=m\cdot F(\tilde{T}\tau/N)$.

In Algorithm \ref{alg_orthog_sketch} and Theorems \ref{subsp_emb_thm_Unif} and \ref{subsp_emb_thm}, we assume sampling uniformly with replacement. In what we just described, we used one replica of each block, thus $m=K$. To compensate for this, more than one replicas of each bock could be distributed. This is not a major concern with uniform sampling, as the probability that the $i^{th}$ block would be sampled more than once is $(q-1)/K^2$, which is negligible for large $K$. Furthermore, we sample \textit{uniformly} at random in Algorithm \ref{alg_orthog_sketch}, as the application of $\Pib$ flattens the \textit{block-leverage scores} \cite{OJXE18,CPH20a}, i.e. they are all approximately equal. That is, for $\Vbt=\Pib\Ub$: $\tilde{\ell}_\iota\coloneqq\|\Vbt_{(\K_\iota)}\|_F^2\approx\frac{d}{K}$ for all $\iota\in\N_K$.

\vspace{-1mm}
\begin{Lemma}
\label{lemma_exp}
At any iteration $t$ of the proposed scheme, with no replications of the blocks across the network, the resulting sketching matrix $\Sb_{[t]}$ satisfies $\E\left[\Sb_{[t]}^T\Sb_{[t]}\right]=\Ib_N$.
\end{Lemma}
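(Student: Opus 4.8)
The plan is to exploit the factorization $\Sb_{[t]}=\Ombp^{[t]}\Pib$ together with the orthonormality of $\Pib$, reducing the claim to a second-moment computation for the block-sampling matrix $\Ombp^{[t]}$. First I would write
$$\Sb_{[t]}^T\Sb_{[t]}=\Pib^T\big(\Ombp^{[t]}\big)^T\Ombp^{[t]}\,\Pib,$$
and, since the random selection producing $\Ombp^{[t]}$ is independent of $\Pib$ while $\Pib^T\Pib=\Ib_N$ holds for every realization of the orthonormal $\Pib$, pull $\Pib$ outside the expectation to obtain $\E\big[\Sb_{[t]}^T\Sb_{[t]}\big]=\Pib^T\,\E\big[(\Ombp^{[t]})^T\Ombp^{[t]}\big]\,\Pib$. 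It therefore suffices to prove $\E\big[(\Ombp^{[t]})^T\Ombp^{[t]}\big]=\Ib_N$, after which the result is seen to be independent of which $\Pib$ was drawn.

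Next I would exploit the Kronecker structure $\Ombp^{[t]}=\Omb_{\text{part}}^{[t]}\otimes\Ib_\tau$ from Algorithm \ref{alg_orthog_sketch}. By the mixed-product property of the Kronecker product,
$$\big(\Ombp^{[t]}\big)^T\Ombp^{[t]}=\Big(\big(\Omb_{\text{part}}^{[t]}\big)^T\Omb_{\text{part}}^{[t]}\Big)\otimes\Ib_\tau,$$
so the problem further reduces to showing $\E\big[(\Omb_{\text{part}}^{[t]})^T\Omb_{\text{part}}^{[t]}\big]=\Ib_K$; the conclusion then follows from $\Ib_K\otimes\Ib_\tau=\Ib_N$, since $N=K\tau$.

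Finally I would compute the entries of the $K\times K$ matrix $(\Omb_{\text{part}}^{[t]})^T\Omb_{\text{part}}^{[t]}$. Each of its $q$ rows is of the form $\sqrt{K/q}\,\eb_{j_i}^T$ for the sampled block index $j_i\in\N_K$, so $(\Omb_{\text{part}}^{[t]})^T\Omb_{\text{part}}^{[t]}=\tfrac{K}{q}\sum_i\eb_{j_i}\eb_{j_i}^T$ is diagonal and every off-diagonal entry vanishes identically, because a single row cannot place mass in two distinct columns. The $(a,a)$ entry equals $\tfrac{K}{q}$ times the number of times block $a$ is selected; with no replications across the network each responsive block is counted at most once, and since the blocks are sampled uniformly, block $a$ is selected with probability $q/K$, giving a diagonal expectation of $\tfrac{K}{q}\cdot\tfrac{q}{K}=1$. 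Hence $\E\big[(\Omb_{\text{part}}^{[t]})^T\Omb_{\text{part}}^{[t]}\big]=\Ib_K$, which chains back through the two reductions to yield $\E\big[\Sb_{[t]}^T\Sb_{[t]}\big]=\Pib^T\Ib_N\Pib=\Ib_N$.

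I expect the only delicate point to be the bookkeeping of this sampling second moment: one must verify that the off-diagonal terms cancel exactly and that the normalization $\sqrt{N/r}=\sqrt{K/q}$ is precisely what offsets the $q/K$ inclusion probability on the diagonal, so that the ``no replications'' hypothesis (each block contributing at most its single scaled rank-one term) is exactly what makes the expectation equal to the identity rather than merely proportional to it. Everything else is routine linear-algebraic rearrangement.
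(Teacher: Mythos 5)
Your proof is correct, and it rests on exactly the same cancellation as the paper's: the rescaling $K/q$ offsets the probability $q/K$ that any given block is among the $q$ responsive ones, and orthonormality of $\Pib$ then collapses the sum of block Gram matrices to $\Ib_N$. The organization differs, however. The paper computes $\E[\Sb_{[t]}^T\Sb_{[t]}]$ in a single chain by enumerating all $\binom{K}{q}$ possible index sets $\Scal^{[t]}$ and counting, via $\binom{K-1}{q-1}$, how many of them contain a given block; the ratio $\binom{K-1}{q-1}/\binom{K}{q}=q/K$ is precisely the inclusion probability you invoke directly. You instead factor $\Sb_{[t]}=\Ombp^{[t]}\Pib$, reduce to showing $\E[(\Ombp^{[t]})^T\Ombp^{[t]}]=\Ib_N$, and use the Kronecker structure $\Ombp^{[t]}=\Omb_{\text{part}}^{[t]}\otimes\Ib_\tau$ to reduce further to a $K\times K$ second-moment computation. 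This modularization buys two things: it makes transparent that the claim depends on $\Pib$ only through $\Pib^T\Pib=\Ib_N$ (so it holds conditionally on any realization of $\Pib$), and it isolates the probabilistic content as a one-line calculation on the sampling matrix, with the off-diagonal vanishing holding deterministically. One small point of rigor: ``pulling $\Pib$ outside the expectation'' should formally be a conditioning step --- conditionally on $\Pib$, the inner expectation is $\Ib_N$, hence so is the unconditional one by the tower property; as written, the expression $\Pib^T\,\E[(\Ombp^{[t]})^T\Ombp^{[t]}]\,\Pib$ is only literally the unconditional expectation when $\Pib$ is treated as fixed, which is in fact how the paper treats it.
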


By Lemma \ref{lemma_exp}, the Gram matrix of $\Sb_{[t]}$ in expectation satisfies the subspace embedding identity \eqref{subsp_emb_id} with $\epsilon=0$, as $\E\left[\Ub^T\cdot(\Sb_{[t]}^T\Sb_{[t]})\cdot\Ub\right]=\Ub^T\E\left[\Sb_{[t]}^T\Sb_{[t]}\right]\Ub=\Ub^T\Ub=\Ib_d$.

\vspace{-1mm}
\begin{Thm}
\label{GC_SGD_thm}
The proposed GC scheme results in a stochastic steepest descent procedure for
\begin{equation}
\label{mod_pr_Gp}
  \xbh = \arg\min_{\xb\in\R^d}\Big\{L_{\Gbp}(\Ab,\bb;\xb)\coloneqq L_{ls}(\Gbp\Ab,\Gbp\bb;\xb)\Big\} \ .
\end{equation}
Moreover $\E\left[\gh^{[t]}\right]=\frac{q}{K}\cdot g_{ls}^{[t]}$.
\end{Thm}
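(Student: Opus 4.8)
The statement splits into a structural claim (that the scheme is SSD) and a moment claim (the expectation of $\gh^{[t]}$), which I would prove in that order. For the SSD claim, the plan is to exhibit $\gh^{[t]}$ as a mini-batch stochastic gradient of the finite-sum form of \eqref{mod_pr_Gp}. Writing $\Abt=\Gbp\Ab$ and $\bbt=\Gbp\bb$ in the block partition of \eqref{part_data}, the full gradient of $L_{\Gbp}$ splits over the $K$ blocks as
\[ \nabla_\xb L_{\Gbp}(\Ab,\bb;\xb)=2(\Gbp\Ab)^T(\Gbp\Ab\xb-\Gbp\bb)=2\sum_{i=1}^K\Abt_i^T(\Abt_i\xb-\bbt_i). \]
Comparing with \eqref{gr_update}, $\gh^{[t]}$ is precisely the partial sum of these same per-block terms over the responsive multiset $\Scal^{[t]}$. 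Since $\Scal^{[t]}$ is a size-$q$ uniform subsample of $\N_K$ that is redrawn at every iteration, $\gh^{[t]}$ is a random mini-batch estimate of this full gradient, so $\xb^{[t+1]}\gets\xb^{[t]}-\xi_t\gh^{[t]}$ is by definition an SSD iteration for \eqref{mod_pr_Gp}.

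For the moment claim I would compute $\E[\gh^{[t]}]$ from two ingredients. The first is orthonormal invariance of $\Pib$: because $\Pib^T\Pib=\Ib_N$, summing the per-block partial gradients over all $K$ blocks telescopes back to the original least-squares gradient,
\[ \sum_{i=1}^K 2(\Pib\Ab)_{(\K_i)}^T\big((\Pib\Ab)_{(\K_i)}\xb^{[t]}-(\Pib\bb)_{(\K_i)}\big)=2\Ab^T\Pib^T\Pib(\Ab\xb^{[t]}-\bb)=g_{ls}^{[t]}. \]
The second is the sampling model of Algorithm \ref{alg_orthog_sketch}: each of the $q$ responsive workers selects its block uniformly at random (with replacement) from $\N_K$, so the expectation of any single worker's contribution equals $\tfrac1K$ of the above sum, namely $\tfrac1K g_{ls}^{[t]}$. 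By linearity of expectation across the $q$ i.i.d. draws, $\E[\gh^{[t]}]=q\cdot\tfrac1K g_{ls}^{[t]}=\tfrac qK g_{ls}^{[t]}$.

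The step I expect to require the most care is the bookkeeping of the $\sqrt{N/r}$ normalization folded into $\Gbp$: one must track whether this scalar sits inside each worker's partial gradient or is carried by the step-size and aggregation, since it is exactly what pins down the final constant. Concretely, the $N/r$ factor is what makes the normalized sketch satisfy $\E[\Sb_{[t]}^T\Sb_{[t]}]=\Ib_N$ via Lemma \ref{lemma_exp}, whereas the per-block contributions entering $\gh^{[t]}$ must be taken in their unnormalized form for the sum over all $K$ blocks to collapse to $g_{ls}^{[t]}$ rather than to a rescaling of it. The conceptual crux is the orthonormal-invariance identity $\Pib^T\Pib=\Ib_N$, which is what lets a sum of locally computed, sketched partial gradients be an unbiased estimate of the true gradient up to the explicit $q/K$ factor, thereby tying the distributed scheme back to the original problem \eqref{x_star_pr_lr}.
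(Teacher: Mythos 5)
Your overall route is the same as the paper's: the structural SSD claim via the block decomposition of $\nabla_{\xb}L_{\Gbp}$, and the moment claim from the two ingredients of orthonormal collapse ($\Pib^T\Pib=\Ib_N$) and the expectation of a uniform size-$q$ sample. At that level the proposal and the paper's proof agree.

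The gap is in the moment claim, and it is exactly the normalization issue you flagged --- but your resolution of it is a redefinition, not a derivation. By \eqref{gr_update} together with $\Abt_j=\Ib_{(\K_j)}(\Gbp\Ab)$, $\bbt_j=\Ib_{(\K_j)}(\Gbp\bb)$ and $\Gbp=\sqrt{N/r}\cdot\Pib$, each responsive worker's term already carries the factor $N/r=K/q$:
\[
2\Abt_j^T\big(\Abt_j\xb^{[t]}-\bbt_j\big)=\frac{K}{q}\cdot 2(\Pib\Ab)_{(\K_j)}^T\big((\Pib\Ab)_{(\K_j)}\xb^{[t]}-(\Pib\bb)_{(\K_j)}\big)\ .
\]
Running your own two ingredients on this form gives $\E\big[\gh^{[t]}\big]=q\cdot\frac{1}{K}\cdot\frac{K}{q}\cdot g_{ls}^{[t]}=g_{ls}^{[t]}$, i.e.\ the constant is $1$, not $q/K$; this is also forced by Lemma \ref{lemma_exp}, since $\gh^{[t]}=2\Ab^T\Sb_{[t]}^T\Sb_{[t]}(\Ab\xb^{[t]}-\bb)$ yields $\E\big[\gh^{[t]}\big]=2\Ab^T\E\big[\Sb_{[t]}^T\Sb_{[t]}\big](\Ab\xb^{[t]}-\bb)=g_{ls}^{[t]}$. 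The factor $q/K$ emerges only if the workers' contributions are summed \emph{without} the $\sqrt{K/q}$ rescaling, i.e.\ for a different estimator than the $\gh^{[t]}$ of \eqref{gr_update}; asserting that the contributions ``must be taken in their unnormalized form'' therefore changes what is being averaged rather than proving the stated identity. You should know the paper's own proof does not close this gap either: its first computation keeps the $(\sqrt{K/q})^2$ factor and correctly concludes $\E[\gh^{[t]}]=\gt^{[t]}=\sum_{i=1}^K\gt_i^{[t]}$ (the full rotated gradient, which equals $g_{ls}^{[t]}$ by orthonormality), and then takes an expectation of $\gt^{[t]}$ over the \emph{same} sampling randomness a second time to produce the $q/K$ --- an invalid double-counting. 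So your instinct about where the difficulty sits is exactly right, and your write-up is more transparent about it than the paper's; but neither your bookkeeping choice nor the paper's two-stage expectation validly establishes $\E[\gh^{[t]}]=\frac{q}{K}\cdot g_{ls}^{[t]}$ for the scheme as actually defined: under the paper's definitions the correct constant is $1$, and the $q/K$ version holds only if the $\sqrt{N/r}$ normalization is removed from $\Gbp$.
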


Note that $\E\left[\gh^{[t]}\right]=\frac{q}{K}\cdot g_{ls}^{[t]}$ means the estimate $\gh^{[t]}$ is unbiased after an appropriate rescaling. This rescaling could be incorporated in the step-size $\xi_t$. The subsampling which takes place; as a consequence of considering the $q$ fastest responses, is the reason the distributive procedure results in a SSD approach for the modified problem.

\vspace{-1mm}
\begin{Lemma}
\label{eq_opt_sols}  
The optimal solution of the modified least squares problem on $L_{\Gbp}$, is equal to the optimal solution $\xb_{ls}^{\star}$ of \eqref{x_star_pr_lr}.
\end{Lemma}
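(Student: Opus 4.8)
The plan is to exploit the fact that, up to a positive multiplicative constant, $\Gbp$ is an orthonormal matrix, so that pre-multiplying the residual $\Ab\xb-\bb$ by $\Gbp$ can only rescale the objective uniformly in $\xb$ and therefore cannot move its minimizer. Concretely, I would first unfold the definition in \eqref{mod_pr_Gp}, writing $L_{\Gbp}(\Ab,\bb;\xb)=L_{ls}(\Gbp\Ab,\Gbp\bb;\xb)=\|\Gbp\Ab\xb-\Gbp\bb\|_2^2=\|\Gbp(\Ab\xb-\bb)\|_2^2$, which reduces the entire claim to understanding how $\Gbp$ acts on an arbitrary vector norm.

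The single computational step is to evaluate the Gram matrix of $\Gbp$. Since $\Gbp=\sqrt{N/r}\cdot\Pib$ with $\Pib$ orthonormal (so $\Pib^T\Pib=\Ib_N$), I would compute $\Gbp^T\Gbp=\tfrac{N}{r}\,\Pib^T\Pib=\tfrac{N}{r}\,\Ib_N$. Substituting this into the quadratic form gives, for every $\xb\in\R^d$,
\begin{equation}
\label{proportional_obj}
  L_{\Gbp}(\Ab,\bb;\xb)=(\Ab\xb-\bb)^T\Gbp^T\Gbp(\Ab\xb-\bb)=\frac{N}{r}\,\|\Ab\xb-\bb\|_2^2=\frac{N}{r}\,L_{ls}(\Ab,\bb;\xb).
\end{equation}

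From \eqref{proportional_obj} the two objectives are identical up to the fixed positive factor $N/r>0$, and scaling an objective by a positive constant leaves its set of minimizers unchanged. Hence $\xbh=\arg\min_{\xb}L_{\Gbp}(\Ab,\bb;\xb)=\arg\min_{\xb}L_{ls}(\Ab,\bb;\xb)=\xb_{ls}^{\star}$, which is the assertion. I do not anticipate a genuine obstacle here: the only points requiring care are invoking orthonormality in the exact form $\Pib^T\Pib=\Ib_N$ (rather than merely $\Pib\Pib^T=\Ib_N$, though for a square $\Pib$ these coincide), and noting that the equality of the argmins holds as sets, so that in the overdetermined full-column-rank case it specializes to equality of the unique minimizers $\xbh=\xb_{ls}^{\star}=\Ab^{\dagger}\bb$.
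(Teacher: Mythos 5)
Your proof is correct and follows essentially the same route as the paper's: both reduce $L_{\Gbp}$ to $(N/r)\cdot L_{ls}$ via the orthonormality of $\Pib$ (you do it explicitly through the Gram matrix $\Gbp^T\Gbp=\tfrac{N}{r}\Ib_N$, the paper through norm invariance $\|\Pib\yb\|_2=\|\yb\|_2$) and then observe that a positive scalar factor does not move the argmin. No gaps; your added remark about equality of argmins as sets versus unique minimizers is a harmless refinement the paper leaves implicit.
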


\begin{Cor}
\label{eq_SSD_dor}  
Consider the problems \eqref{x_star_pr_lr} and \eqref{mod_pr_Gp}, which are respectively solved through SD and our iterative sketching scheme. Assume that the two approaches have the same starting point $\xb^{[0]}$ and index set $\Scal^{[t]}$ at each $t$; and $\xit_t=\frac{K}{q}\cdot\xi_t$ the step-sizes used for our scheme. Then, in expectation, our scheme has the same update at each step $t$ as SD at the corresponding update, i.e $\E\left[\xbh^{[t]}\right]=\xb^{[t]}$.
\end{Cor}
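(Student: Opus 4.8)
The plan is to argue by induction on the iteration index $t$, establishing $\E[\xbh^{[t]}]=\xb^{[t]}$ for every $t$, where the expectation is over the random block-samplings $\Scal^{[1]},\ldots,\Scal^{[t-1]}$ drawn so far. The two ingredients I would lean on are the unbiasedness relation $\E[\gh^{[t]}]=\frac{q}{K}g_{ls}^{[t]}$ supplied by Theorem \ref{GC_SGD_thm}, and the fact that the least squares gradient $\nabla_{\xb}L_{ls}(\Ab,\bb;\xb)=2\Ab^T\Ab\xb-2\Ab^T\bb$ is affine in $\xb$; the latter is precisely what lets me interchange expectation with the gradient map.

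For the base case $t=0$, the two procedures share the deterministic starting point $\xb^{[0]}$, so $\xbh^{[0]}=\xb^{[0]}$ and the claim holds trivially. For the inductive step I would write the two update rules side by side: SD uses $\xb^{[t+1]}=\xb^{[t]}-\xi_t\, g_{ls}^{[t]}$, whereas our scheme uses $\xbh^{[t+1]}=\xbh^{[t]}-\xit_t\,\gh^{[t]}=\xbh^{[t]}-\frac{K}{q}\xi_t\,\gh^{[t]}$ by the stated choice $\xit_t=\frac{K}{q}\xi_t$, and then take expectations of the second line.

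The crux is evaluating $\E[\gh^{[t]}]$, and here I must be careful that $\gh^{[t]}$ depends both on the fresh sampling $\Scal^{[t]}$ and, through $\xbh^{[t]}$, on the entire history. I would condition on the $\sigma$-algebra $\mathcal{F}_t$ generated by the samplings of the first $t-1$ iterations, which fixes $\xbh^{[t]}$, and use the tower property. Since $\Scal^{[t]}$ is drawn independently of the past, the per-step unbiasedness of Theorem \ref{GC_SGD_thm} gives $\E[\gh^{[t]}\mid\mathcal{F}_t]=\frac{q}{K}\nabla_{\xb}L_{ls}(\Ab,\bb;\xbh^{[t]})$; because this gradient is affine in $\xbh^{[t]}$, taking the outer expectation and invoking the inductive hypothesis $\E[\xbh^{[t]}]=\xb^{[t]}$ yields $\E[\gh^{[t]}]=\frac{q}{K}(2\Ab^T\Ab\,\xb^{[t]}-2\Ab^T\bb)=\frac{q}{K}g_{ls}^{[t]}$. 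Substituting back, the factor $\frac{K}{q}$ hidden in $\xit_t$ cancels the $\frac{q}{K}$, so $\E[\xbh^{[t+1]}]=\xb^{[t]}-\xi_t\,g_{ls}^{[t]}=\xb^{[t+1]}$, closing the induction.

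The main obstacle I anticipate is exactly this interchange of expectation and gradient inside the inductive step: it succeeds only because $L_{ls}$ is quadratic, so that $\E[\nabla_{\xb}L_{ls}(\Ab,\bb;\xbh^{[t]})]=\nabla_{\xb}L_{ls}(\Ab,\bb;\E[\xbh^{[t]}])$; for a non-quadratic objective this identity would fail and the exact relation $\E[\xbh^{[t]}]=\xb^{[t]}$ would degrade into an approximation. A secondary point I would state carefully is the independence of the samplings across iterations, which is what permits the conditional expectation to collapse to the single-step result of Theorem \ref{GC_SGD_thm}.
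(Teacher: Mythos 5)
Your proposal is correct and takes essentially the same route as the paper's own proof: induction on the iteration index, with the base case given by the common starting point, the inductive step driven by the unbiasedness $\E\big[\gh^{[t]}\big]=\frac{q}{K}\cdot g_{ls}^{[t]}$ from Theorem \ref{GC_SGD_thm}, and the factor $\frac{K}{q}$ in $\xit_t$ cancelling the $\frac{q}{K}$. Your explicit conditioning on the history via the tower property, together with the observation that the quadratic objective makes the gradient affine in $\xb$ (so expectation and gradient commute), is a more careful rendering of a step the paper's proof performs implicitly when it substitutes $\E\big[\gh^{[\tau]}\big]$ by $\frac{q}{K}\cdot g_{ls}^{[\tau]}$ evaluated at the deterministic SD iterate.
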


By Lemma \ref{eq_opt_sols} and Corollary \ref{eq_SSD_dor}, our iterative sketching scheme approaches the optimal solution of the original problem \eqref{x_star_pr_lr}, by solving the modified regression problem \eqref{mod_pr_Gp}.

\begin{Prop}
\label{contr_rate_thm}
The proposed procedure with a fixed step-size of $\xi$ has a contraction rate of $\gamma_t=\lambda_1(\Bb_{t-1})$ on the error term $\xb^{[t]}-\xb_{ls}^{\star}$, for $\Bb_{t-1}=\big(\Ib_d-2\xi\cdot(\Sbp^{[t-1]}\Ab)^T(\Sbp^{[t-1]}\Ab)\big)$.
\end{Prop}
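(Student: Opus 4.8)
The plan is to track the one-step evolution of the error vector $\eb^{[t]}\coloneqq\xb^{[t]}-\xb_{ls}^{\star}$ and read the contraction operator directly off the recursion. First I would make the fixed-step update explicit. As established above, $\gh^{[t-1]}=\nabla_{\xb}L_{\Sbp^{[t-1]}}(\Ab,\bb;\xb^{[t-1]})=2(\Sbp^{[t-1]}\Ab)^T\Sbp^{[t-1]}(\Ab\xb^{[t-1]}-\bb)$, so subtracting $\xb_{ls}^{\star}$ from $\xb^{[t]}=\xb^{[t-1]}-\xi\cdot\gh^{[t-1]}$ gives
\begin{equation*}
\eb^{[t]}=\eb^{[t-1]}-2\xi(\Sbp^{[t-1]}\Ab)^T\Sbp^{[t-1]}(\Ab\xb^{[t-1]}-\bb).
\end{equation*}

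The central algebraic step is to expand the current residual around the optimum, $\Ab\xb^{[t-1]}-\bb=\Ab\eb^{[t-1]}+(\Ab\xb_{ls}^{\star}-\bb)$. Substituting and collecting the coefficient of $\eb^{[t-1]}$ reproduces exactly $\Bb_{t-1}=\Ib_d-2\xi(\Sbp^{[t-1]}\Ab)^T(\Sbp^{[t-1]}\Ab)$ and isolates a residual term:
\begin{equation*}
\eb^{[t]}=\Bb_{t-1}\eb^{[t-1]}-2\xi\,\Ab^T(\Sbp^{[t-1]})^T\Sbp^{[t-1]}(\Ab\xb_{ls}^{\star}-\bb).
\end{equation*}

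Next I would dispose of the residual term, the sole obstruction to a clean contraction. By the normal equations for \eqref{x_star_pr_lr}, $\Ab^T(\Ab\xb_{ls}^{\star}-\bb)=0$, and by Lemma \ref{lemma_exp}, $\E[(\Sbp^{[t-1]})^T\Sbp^{[t-1]}]=\Ib_N$; hence the residual term has zero expectation over the sketch, and it vanishes identically whenever the system is consistent ($\Ab\xb_{ls}^{\star}=\bb$). In the general case it contributes only a floor governed by the optimal residual $\|(\Ib_N-\Ub\Ub^T)\bb\|_2$, controlled through the subspace-embedding bound of \ref{embedding_subs}. Conditioning on the sketch realized at step $t-1$ fixes $\Bb_{t-1}$, so the homogeneous part satisfies $\|\Bb_{t-1}\eb^{[t-1]}\|_2\leq\|\Bb_{t-1}\|_2\,\|\eb^{[t-1]}\|_2$; since $\Bb_{t-1}$ is symmetric and, for any step-size with $2\xi\lambda_1\big((\Sbp^{[t-1]}\Ab)^T(\Sbp^{[t-1]}\Ab)\big)\leq 1$, positive semidefinite, its spectral norm coincides with its largest eigenvalue. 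This yields the claimed per-iteration rate $\gamma_t=\lambda_1(\Bb_{t-1})$.

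The main obstacle is the rigorous treatment of the residual term: because $\Bb_{t-1}$ and the residual are driven by the same sketch $\Sbp^{[t-1]}$, one cannot both condition on the sketch and average the residual away in a single stroke. I would therefore present $\gamma_t$ honestly as a sketch-dependent (random) rate governing the homogeneous part of the recursion, with the residual entering only as the approximation floor, exactly zero for consistent systems and otherwise bounded as above, consistent with the error characterization of $\xbh_{ls}$ recorded in \ref{embedding_subs}. A secondary point to fix is the eigenvalue convention: writing $\lambda_1$ for the top eigenvalue of the symmetric $\Bb_{t-1}$, the identity $\|\Bb_{t-1}\|_2=\lambda_1(\Bb_{t-1})$ needs $\Bb_{t-1}\succeq 0$, which pins down the admissible range of $\xi$.
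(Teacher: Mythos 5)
Your proposal follows the same route as the paper's proof --- write the one-step recursion for the error $e_t=\xb^{[t]}-\xb_{ls}^{\star}$, expand the sketched gradient around the optimum, and read off $\Bb_{t-1}$ as the coefficient of the previous error --- but you are more careful precisely where the paper's own derivation is loose. The paper reaches the exact identity $e_{t+1}=\Bb_t\,e_t$ by replacing $\bb$ with $\Ab\xb_{ls}^{\star}$ inside the term $\Ab^T\Sb_{[t]}^T\Sb_{[t]}\bb$; that substitution is valid only when $\Ab^T\Sb_{[t]}^T\Sb_{[t]}(\bb-\Ab\xb_{ls}^{\star})=0$, which holds without sketching (normal equations) or for a consistent system, but fails in general because $\Sb_{[t]}^T\Sb_{[t]}$ destroys the orthogonality of the residual to $\mathrm{im}(\Ab)$. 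Your decomposition exposes exactly this discarded term, $-2\xi\,\Ab^T(\Sbp^{[t-1]})^T\Sbp^{[t-1]}(\Ab\xb_{ls}^{\star}-\bb)$, and your treatment of it --- zero in expectation by Lemma \ref{lemma_exp} combined with the normal equations, identically zero for consistent systems, and otherwise an additive floor governed by the optimal residual --- is the rigorous reading: the claimed rate $\gamma_t=\lambda_1(\Bb_{t-1})$ properly governs only the homogeneous part of the recursion, as you present it. Your second caveat is also well taken: bounding $\|\Bb_{t-1}e_{t-1}\|_2$ by $\lambda_1(\Bb_{t-1})\|e_{t-1}\|_2$ identifies the spectral norm of the symmetric matrix $\Bb_{t-1}$ with its largest eigenvalue, which requires $\Bb_{t-1}\succeq0$ (i.e., a restriction on $\xi$); the paper makes the same identification silently. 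In short, your proof is correct, uses the same decomposition, and in fact repairs two gaps in the paper's own argument.
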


\begin{Thm}
\label{subsp_emb_thm_Unif}
Fix $\epsilon>0$ such that $\epsilon\ll1/N$. Then, the sketching matrix $\Sbp$ of Algorithm \ref{alg_orthog_sketch} is a $(1\rpm\epsilon)$-embedding of $\Ab$, according to \eqref{subsp_emb_id}. Specifically, for $q=\Theta(\frac{d}{\tau}\log{(2d)}/\epsilon^2)$:
\begin{equation*}
  \Pr\big[\|\Ib_d-\Ub^T\Sbp^T\Sbp \Ub\|_2\leqslant\epsilon\big]\geqslant 1-e^{\Theta(1)}.
\end{equation*}
\end{Thm}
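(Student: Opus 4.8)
The plan is to recognize the quantity $\|\Ib_d-\Ub^T\Sbp^T\Sbp\Ub\|_2$ as the spectral deviation of a sum of independent positive semidefinite (PSD) random matrices from its mean, and then apply a matrix Chernoff inequality. First I would rotate the basis: set $\Vbt=\Pib\Ub$, which is again an orthonormal basis of a $d$-dimensional subspace since $\Vbt^T\Vbt=\Ub^T\Pib^T\Pib\Ub=\Ib_d$. Because $\Sbp\Ub=\Ombp\Vbt$ and $\Ombp$ merely selects and rescales blocks, the Gram matrix factors as
$$\Ub^T\Sbp^T\Sbp\Ub=\Vbt^T\Ombp^T\Ombp\Vbt=\sum_{i=1}^q\Zb_i, \qquad \Zb_i\coloneqq\tfrac{K}{q}\,\Vbt_{(\K_{j_i})}^T\Vbt_{(\K_{j_i})},$$
where the block indices $j_1,\dots,j_q$ are drawn i.i.d.\ uniformly from $\N_K$ (the sampling is with replacement), so the $\Zb_i$ are i.i.d.\ PSD matrices. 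Lemma \ref{lemma_exp} already supplies $\E\big[\sum_i\Zb_i\big]=\Ib_d$, so the target becomes the concentration statement $\big\|\sum_i\Zb_i-\E[\sum_i\Zb_i]\big\|_2\leqslant\epsilon$.

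With this reduction the natural tool is Tropp's matrix Chernoff bound, which for i.i.d.\ PSD summands $\Zb_i$ obeying $\lambda_{\max}(\Zb_i)\leqslant R$ almost surely, with $\mu_{\max}=\mu_{\min}=1$ here, yields
$$\Pr\big[\|\Ib_d-\textstyle\sum_i\Zb_i\|_2\geqslant\epsilon\big]\leqslant 2d\cdot\exp\!\big(-\epsilon^2/(3R)\big).$$
Thus the whole argument reduces to pinning down a uniform almost-sure bound $R$ on $\|\Zb_i\|_2=\tfrac{K}{q}\|\Vbt_{(\K_{j_i})}\|_2^2$. Invoking the block-leverage-score flatness noted before the lemmas, $\tilde{\ell}_\iota=\|\Vbt_{(\K_\iota)}\|_F^2\approx d/K$, the crude Frobenius bound $\|\Vbt_{(\K_\iota)}\|_2^2\leqslant\|\Vbt_{(\K_\iota)}\|_F^2$ already gives $R\approx d/q$ and hence a valid but loose sample complexity $q=\Theta(d\log(2d)/\epsilon^2)$.

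The hard part — and what produces the advertised factor-$\tau$ improvement in $q$ — is to replace the Frobenius bound by the sharper spectral estimate $\|\Vbt_{(\K_\iota)}\|_2^2\approx d/N$ uniformly over $\iota$, giving $R\approx \tfrac{K}{q}\cdot\tfrac{d}{N}=\tfrac{d}{q\tau}$. This is precisely the assertion that the random rotation $\Pib$ not only equalizes the block masses but also decorrelates the $\tau$ rows inside each block, so that every $\tau\times d$ block of $\Vbt$ behaves like a scaled partial isometry rather than a rank-concentrated matrix. I would establish this as a concentration-of-measure fact for Haar-random orthonormal $\Pib$ (equivalently, incoherence of the rotated rows), and this is where the hypothesis $\epsilon\ll 1/N$ is spent: it forces $q$ large enough that the $o(1)$ fluctuations of the $\tilde{\ell}_\iota$ and of the block spectral norms around their idealized values $d/K$ and $d/N$ are dominated by $\epsilon$, letting me treat the approximate equalities as exact when bounding $R$.

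Finally I would substitute $R\approx d/(q\tau)$ into the matrix Chernoff tail, obtaining failure probability at most $2d\cdot\exp(-\epsilon^2 q\tau/(3d))$; choosing $q=\Theta\!\big(\tfrac{d}{\tau}\log(2d)/\epsilon^2\big)$ makes the exponent $-\Theta(\log(2d))$, so that a union bound over the two one-sided deviations leaves the stated constant failure probability and certifies the $(1\rpm\epsilon)$-embedding through the equivalence in \eqref{subsp_emb_id}.
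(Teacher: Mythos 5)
Your reduction to matrix concentration is set up correctly, and your diagnosis is honest: with only the Frobenius-level flatness $\|\Vbt_{(\K_\iota)}\|_F^2\approx d/K$ you get $q=\Theta(d\log(2d)/\epsilon^2)$, and the advertised rate needs the per-block \emph{spectral} bound $\|\Vbt_{(\K_\iota)}\|_2^2\approx d/N$. But that bound is precisely where your proof stops being a proof: you assert it ``as a concentration-of-measure fact for Haar-random $\Pib$'' without establishing it, and it is the entire analytic content of the factor-$\tau$ improvement. Neither ingredient you have in hand implies it. Flat block-leverage scores constrain only the trace of $\Vbt_{(\K_\iota)}^T\Vbt_{(\K_\iota)}$, not its spectrum: a block can satisfy $\|\Vbt_{(\K_\iota)}\|_F^2=d/K$ while being rank-one, in which case $\|\Vbt_{(\K_\iota)}\|_2^2=d/K=\tau\cdot d/N$ and your $R$ degrades by exactly the factor $\tau$ you are trying to gain. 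Likewise $\epsilon\ll1/N$ controls the size of score fluctuations, not the correlations among the $\tau$ rows inside a block. What you would actually have to prove is a uniform bound (over all $K$ blocks) on the operator norms of the $\tau\times\tau$ principal submatrices of the random projector $\Pib\Ub\Ub^T\Pib^T$ — a genuine random-matrix statement in the spirit of results on norms of random submatrices — and nothing in your outline engages with it. (A smaller issue: Lemma \ref{lemma_exp} is proved for the ``first $q$ of $K$ blocks'' model, i.e.\ sampling without replacement, while your summands are i.i.d.\ with replacement; the mean $\E[\sum_i\Zb_i]=\Ib_d$ still holds, but by a one-line direct computation, not by citing that lemma.)

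The paper's own proof takes a different, shorter route that never attempts your direct Chernoff argument. It proves that the block-leverage scores of $\Pib\Ub$ flatten: Lemma \ref{Lemma_exp_lev_i} gives $\E[\ell_i]=d/N$ for each row, Lemma \ref{norm_lvg_bd_lemma} gives a concentration bound for each normalized score around $1/N$, and Lemma \ref{bd_block_lvg_Unif} aggregates over the $\tau$ rows of a block to get $\ellg_\iota\approx 1/K$ — this is where $\epsilon\ll1/N$ is spent, since the relative deviation there is $N\epsilon$. It then invokes Theorem \ref{subsp_emb_thm_lvg}, the $(1\rpm\epsilon)$-embedding guarantee for sampling blocks according to their \emph{exact} block-leverage scores with rescaling $\sqrt{1/(q\ellg_\iota)}$, treating the nearly flat scores as exactly flat (the paper states this approximation explicitly). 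In other words, the paper's work in this proof is the flattening; the matrix-concentration core that you are trying to prove inline is delegated to Theorem \ref{subsp_emb_thm_lvg}, which is stated without proof in this paper (its Hadamard analogue, Proposition \ref{prop_SRHT_b}, is carried out via the two-parameter bound of Theorem \ref{matr_Chern}, where the almost-sure bound $\gamma$ is kept at the loose level $O(d\log(Nd/\delta))$ and the savings enter through the variance term). So your proposal is essentially an attempt to prove that black box directly, and it stalls exactly at the step that contains the real difficulty.
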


\section{The Block-SRHT}
\label{block_SRHT_sec}

In this section, we focus on a special case of $\Pib$ which can be utilized in Algorithm \ref{alg_orthog_sketch}, the randomized Hadamard transform. The SRHT is comprised of three matrices: $\Omb\in\R^{r\times N}$ a uniform sampling and rescaling matrix of $r$ rows, $\Hbh_N\in\{\rpm1/\sqrt{N}\}^{N\times N}$ the normalized Hadamard matrix for $N=2^n$, and $\Db\in\{0,\rpm1\}^{N\times N}$ with i.i.d. diagonal Rademacher random entries; i.e. it is a signature matrix. The main intuition of the projection is that it expresses the original signal or feature-row in the Walsh-Hadamard basis. Furthermore, $\Hbh_N$ can be applied efficiently due to its structure. In the new basis the block-leverage scores are close to uniform, hence uniform sampling is applied to reduce the effective dimension $N$, whilst the information of the data matrix is maintained.

To exploit the SRHT in distributed GC for linear regression, we generalize it to subsampling \textit{blocks} (i.e. submatrices) instead of rows; of the data matrix, as in Algorithm \ref{alg_orthog_sketch}. We give a subspace embedding guarantee for the block-wise sampling version or SRHT, which characterizes the approximation of our proposed GC for linear regression.

We refer to this special case as the ``block-SRHT'', for which $\Pib$ is taken from the set of orthonormal matrices
\begin{equation}
\vspace{-1mm}
\label{set_Had}  
  H_N\coloneqq\left\{\Hbh_N\cdot\Db\ : \ \Db=\text{diag}(\rpm1)\in\{0,\rpm1\}^{N\times N}\right\},
\end{equation}
where $\Db$ is a random signature matrix with equiprobable entries of +1 and -1, and $\Hbh_N$ for $N=2^n$ is defined by
$$ \Hb_2 = \begin{pmatrix} 1 & 1 \\ 1 & -1 \end{pmatrix} \qquad \qquad \Hbh_N = \frac{1}{\sqrt{N}}\cdot\Hb_2^{\otimes \log_2(N)}\ .$$
The SRHT introduced in \cite{DMMS11} corresponds to the case where we select $\tau=1$, i.e. $K=N$. The main differences in $\Sbp$ is the sampling matrix $\Ombp$, and that $q=r/\tau$ sampling trials take place instead of $r$. Henceforth, we drop the subscript $N$. The limiting computational step in applying $\Sbp$ in \eqref{x_til_pr_lr} is the multiplication by $\Hbh$. The recursive structure of $\Hbh$ permits us to compute $\Sbp\cdot\Ab$ in $\ow(Nd\log N)$ time, by using Fourier based methods. To show that $\Sbp$ constructed based on a $\Pib$ taken from $H_N$ satisfies \eqref{subsp_emb_id}, we first present a key result.

Furthermore, the transformation $\Hbh\Db$ also permits for a very sparse random projection to be applied, instead of $\Ombp$ \cite{AC06}. Also note that the diagonal entries of $\Db$ is the only place in which randomness takes place other than the sampling.

\vspace{-1mm}
\begin{Thm}
\label{subsp_emb_thm}
The block-SRHT $\Sbp$ is a $(1\rpm\epsilon)$-embedding of $\Ab$. In the case that $q=\Theta\big(\frac{d}{\tau}\log(Nd/\delta)\cdot\log(d/\delta)/\epsilon^2\big)$:
\vspace{-1mm}
$$ \Pr\big[\|\Ib_d-\Ub^T\Sbp^T\Sbp \Ub\|_2\leqslant\epsilon\big]\geqslant 1-2\delta \ . $$
\end{Thm}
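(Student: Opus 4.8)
The plan is to follow the classical two-stage template for subsampled randomized transforms, adapted from row sampling to block sampling. Writing $\Vbt=\Pib\Ub=\Hbh\Db\Ub$, which again has orthonormal columns, the target quantity is $\Ib_d-\Ub^T\Sbp^T\Sbp\Ub=\Ib_d-(\Ombp\Vbt)^T(\Ombp\Vbt)$. Since $\Ombp$ samples $q$ blocks uniformly with replacement and rescales by $\sqrt{K/q}$, this equals $\Ib_d-\sum_{i=1}^q\Xb_i$ with i.i.d. summands $\Xb_i=\tfrac{K}{q}\Vbt_{(\K_{j_i})}^T\Vbt_{(\K_{j_i})}$, and by Lemma \ref{lemma_exp} we have $\E\big[\sum_i\Xb_i\big]=\Ib_d$. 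I would then bound $\|\sum_i\Xb_i-\Ib_d\|_2\le\epsilon$ with a matrix Chernoff inequality for sums of independent PSD matrices, for which the only data-dependent input is a uniform bound $L$ on $\lambda_{\max}(\Xb_i)=\tfrac{K}{q}\|\Vbt_{(\K_{j_i})}\|_2^2$.

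Stage one is flattening the blocks in spectral norm. The crux is to show that, with probability at least $1-\delta$, every transformed block is spectrally flat, $\max_{\iota\in\N_K}\|\Vbt_{(\K_\iota)}\|_2^2\le C\,d\log(Nd/\delta)/N$. Write $\Vbt_{(\K_\iota)}^T\Vbt_{(\K_\iota)}=\Ub^T\Db\Pb_\iota\Db\Ub$, where $\Pb_\iota=\Hbh_{(\K_\iota)}^T\Hbh_{(\K_\iota)}$ is the rank-$\tau$ orthogonal projection associated with the block (the rows of $\Hbh_{(\K_\iota)}$ being orthonormal). Because every Hadamard entry has magnitude $1/\sqrt N$, the diagonal of $\Pb_\iota$ is constant equal to $\tau/N$, so taking the expectation over the Rademacher diagonal $\Db$ gives $\E[\Vbt_{(\K_\iota)}^T\Vbt_{(\K_\iota)}]=\tfrac{\tau}{N}\Ib_d$, whose spectral norm is $\tau/N$. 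It remains to control the deviation of this quadratic form in $\Db$ from its mean: fixing a unit vector $\wb=\Ub\vb\in\mathrm{im}(\Ub)$, the scalar $\|\Hbh_{(\K_\iota)}\Db\wb\|_2^2$ is a sum of $\tau$ squared Rademacher sums and concentrates around $\tau/N$ with sub-exponential tails; a standard $\epsilon$-net over the $d$-dimensional sphere, together with a union bound over the $K$ blocks, upgrades this to the stated uniform spectral bound and contributes the $\log(Nd/\delta)$ factor.

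Stage two is the matrix Chernoff bound on the block sample. Conditioning on the Stage-one event, each summand obeys $\lambda_{\max}(\Xb_i)\le L:=\tfrac{K}{q}\cdot C\,d\log(Nd/\delta)/N=C\,d\log(Nd/\delta)/(q\tau)$, using $N=K\tau$. Tropp's matrix Chernoff bound then yields $\Pr\big[\|\sum_i\Xb_i-\Ib_d\|_2\ge\epsilon\big]\le 2d\exp\!\big(-\epsilon^2/(3L)\big)$. Substituting $L$ and forcing the right-hand side below $\delta$ gives exactly $q=\Theta\big(\tfrac{d}{\tau}\log(Nd/\delta)\log(d/\delta)/\epsilon^2\big)$; a union bound over the Stage-one and Stage-two failure events gives the total failure probability $2\delta$ claimed.

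The main obstacle is Stage one, and specifically obtaining the spectral bound rather than the weaker Frobenius (block-leverage) bound $\|\Vbt_{(\K_\iota)}\|_F^2=\tilde\ell_\iota\lesssim d\log(Nd/\delta)/K$. Simply invoking $\|\cdot\|_2\le\|\cdot\|_F$ inflates $L$ by a factor $\tau$ and would only prove the weaker rate $q\sim d\log(Nd/\delta)\log(d/\delta)/\epsilon^2$; recovering the advertised $d/\tau$ requires genuinely exploiting that the block has rank up to $\tau$ and that its $\tau$ rows are spread out. This is delicate because those rows share the single Rademacher diagonal $\Db$ and are therefore dependent, so one cannot union-bound independent rows; the spectral concentration of the degree-two chaos $\Ub^T\Db\Pb_\iota\Db\Ub$ (via the net argument above, or a matrix Khintchine/Hanson--Wright inequality) is where the real work lies.
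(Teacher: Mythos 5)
Your Stage two is a correct use of the matrix Chernoff bound \emph{given} a uniform spectral bound $L$, but your Stage one is not merely ``where the real work lies'' --- as stated it is false, so no $\epsilon$-net or Hanson--Wright refinement can deliver it, and the proposal cannot be completed along these lines. Two concrete objections. (i) Your own computation $\E\big[\Vbt_{(\K_\iota)}^T\Vbt_{(\K_\iota)}\big]=\tfrac{\tau}{N}\Ib_d$ has spectral norm $\tau/N$, which already exceeds the claimed bound $C\,d\log(Nd/\delta)/N$ whenever $\tau>Cd\log(Nd/\delta)$; the theorem does not exclude such $\tau$ (it only forces $q\geqslant 1$, i.e.\ $\tau\lesssim d\log(Nd/\delta)\log(d/\delta)/\epsilon^2$). (ii) Even after the natural correction to $C(\tau+d\log(Nd/\delta))/N$, the claim fails. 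Take $\K_\iota=\{1,\dots,\tau\}$ the first block of $\Hbh=\tfrac{1}{\sqrt N}\Hb_2^{\otimes\log_2N}$; a direct computation gives $\Hbh_{(\K_\iota)}^T\Hbh_{(\K_\iota)}=\tfrac{\tau}{N}\sum_{a=1}^{\tau}\mathbf{1}_{c_a}\mathbf{1}_{c_a}^T$, where $c_a=\{a,a+\tau,a+2\tau,\dots\}$ are the $\tau$ arithmetic progressions of step $\tau$. Now choose $d\leqslant\tau$ columns of $\Ub$ to be normalized indicators $\sqrt{\tau/N}\,\mathbf{1}_{c_a}$. Then $\Vbt_{(\K_\iota)}^T\Vbt_{(\K_\iota)}$ is \emph{diagonal} with entries $\tfrac{\tau^2}{N^2}S_a^2$, where the $S_a=\sum_{k\in c_a}\Db_{kk}$ are independent sums of $N/\tau$ Rademacher variables, so $\|\Vbt_{(\K_\iota)}\|_2^2=\tfrac{\tau^2}{N^2}\max_{a\leqslant d}S_a^2\approx 2\tau\log(d)/N$ with high probability. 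This exceeds $Cd\log(Nd/\delta)/N$ once $\tau\gtrsim d\log(Nd/\delta)/\log d$. In short, blocks of $\Hbh\Db\Ub$ are flat in Frobenius norm but genuinely \emph{not} flat in spectral norm; the shared diagonal $\Db$ is an obstruction in substance, not just in proof technique.

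The paper's proof never needs spectral flatness: it takes exactly the Frobenius route you dismissed, but feeds it into a Bernstein-type matrix inequality rather than a purely uniform-bound Chernoff inequality. Azuma--Hoeffding gives the Flattening Lemma (Lemma \ref{fl_lem}), hence the row-norm bounds $\|\Vbh_{(l)}\|_2^2\leqslant\alpha\coloneqq C^2d\log(Nd/\delta)/N$ for all $l$ (Lemma \ref{bd_lvg_Had}). For the centered summands $\Xb_i=\Ib_d-K\Vbh_{(\K^i)}^T\Vbh_{(\K^i)}$ this yields the inflated uniform bound $\gamma\leqslant 1+N\alpha$ (your ``factor $\tau$'' loss, via $\|\cdot\|_2\leqslant\|\cdot\|_F$), \emph{and} the variance bound $\|\E[\Xb^T\Xb]\|_2\leqslant 1+K\alpha=1+C^2\tfrac{d}{\tau}\log(Nd/\delta)$. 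In Theorem \ref{matr_Chern} the tail exponent is $-q\epsilon^2/(\sigma^2+\gamma\epsilon/3)$: the crude $\gamma$ enters only damped by $\epsilon$, so for $\epsilon\tau=O(1)$ (implicit in the $\Theta(\cdot)$ simplification in Proposition \ref{prop_SRHT_b}) the variance term dominates and substituting the stated $q$ gives $1-2\delta$. The $d/\tau$ gain is thus a \emph{variance} phenomenon, invisible to a PSD matrix Chernoff bound whose only data-dependent input is $\lambda_{\max}$. A final caveat: your instinct that something delicate happens at exactly this point is sound, since the paper's evaluation of $\E[\Xb^T\Xb]$ retains only the terms $\|\Vbh_{(l)}\|_2^2\,\Vbh_{(l)}^T\Vbh_{(l)}$ and discards the within-block cross terms $\langle\Vbh_{(l)},\Vbh_{(l')}\rangle\Vbh_{(l)}^T\Vbh_{(l')}$, $l\neq l'$, which in the progression example above are dominant when $\tau\gg d$; so any write-up should follow the row-flattening-plus-Bernstein mechanism while treating that variance step with care, rather than attempt a spectral flattening lemma that is simply not true.
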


\vspace{-2mm}
In Subsection \ref{security_sec_SRHT} we alter the transformation $\Hbh\Db$ by permuting its rows. While our $\ell_2$-subspace embedding result remains intact, under mild but necessary assumptions; this transformation now also guarantees computational security.

\vspace{-1mm}
\section{Security of Orthonormal Sketches}
\label{security_sec}

In this section, we discuss the security of the proposed orthonormal-based sketching matrices, and that of the block-SRHT. The main idea behind securing the resulting sketches is that there are infinitely many options of $\Pib$ to select from, making it near-impossible for adversaries to discover the inverse transformation.

To give information-theoretic security-guarantees, we make some mild but necessary assumptions regarding Algorithm \ref{alg_orthog_sketch} and the data matrix $\Ab$. First, we recall the definition of a perfectly secret cryptographic scheme.

\begin{Def}[Ch.2\cite{KL14}]
\label{Sh_secr}
A security scheme $\textup{\textsf{Enc}}$ with message, ciphertext and key spaces $\M$, $\Cc$ and $\K$ respectively is \textbf{Shannon/perfectly secret} w.r.t. a probability distribution $D$ over $\M$, if for all $\bar{m}\in\M$ and all $\bar{c}\in\Cc$:
\vspace{-1mm}
\begin{equation}
  \Pr_{{\substack{m\gets D\\ k\gets\K}}}\left[m=\bar{m}\mid\textup{\textsf{Enc}}_k(m)=\bar{c}\right] = \Pr_{m\gets D}\left[m=\bar{m}\right] \ ,
\end{equation}
\vspace{-2mm}
which is equivalent to the condition that for all $m_0,m_1\in\M$:
\vspace{-2mm}
\begin{equation}
\label{perf_secrecy_id}
  \Pr_{k\gets\K}\left[\textup{\textsf{Enc}}_k(m_0)=\bar{c}\right] = \Pr_{k\gets\K}\left[\textup{\textsf{Enc}}_k(m_1)=\bar{c}\right] \ .
\end{equation}
\end{Def}

For an information-theoretic security-guarantee, $\M$ needs to be finite, which $\M$ in our case corresponds to the set of possible orthonormal bases of the column-space of $\Ab$. This is something we do not have control over, and it depends on the application and distribution from which we assume the data is gathered. Therefore, we assume that $\M$ is finite. For this reason, we consider a finite multiplicative subgroup $(\Otil_\Ab,\cdot)$ of $O_N(\R)$ (thus $\Ib_N\in\Otil_\Ab$, and if $\Qb\in\Otil_\Ab$ then $\Qb^T\in\Otil_\Ab$), which contains all potential orthonormal bases of $\Ab$. Recall that $O_N(\R)$ is a regular submanifold of $\GL_N(\R)$. Hence, we can define a distribution on any subset of $O_N(\R)$.

We then let $\M=\Otil_\Ab$, and assume $\Ub_\Ab$ the $N\times N$ orthonormal basis of $\Ab$ be drawn from $\M$ w.r.t. $D$. We consider $D$ to be the uniform distribution. Furthermore, an inherent limitation of Shannon secrecy is that $|\K|\geq|\M|$.

\vspace{-1mm}
\begin{Thm}
\label{Shan_secr_thm}
In Algorithm \ref{alg_orthog_sketch}, sample $\Pib$ uniformly at random from $\Otil_\Ab$. The application of $\Pib$ to $\Ab$ before partitioning the data, provides Shannon secrecy to $\Ab$ w.r.t. $D$ uniform, for $\K,\M,\Cc$ all equal to $\Otil_\Ab$.
\end{Thm}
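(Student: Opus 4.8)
The plan is to recognize the encoding induced by Algorithm~\ref{alg_orthog_sketch} as a one-time pad over the finite group $\Otil_\Ab$, and then close via the equivalent perfect-secrecy condition \eqref{perf_secrecy_id}. The three objects of Definition~\ref{Sh_secr} are all set to $\Otil_\Ab$: the message is the orthonormal basis $\Ub_\Ab\in\M$ of $\Ab$, the key is $\Pib\in\K$ drawn uniformly, and the ciphertext is $\Enc_\Pib(\Ub_\Ab)=\Pib\cdot\Ub_\Ab\in\Cc$. The first thing I would check is that $\Enc$ is well-defined, i.e.\ that $\Pib\Ub_\Ab$ really lands in $\Cc=\Otil_\Ab$; this is immediate since $\Otil_\Ab$ is closed under multiplication (being a group).

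The heart of the argument is key-invertibility. For a fixed message $m\in\M$ and any target ciphertext $\bar{c}\in\Cc$, the equation $\Pib\cdot m=\bar{c}$ has the unique solution $\Pib=\bar{c}\cdot m^T$ inside $\Otil_\Ab$, using $m^{-1}=m^T$ (orthonormality) together with closure of the group under transpose and products. Since $\Pib$ is uniform over the $|\Otil_\Ab|$ elements of $\K$, exactly one key maps $m$ to $\bar{c}$, so
\begin{equation*}
\Pr_{\Pib\gets\K}\big[\Enc_\Pib(m)=\bar{c}\big]=\frac{1}{|\Otil_\Ab|}
\end{equation*}
\emph{independently of} $m$. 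Consequently, for any $m_0,m_1\in\M$ both sides of \eqref{perf_secrecy_id} equal $1/|\Otil_\Ab|$ and therefore coincide, which is precisely Shannon secrecy with respect to the uniform $D$. Note that $|\K|=|\M|=|\Otil_\Ab|$, so the inherent bound $|\K|\geq|\M|$ is met with equality.

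I expect the main obstacle to be the modeling reduction rather than the algebra: namely, justifying that ``secrecy to $\Ab$'' is faithfully captured by secrecy of its orthonormal basis $\Ub_\Ab$. Writing $\Ab=\Ub_\Ab\Rb$ with $\Rb$ the application-determined, unprotected remaining factor, an adversary observing $\Pib\Ab=(\Pib\Ub_\Ab)\Rb$ learns only the masked basis $\Pib\Ub_\Ab$, and the uniform key renders its distribution independent of $\Ub_\Ab$. I would also need to invoke the standing assumption that $\M$ is finite---without it Definition~\ref{Sh_secr} does not apply---and confirm that taking $D$ uniform is exactly what makes the ciphertext distribution uniform on $\Cc$, so that no message is statistically distinguishable from any other.
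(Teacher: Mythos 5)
Your proposal is correct and follows essentially the same route as the paper: both arguments reduce to the one-time-pad observation that, for a fixed message $\bar{\Ub}\in\M$ and ciphertext $\bar{\Qb}\in\Cc$, exactly one key $\Pib=\bar{\Qb}\cdot\bar{\Ub}^{-1}\in\Otil_\Ab$ satisfies $\Pib\bar{\Ub}=\bar{\Qb}$, so uniformity of $\Pib$ makes the ciphertext probability $1/|\Otil_\Ab|$ independently of the message, verifying \eqref{perf_secrecy_id}. Your explicit remarks on closure of $\Otil_\Ab$, key uniqueness, and the reduction from secrecy of $\Ab$ to secrecy of $\Ub_\Ab$ (via the factor the projection does not protect) correspond to points the paper handles informally in the proof and in its post-proof SVD discussion.
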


\subsection{Securing the SRHT}
\label{security_sec_SRHT}
\vspace{-1mm}

Unfortunately, the guarantee of Theorem \ref{Shan_secr_thm} does not apply to the block-SRHT, as in this case it is restrictive to assume that $\Ub_\Ab\in H_N$. A simple computation on a specific example also shows that this sketching approach does not provide Shannon secrecy. For instance, if $\Ub_0=\Ib_2$, $\Ub_1=\Hbh_2$ and the observed transformed basis $\bar{\Cb}$ has two zero entries, then
\vspace{-1mm}
$$ \Pr_{\Pib\gets H_N}\left[\Pib\cdot\Ub_1=\bar{\Cb}\right] > \Pr_{\Pib\gets H_N}\left[\Pib\cdot\Ub_0=\bar{\Cb}\right]=0. $$
Furthermore, since $\Hbh$ is a known orthonormal matrix, it is a trivial task to invert this projection and reveal $\Db\Ab$. This shows that the inherent security of the SRHT is relatively weak.

\begin{Prop}
  The SRHT does not provide Shannon secrecy.
\end{Prop}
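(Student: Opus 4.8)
The plan is to disprove Shannon secrecy directly, via the equivalent formulation \eqref{perf_secrecy_id}: it suffices to produce two admissible messages $\Ub_0,\Ub_1$ and a single ciphertext $\bar{\Cb}$ for which the conditional probabilities, taken over the random key, are unequal. For the SRHT the only randomness in the projection $\Pib=\Hbh\Db$ is the signature matrix $\Db=\mathrm{diag}(\rpm1)$, since $\Hbh$ is a fixed public matrix; hence the key space is the finite set of signature matrices, each drawn equiprobably, and the ciphertext associated with a message (orthonormal basis) $\Ub$ is $\Pib\Ub=\Hbh\Db\Ub$.

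First I would reduce to the smallest nontrivial dimension $N=2$ (equivalently a single $2\times 2$ Hadamard block, so that the sampling matrix $\Ombp$ plays no role) and take $\Ub_0=\Ib_2$ and $\Ub_1=\Hbh_2$, both of which are genuine orthonormal bases and hence legitimate messages. The crux is a \emph{support mismatch} between the two induced ciphertext distributions. For $\Ub_0$ one has $\Pib\Ub_0=\Hbh_2\Db$, whose entries are all of the form $\pm 1/\sqrt{2}$ and therefore can never vanish; consequently any $\bar{\Cb}$ possessing a zero entry lies outside the support of this distribution, giving $\Pr_{\Db}[\Hbh_2\Db=\bar{\Cb}]=0$. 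For $\Ub_1$, however, $\Pib\Ub_1=\Hbh_2\Db\Hbh_2$, and a direct computation with $\Db=\mathrm{diag}(1,-1)$ yields $\Hbh_2\Db\Hbh_2=\left(\begin{smallmatrix}0&1\\1&0\end{smallmatrix}\right)$, which has two zero entries. Choosing $\bar{\Cb}$ equal to this permutation matrix then gives $\Pr_{\Db}[\Hbh_2\Db\Hbh_2=\bar{\Cb}]>0$, so the two probabilities in \eqref{perf_secrecy_id} differ and Shannon secrecy is ruled out.

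To round out the picture I would add the structural remark that, because $\Hbh$ is a publicly known orthonormal matrix, an adversary can simply left-multiply the observed sketch by $\Hbh^{-1}=\Hbh^T$ to recover $\Db\Ab$; thus the signature matrix $\Db$ is the only masking applied, and it merely flips the signs of rows. This confirms that the inherent secrecy of the SRHT is weak, and motivates the row-permuted modification introduced in Subsection \ref{security_sec_SRHT}.

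The main obstacle I anticipate is bookkeeping rather than conceptual: one must carefully isolate which quantities are random (only $\Db$) versus public ($\Hbh$, and any fixed sampling pattern), and verify that both $\Ub_0$ and $\Ub_1$ genuinely belong to the message space over which secrecy is claimed. Once the support asymmetry is pinned down---no zero entries are attainable under $\Hbh_2\Db$, whereas zero entries \emph{are} attainable under $\Hbh_2\Db\Hbh_2$---the violation of \eqref{perf_secrecy_id} is immediate.
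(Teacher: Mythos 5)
Your proposal is correct and follows essentially the same route as the paper's own counterexample: both arguments take $N=2$, $\Ub_0=\Ib_2$, $\Ub_1=\Hbh_2$, and exploit the support mismatch that $\Hbh_2\Db$ never has zero entries while $\Hbh_2\Db\Hbh_2$ always does, so a ciphertext with zeros violates \eqref{perf_secrecy_id}. Your concluding remark about inverting the public $\Hbh$ to expose $\Db\Ab$ likewise mirrors the paper's observation that the SRHT's inherent security is weak.
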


To secure the SRHT and the block-SRHT, we randomly permute the rows of $\Hbh$; before applying it to $\Ab$. That is, for $\Pb\in S_N$ where $S_N\subsetneq\{0,1\}^{N\times N}$ is the permutation group on $N\times N$ matrices, we let $\Hbt\coloneqq\Pb\Hbh\in\{\rpm1/\sqrt{N}\}^{N\times N}$, and the new sketching matrix is
\vspace{-1mm}
\begin{equation}
\label{Sbt_proj}  
  \Sbt = \Ombp\cdot(\Pb\cdot\Hbh)\cdot\Db =  \Ombp\cdot\Hbt\cdot\Db = \Ombp\cdot\Pibt
\end{equation}
for which our flattening result still holds true (Corollary \ref{cor_fl_lem}). The reason we ``garble'' $\Hbh$ is so that the projection applied to $\Ab$ now inherently has more randomness, and allows us to draw from a larger ensemble. Specifically, for a fixed $N$, the block-SRHT has $N^2$ options for the projection of $\Hbh\Db$, while for $\Pibt=\Hbt\Db$ there are $N^2\cdot N!=\ow(N^{1.5+N}e^{-N})$ options for the projection $\Pibt$. Moreover, for
\vspace{-1mm}
\begin{equation}
\label{set_G_Had}  
  \Ht_N\coloneqq\left\{\Pb\cdot\Pib\ : \ \Pb\in S_N \text{ and } \Pib\in H_N\right\}
\end{equation}
the set of all possible \textit{garbled Hadamard transforms}, it follows that $(\Ht_N,\cdot)$ is a finite multiplicative subgroup of $O_N(\R)$. Hence, we can also define a distribution on $\Ht_N$. We also get the benefits of permuting $\Hbh$'s columns without explicitly applying a second permutation, through $\Db$.

By the following Corollary, the result of Theorem \ref{subsp_emb_thm} also holds for the \textit{garbled block-SRHT} (an analogous result is used to prove that the scores of $\Hbh\Db\Ab$ are uniform). Thus, we can apply any $\Pibt$ from $\Ht_N$ in Algorithm \ref{alg_orthog_sketch}, and get a valid sketch.

\begin{Cor}
\label{cor_fl_lem}
  For $\yb\in\R^N$ a fixed (orthonormal) column vector of $\Ub$, and $\Db\in\{0,\rpm1\}^{N\times N}$ with random equi-probable diagonal entries of $\rpm1$, we have:
  \begin{equation}
  \label{flat_lem_id_tilde}
    \Pr\left[\|\Hbt\Db\cdot\yb\|_\infty> C\sqrt{\log(Nd/\delta)/N}\right]\leqslant\frac{\delta}{2d}
  \end{equation}
  for $0<C\leqslant \sqrt{2+\log(16)/\log(Nd/\delta)}$ a constant.
\end{Cor}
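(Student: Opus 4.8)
The plan is to reduce the garbled case to the flattening bound for the ordinary randomized Hadamard transform, by exploiting that a row permutation leaves the $\ell_\infty$ norm of a vector unchanged. Writing $\Hbt=\Pb\Hbh$ with $\Pb\in S_N$, for any fixed $\yb$ we have $\Hbt\Db\yb=\Pb\big(\Hbh\Db\yb\big)$, i.e. the random vector $\Hbt\Db\yb$ is obtained from $\Hbh\Db\yb$ by reordering its coordinates according to $\Pb$. Since the maximum absolute entry of a vector is invariant under any reordering of its entries, we get $\|\Hbt\Db\yb\|_\infty=\|\Hbh\Db\yb\|_\infty$ for \emph{every} realization of $\Db$, and this holds conditionally on any (fixed or independent) choice of $\Pb$. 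Consequently the tail we must bound is identical in distribution to that of $\|\Hbh\Db\yb\|_\infty$, and it suffices to prove \eqref{flat_lem_id_tilde} for the ungarbled transform $\Hbh\Db$ --- which is exactly the flattening statement already invoked in the text to show that the block-leverage scores of $\Hbh\Db\Ab$ are uniform.

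For that ungarbled bound I would fix a coordinate $i\in\N_N$ and observe that $(\Hbh\Db\yb)_i=\sum_{j=1}^N(\Hbh)_{ij}(\Db)_{jj}y_j$ is a weighted sum of the i.i.d. Rademacher variables $(\Db)_{jj}$ with deterministic weights $(\Hbh)_{ij}y_j$. Because $|(\Hbh)_{ij}|=1/\sqrt{N}$ and $\yb$ is a unit vector (a column of the orthonormal $\Ub$), the weights have squared sum $\sum_{j=1}^N\big((\Hbh)_{ij}y_j\big)^2=\|\yb\|_2^2/N=1/N$. Hoeffding's inequality for bounded Rademacher sums then gives $\Pr\big[|(\Hbh\Db\yb)_i|>t\big]\leqslant 2\exp(-Nt^2/2)$ for each $t>0$. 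Substituting the threshold $t=C\sqrt{\log(Nd/\delta)/N}$ collapses the exponent to $2(Nd/\delta)^{-C^2/2}$, and a union bound over the $N$ coordinates yields $\Pr\big[\|\Hbh\Db\yb\|_\infty>t\big]\leqslant 2N(Nd/\delta)^{-C^2/2}$. The critical value $\sqrt{2+\log(16)/\log(Nd/\delta)}$ of the constant is precisely the point at which this union-bounded tail meets the target $\delta/(2d)$, which pins down the admissible range of $C$ in the statement.

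I expect no conceptual obstacle here: the permutation step is immediate once one notes that $\|\cdot\|_\infty$ is a symmetric function of the coordinates, so all of the probabilistic content is inherited from the standard Hadamard flattening lemma. The only points requiring care are (i) checking that $\Pb$ being deterministic or merely independent of $\Db$ does not affect the distributional equality --- it does not, since the identity holds pointwise in $\Db$ --- and (ii) verifying the constant bookkeeping so that the per-vector failure probability is $\delta/(2d)$; this is the form needed so that a further union bound over the $d$ columns of $\Ub$ gives failure probability $\delta/2$, as required in the proof of Theorem \ref{subsp_emb_thm}.
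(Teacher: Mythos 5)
Your proof is correct, but the key step differs from the paper's. The paper proves the corollary by re-running the Azuma--Hoeffding argument of Lemma \ref{fl_lem} verbatim: it defines $\tilde{Z}_j=\Hbt_{ij}\Db_{jj}\yb_j$, observes that the permuted entries still satisfy $|\tilde{Z}_j|\leqslant|\yb_j|/\sqrt{N}$ because $|\Hbt_{ij}|=1/\sqrt{N}$, and concludes that \eqref{pr_sum_Zj} and the ensuing union bound go through unchanged. You instead reduce the corollary to Lemma \ref{fl_lem} as a black box, via the pointwise identity $\|\Hbt\Db\yb\|_\infty=\|\Pb(\Hbh\Db\yb)\|_\infty=\|\Hbh\Db\yb\|_\infty$, which gives exact equality of the two tail events rather than merely the same upper bound; this is a cleaner and slightly stronger reduction, and your handling of the independence of $\Pb$ from $\Db$ (the identity holds for every realization of $\Db$, hence conditionally on any $\Pb$) is sound. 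What the paper's entry-wise argument buys in exchange is generality: since it only uses $|\Hbt_{ij}|\leqslant 1/\sqrt{N}$, the same proof covers any fixed matrix with entries of that magnitude, which is exactly what the paper needs immediately afterwards to claim the flattening result for the rescaled Rademacher projections in $\Rt_N$ --- matrices that are \emph{not} row permutations of $\Hbh$, so your permutation-invariance reduction would not apply to them. Your second paragraph (Hoeffding per coordinate with weight vector of squared norm $1/N$, then a union bound over the $N$ coordinates, with the stated value of $C$ being the point where the bound meets $\delta/(2d)$) reproduces the paper's proof of Lemma \ref{fl_lem} itself, so nothing is missing.
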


Moreover, the flattening result also holds true for random projections $\Rb$ whose entries are rescaled Rademacher random variables, i.e. $\Rb_{ij}=\rpm1/\sqrt{N}$ with equal probability. The advantage of this is that we have a larger set of projections
\begin{equation*}
  \Rt_N\coloneqq\Big\{\Rb\in\{\rpm1/\sqrt{N}\}^{N\times N}:\Pr[\Rb_{ij}=+1/\sqrt{N}]=1/2\Big\}
\end{equation*}
to draw from. This makes it even harder for an adversary to determine which projection was applied. Specifically $|\Rt_N|=2^{N^2}$, which is significantly larger than $|\Ht_N|$. A drawback of applying such a projection is that it is much slower than its Hadamard-based counterpart.

Next, we provide a computationally secure guarantee for the garbled block-SRHT, i.e. when $\Sbp\gets \Ombp\cdot\Pibt$. Our guarantee against computationally bounded adversaries, relies heavily on the assumption that one-way functions (OWFs) exists. Even though OWFs are ``minimal'' cryptographic objects, it is not known whether such functions exist. Proving their existence is non-trivial, as this would then imply that $\textsf{P}\neq\textsf{NP}$. In practice however, this is not unreasonable to assume.
\vspace{-1mm}

\begin{Thm}
\label{SRHT_comp_sec_thm}
Under the assumption that one-way permutations exist, the garbled sketching matrix $\Sbp\gets \Ombp\cdot\Pibt$ is computationally secure against polynomial-bounded adversaries.
\end{Thm}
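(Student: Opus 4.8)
The plan is to prove computational security by a reduction: I will show that any probabilistic polynomial-time (PPT) adversary that breaks the scheme yields an efficient algorithm inverting the assumed one-way permutation (OWP), contradicting its one-wayness. The target notion is the computational analog of Definition \ref{Sh_secr}; since Shannon secrecy provably fails for this ensemble (the preceding Proposition), I cannot hope for perfect indistinguishability and must instead argue that no PPT adversary, given the garbled sketch $\Sbp=\Ombp\Pibt=\Ombp\Pb\Hbh\Db$ applied to $\Ab$, recovers the secret transform $\Pibt$ (equivalently $\Db\Ab$, after undoing the public and invertible factor $\Pb\Hbh$) with non-negligible probability.

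First I would make the key generation efficient and well specified. A truly uniform $\Pb\in S_N$ together with the Rademacher diagonal of $\Db$ require roughly $N\log N+N$ fresh random bits; instead, invoking the standard chain OWP $\Rightarrow$ PRG $\Rightarrow$ PRF $\Rightarrow$ PRP \cite{KL14}, I would derive $\Pb$ from a pseudorandom permutation and the signs of $\Db$ from pseudorandom generator output, all seeded by a short secret key of length polynomial in the security parameter. This places the key in a polynomially sized space, as a computational statement requires, while keeping $\Sbp$ drawn pseudorandomly from the large ensemble $\Ht_N$ of \eqref{set_G_Had} (or from $\Rt_N$ when Rademacher projections are used).

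The core is then a hybrid-plus-reduction argument. I would define an ideal world, in which $\Pb$ and $\Db$ are truly random, and the real world, in which they are pseudorandom as above; a standard hybrid argument shows these are computationally indistinguishable, since any PPT distinguisher would separate the PRP/PRG output from uniform and hence contradict the OWP assumption. It remains to treat the ideal world, for which I would build the explicit inverter $\mathcal{B}$: on a challenge $y=f(x)$, $\mathcal{B}$ uses $x$ (which encodes the secret permutation index and sign pattern) to generate the transform, hands the resulting sketch to a hypothetical successful adversary $\mathcal{A}$, extracts $\Pb$ and $\Db$ from $\mathcal{A}$'s recovery of the data, and thereby outputs $x=f^{-1}(y)$. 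A non-negligible success probability for $\mathcal{A}$ forces a non-negligible inversion probability for $\mathcal{B}$, contradicting one-wayness; combining with the hybrid step gives security of the efficient scheme against all PPT adversaries.

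The main obstacle I anticipate is pinning down the ideal-world base case. Because $\Ht_N$ does not cover the full orthonormal group $\Otil_\Ab$, I cannot appeal to the perfect-secrecy argument of Theorem \ref{Shan_secr_thm} and must instead argue one-wayness intrinsically, establishing precisely that the only secret barriers are $\Pb$ and $\Db$ while $\Hbh$ and the sampling $\Ombp$ are public (and $\Hbh$ is invertible). The delicate requirement is that the observed sampling pattern leak no data-dependent structure that could shortcut inversion; here the flattening guarantee of Corollary \ref{cor_fl_lem}, which renders the block-leverage scores uniform and hence the subsampling data-agnostic, is exactly what I would invoke to ensure $\Ombp$ offers no such shortcut, so that recovering the data is equivalent to recovering the garbling key and therefore to inverting $f$.
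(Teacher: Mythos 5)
Your proposal has a genuine gap at its center: the inverter $\mathcal{B}$ is circular. In the one-wayness game, $\mathcal{B}$ receives only $y=f(x)$ and must output $x$; yet you have $\mathcal{B}$ ``use $x$ (which encodes the secret permutation index and sign pattern) to generate the transform.'' If $\mathcal{B}$ already had $x$, there would be nothing left to invert. A valid reduction must simulate the adversary's view --- the sketch $\Ombp\Pibt\Ab$ --- from $y$ alone, and nothing in your scheme permits this: the sketch is a deterministic function of the secret seed, and no quantity depending only on $f(\mathrm{seed})$ is ever published, so there is no place to embed the challenge $y$ into $\mathcal{A}$'s view. Consequently your hybrid step only reduces the pseudorandom-key scheme to the truly-random-key (``ideal'') scheme, and the ideal world is exactly where the argument stops. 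You flag this base case as the main obstacle, but the fix you propose does not fill it: Corollary \ref{cor_fl_lem} is a flattening statement about leverage scores (it guarantees the subspace-embedding property survives garbling) and says nothing about secrecy or one-wayness of the map $(\Pb,\Db)\mapsto\Pb\Hbh\Db\Ab$. Indeed, the paper's own counterexample shows the garbled block-SRHT is \emph{not} perfectly secret, so the ideal world is not statistically hiding, and the base case genuinely requires an argument that your outline does not contain.

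The paper's proof takes a much more direct (and admittedly far less formal) route that avoids all of this machinery: it identifies the garbling permutation $\Pb$ itself with the assumed one-way permutation, and argues by contradiction that any probabilistic polynomial-time algorithm recovering $\Ab$ from $\Pibt\Ab$ must in effect have revealed $\Pb$ (and $\Db$), since recovering $\Ab$ amounts to computing
$$ (\Db\Hbh\Pb^T)\cdot(\Pb\Hbh\Db)\cdot\Ab \;=\; \Pibt^{-1}\Pibt\Ab \;=\; \Ab \ , $$
and extracting $\Pb$ in polynomial time contradicts its assumed one-wayness. There is no key-generation model, no hybrid, and no explicit inverter in the paper; the entire proof is this contradiction. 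Your instinct to demand an efficiently sampleable, polynomially-sized key (deriving $\Pb$ and $\Db$ from a short seed via the OWP $\Rightarrow$ PRG $\Rightarrow$ PRP chain) is a legitimate strengthening that the paper ignores, and it addresses a real issue the paper glosses over; but it buys you nothing until the simulation step of the reduction and the ideal-world one-wayness are actually established, and as written both are missing.
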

\vspace{-2mm}

\subsection{Exact Gradient Recovery}
\label{exact_grad_subsec}
\vspace{-1mm}

In the case where the \textit{exact} gradient is desired, one can use the proposed orthonormal projections to encrypt the information from the workers, while requiring that the computations from all the workers are received. From Theorems \ref{Shan_secr_thm} and \ref{SRHT_comp_sec_thm}, we know that under certain assumptions we can secure $\Ab$.

Since the projections are orthonormal, it follows that $\gh^{[t]}=g_{ls}^{[t]}$. Thus, as long as all workers respond, the aggregated gradient is equal to the exact gradient. One can utilize this idea to encrypt other distributive computations, e.g. matrix multiplication, logistic regression. This resembles a homomorphic encryption scheme, but is by no means fully-homomorphic.

\section{Experiments}
\label{exper_sec}

We compared our proposed distributed GC schemes to analogous approaches where the projection $\Pib$ is a Gaussian sketch or a Rademacher random matrix. Our approach was found to outperform both of these sketching methods in terms of convergence and approximation error.

We also compared our approach with uncoded (regular) SD and mini-batch SSD. Random matrices $\Ab\in\R^{2000\times40}$ with non-uniform block-leverage scores were generated for the experiments. Standard Gaussian noise was added to an arbitrary vector from im$(\Ab)$, to define $\bb$. We considered $K=100$ blocks, thus $\tau=20$. Each experiment was carried out six times, and we report the average in our plot. For the experiments in Figure \ref{log_res_err_sparse} we ran a total of 100 iterations, and varied $\xi$ for each experiment by logarithmic factors of the optimal step-size $\xi_{\text{opt}}\coloneqq2/\sigma_{\max}(\Ab)^2$. The effective dimension $N$ was reduced to $r=1000$ in all experiments.

In Figure \ref{log_res_err_sparse} we show how the residual error $\|\xb_{ls}^{\star}-\xbh\|_2$ behaves, with different step-sizes. In the depicted experiment, we considered a sparse matrix $\Ab$. In analogous experiments where we considered a dense matrix, or a matrix drawn from a $t$-distribution, the behaviors were similar. In all cases, the order of the magnitude of the residual error was the same.

\begin{figure}[h]
  \centering
    \includegraphics[scale=.1]{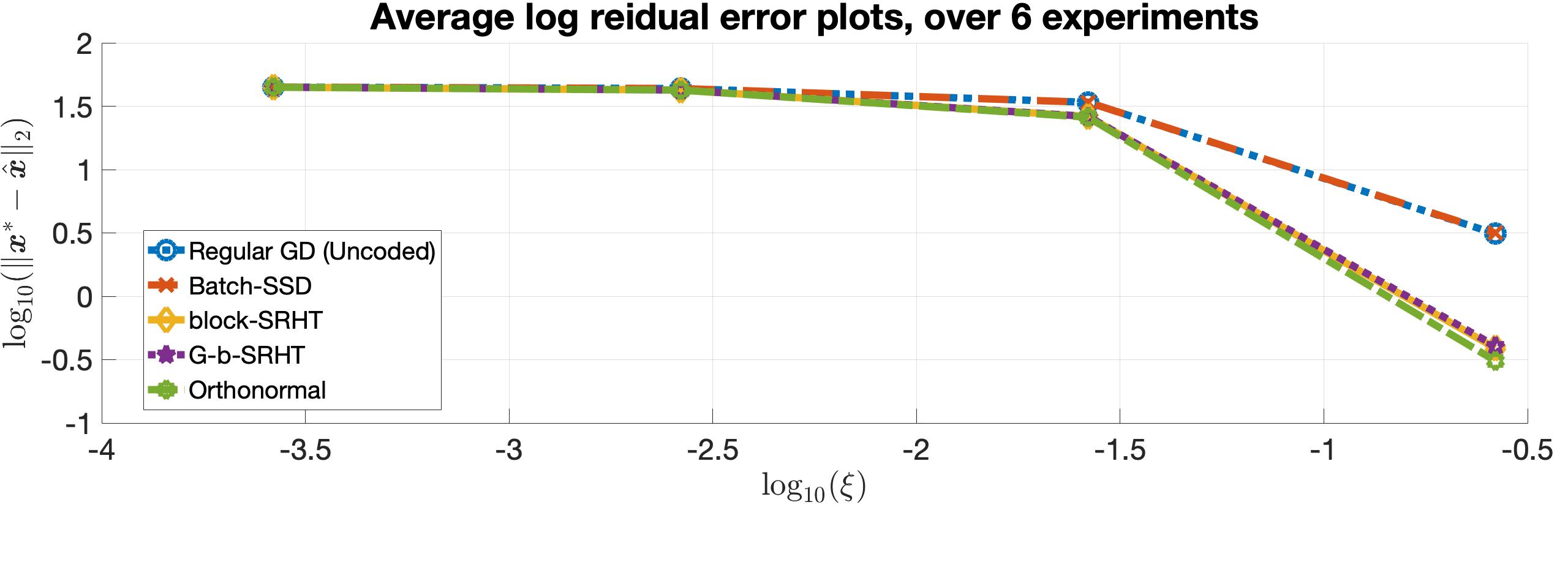}
    \caption{$\log$ residual error, for $\Ab$ sparse.}
  \label{log_res_err_sparse}
\end{figure}

In Figure \ref{conv_per_it} we present the residual error at each iteration, in the case where $\Ab$ was drawn from a $t$-distribution. We considered the case where the step-size was fixed at $\xi=10^2\cdot\xi_{\text{opt}}$. It is evident, that our proposed sketches result in faster convergence of $\xbh$ per iteration, than SD and SSD. This was also the case when a Gaussian projection was applied.

\begin{figure}[h]
  \centering
    \includegraphics[scale=.1]{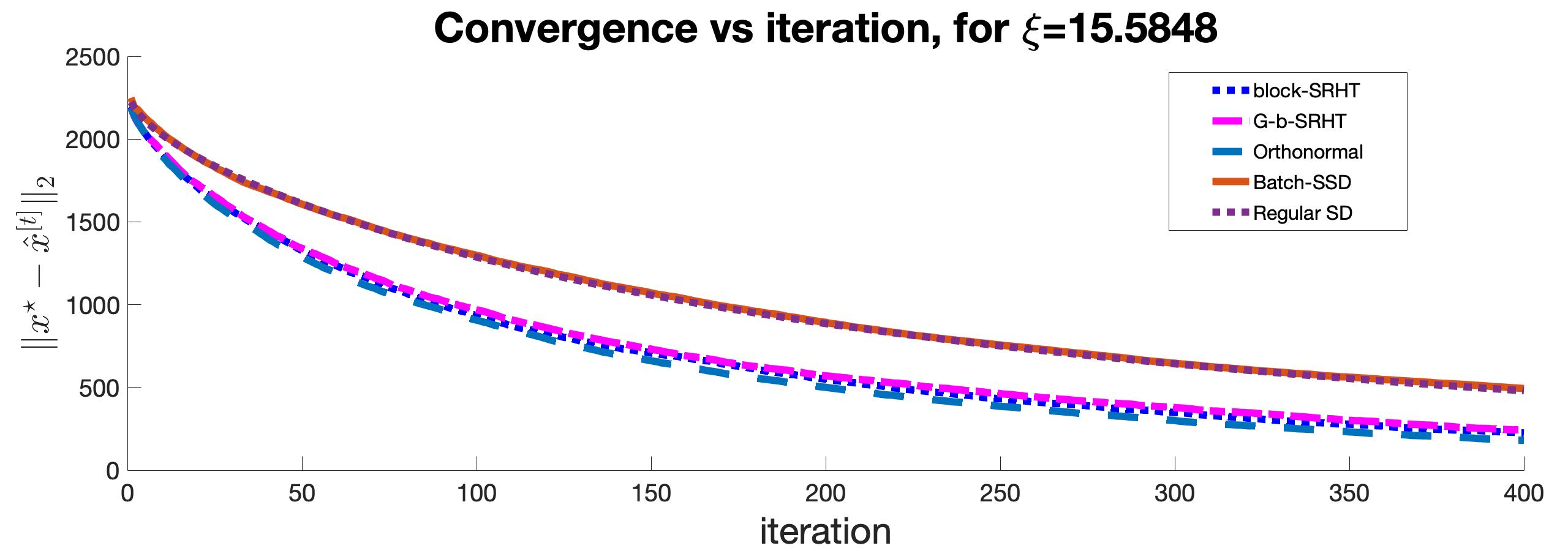}
    \caption{Error at each iteration.}
  \label{conv_per_it}
\end{figure}

Lastly, we show the resulting block-leverage scores after applying the projections, in Figure \ref{flattening_t_distr}. The flattening of these scores is precisely what permitted us to sample uniformly through $\Ombp$, and prove Theorems \ref{subsp_emb_thm_Unif} and \ref{subsp_emb_thm}.

\begin{figure}[h]
  \centering
    \includegraphics[scale=.095]{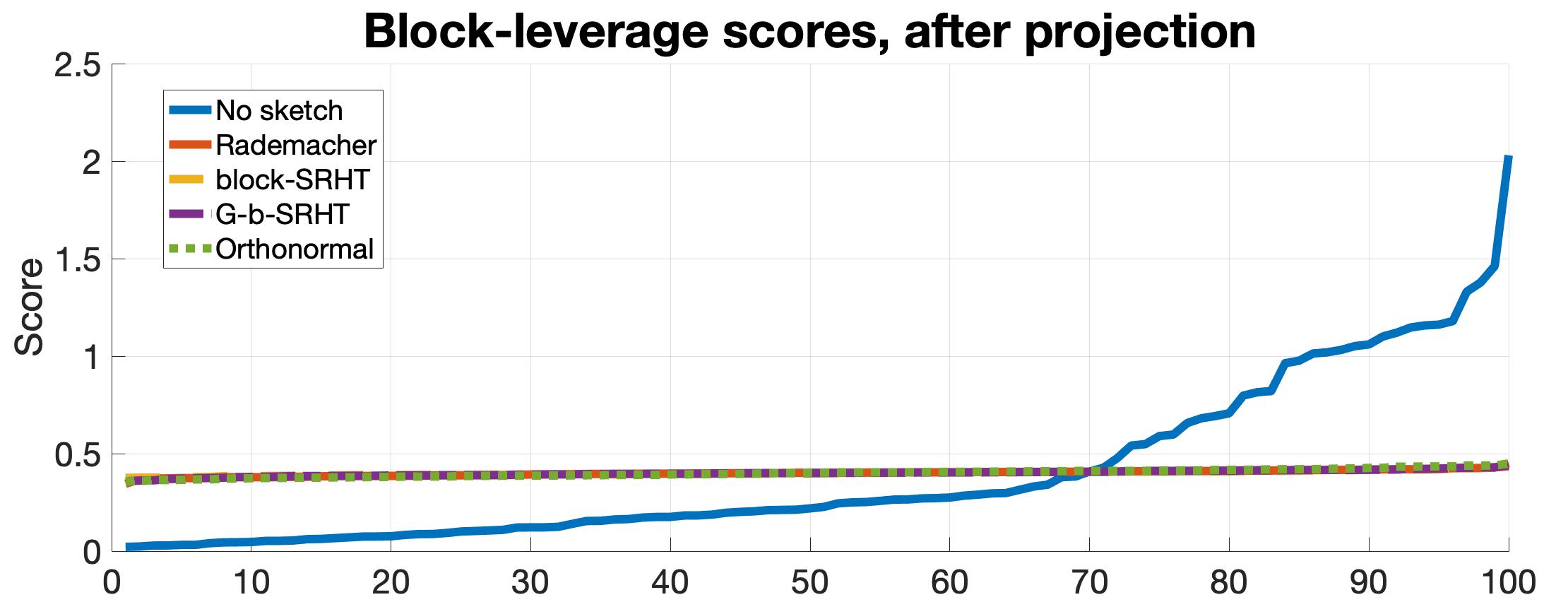}
    \caption{Flattening of block-scores, for $\Ab$ following a $t$-distribution.}
  \label{flattening_t_distr}
\end{figure}

\section{Concluding Remarks}
\label{concl_sec}

In this work, we proposed approximately solving a linear system by distributively leveraging iterative sketching and performing first-order SD simultaneously. In doing so, we benefit from both (approximate) CC and RandNLA. A difference between this and other works is that the resulting sketches are sampling \textit{blocks} uniformly, after applying random orthonormal projections. The benefit is that by considering a large ensemble of orthonormal matrices to pick from, under necessary assumptions, we guarantee information theoretic security while performing the computations. This approach also enables us to not require encoding and decoding steps at every iteration. We also studied the special case where the projection is the Hadamard transform, and discussed its security limitation. To overcome this, we proposed a modified `garbled block-SRHT', which guarantees computational security.

We note that applying orthonormal random matrices also secures coded matrix multiplication. There is a benefit when applying a garbled Hadamard transform in this scenario, as the complexity of multiplication resulting from the sketching is less than that of regular multiplication. Also, if such a random projection is used before performing $CR$-multiplication distributively \cite{CPH20c}, the approximate result will be the same.

Moreover, our dimensionality reduction algorithm can be utilized by a single server, to store a very large data-matrix.


\bibliographystyle{IEEEtran}
\bibliography{refs}

\appendices
\section{Proofs of Section \ref{distr_grad_desc}}

\begin{proof}{[Lemma \ref{lemma_exp}]}
The only difference in $\Sbp^{[t]}$ at each iteration, is $\Scal^{[t]}$ and $\Ombp^{[t]}$. This corresponds to a uniformly random selection of $q$ out of $K$ batches of the data which determine the gradient at iteration $t$ --- all blocks are scaled by the same factor $\sqrt{K/q}$ in $\Ombp^{[t]}$. Let $\Qcal$ be the set of all subsets of $\N_K$ of size $q$. Then
\vspace{-2mm}
\begin{align*}
  \E\big[\Sb_{[t]}^T\Sb_{[t]}\big] &= \sum_{\Scal^{[t]}\in\Qcal}\frac{1}{{K\choose q}}\cdot\left(\Sb_{[t]}\cdot\Sb_{[t]}\right)\\
  &= \frac{1}{{K\choose q}}\sum_{\Scal^{[t]}\in\Qcal}\sum_{i\in\Scal^{[t]}}{\left(\sqrt{K/q}\right)^2}\cdot\Pib_{(\K_i)}^T\Pib_{(\K_i)}\\
  &= \frac{{K-1\choose q-1}}{{K\choose q}}\sum_{i=1}^K\frac{K}{q}\cdot\Pib_{(\K_i)}^T\Pib_{(\K_i)}\\
  &= \frac{{K-1\choose q-1}\cdot\frac{K}{q}}{{K\choose q}}\sum_{i=1}^K\Pib_{(\K_i)}^T\Pib_{(\K_i)}\\
  &= \Pib^T\Pib\\
  &= \Ib_N
\end{align*}
where ${K-1\choose q-1}$ is the number of sets in $\Qcal$ which include $i$, for each $i\in\N_K$.
\end{proof}

\begin{proof}{[Theorem \ref{GC_SGD_thm}]}
The only difference in $\Sbp^{[t]}$ at each iteration, is $\Scal^{[t]}$ and $\Ombp^{[t]}$. This corresponds to a uniformly random selection of $q$ out of $K$ batches of the data which determine the gradient at iteration $t$ --- all blocks are scaled by the same factor $\sqrt{K/q}$ in $\Ombp^{[t]}$. By \eqref{gr_update}, the gradient update is equal to that of a stochastic steepest descent procedure.

We break up the proof of the second statement by first showing that $\E\left[\gh^{[t]}\right]=\gt^{[t]}$; for $\gt$ the gradient in the basis $\Pib\Ub$, and then showing that $\E\left[\gt^{[t]}\right]=\frac{q}{K}\cdot g_{ls}^{[t]}$.

Let $\Qcal$ be the set of all subsets of $\N_K$ of size $q$, $\gh_{\Scal^{[t]}}$ the gradient determined by the index set $\Scal^{[t]}$, and $\gt_i^{[t]}$ the respective partial gradients at iteration $t$. Then
\vspace{-2mm}
\begin{align*}
  \E\left[\gh^{[t]}\right] &= \sum_{\Scal^{[t]}\in\Qcal}\frac{1}{{K\choose q}}\cdot\gh_{\Scal^{[t]}}\\
  &= \frac{1}{{K\choose q}}\sum_{\Scal^{[t]}\in\Qcal}\sum_{i\in\Scal^{[t]}}{\left(\sqrt{K/q}\right)^2}\cdot\gt_i^{[t]}\\
  &= \frac{{K-1\choose q-1}}{{K\choose q}}\sum_{i=1}^K\frac{K}{q}\cdot\gt_i^{[t]}\\
  &= \sum_{i=1}^K\gt_i^{[t]}\\
  &= \gt^{[t]}
\end{align*}
where ${K-1\choose q-1}$ is the number of sets in $\Qcal$ which include $i$, for each $i\in\N_K$.

We denote the resulting partial gradient on the sampled index set $\Scal^{[t]}$ of the gradient on \eqref{x_star_pr_lr} at iteration $t$; i.e. $g_{ls}^{[t]}$, by $g_{\Scal^{[t]}}$, and the individual partial gradients by $g_i^{[t]}$. Using the same notation as above, we get that
\vspace{-2mm}
\begin{align*}
  \E\left[\gt^{[t]}\right] &= \sum_{\Scal^{[t]}\in\Qcal}\frac{1}{{K\choose q}}\cdot g_{\Scal^{[t]}}\\
  &= \frac{1}{{K\choose q}}\sum_{\Scal^{[t]}\in\Qcal}\sum_{i\in\Scal^{[t]}}g_i^{[t]}\\
  &= \frac{{K-1\choose q-1}}{{K\choose q}}\sum_{i=1}^Kg_i^{[t]}\\
  &= \frac{q}{K}\cdot\sum_{i=1}^K\gt_i^{[t]}\\
  &= \frac{q}{K}\cdot g^{[t]}
\end{align*}
which completes the proof.
\end{proof}

\begin{proof}{[Lemma \ref{eq_opt_sols}]}
Since $\Pib$ is an orthonormal matrix, the solution of the least squares problem with the objective $L_{\Gbp}(\Ab,\bb;\xb)$ is equal to the optimal solution \eqref{x_star_pr_lr}, as
\begin{align*}
  \xbh &= \arg\min_{\xb\in\R^d}\|\Gbp(\Ab\xb-\bb)\|_2^2 \\
  &= \arg\min_{\xb\in\R^d}\|\Pib(\Ab\xb-\bb)\|_2^2 \\
  &= \arg\min_{\xb\in\R^d}\|\Ab\xb-\bb\|_2^2\\
  &= \xb_{ls}^{\star}\ .
\end{align*}
\end{proof}

\begin{proof}{[Corollary \ref{eq_SSD_dor}]}
We prove this by induction. From our assumptions we have a fixed starting point $\xb^{[0]}$, for which $\xbh^{[0]}=\xb^{[0]}$. Our base case is therefore $\E[\xbh^{[0]}]=\E[\xb^{[0]}]=\xb^{[0]}$. For the inductive hypothesis, we assume that $\E[\xbh^{[\tau]}]=\xb^{[\tau]}$ for $\tau\in\N$.

It then follows that at step $\tau+1$ we have
\begin{align*}
  \E\big[\xbh^{[\tau+1]}\big] &= \E\big[\xbh^{[\tau]}-\xit_{\tau}\cdot\gh^{[\tau]}\big]\\
  &= \E\big[\xbh^{[\tau]}\big]-\frac{K}{q}\cdot\xi_\tau\cdot\E\big[\gh^{[\tau]}\big]\\
  &= \xb^{[\tau]}-\frac{q}{K}\cdot\left(\frac{K}{q}\cdot\xi_\tau\right)\cdot g_{ls}^{[\tau]}\\
  &= \xb^{[\tau]}-\xi_\tau\cdot g_{ls}^{[\tau]}\\
  &= \xb^{[\tau+1]}
\end{align*}
which completes the inductive step.
\end{proof}

Recall that the contraction rate of an iterative process given by a function $f(x^{[t]})$ is the constant $\gamma$ for which at each iteration we are guaranteed that $f(x^{[t+1]})\leqslant \gamma\cdot f(x^{[t]})$ for $\gamma\in(0,1)$, therefore $f(x^{[t]})\leqslant \gamma^{t}\cdot f(x^{[0]})$. We characterize the convergence of regular steepest descent; and steepest descent through our iterative sketching approach, to the least squares objective \eqref{x_star_pr_lr}, by determining the respective contraction rates on the error term $\xb^{[t]}-\xb_{ls}^{\star}$.

\begin{proof}{[Theorem \ref{contr_rate_thm}]}
Consider a fixed step-size $\xi$, and denote $\Sbp^{[t]}$ by $\Sb_{[t]}$. The steepest descent parameter update through our procedure is then
$$ \xb^{[t+1]}\gets\xb^{[t]}-2\xi\cdot\Ab^T\Sb_{[t]}^T\Sb_{[t]}(\Ab\xb_{[t]}-\bb) $$
where $\Sb_{[t]}$ may change at each iteration. For regular steepest descent, we have $\Sb_{[t]}\gets\Ib_N$. We define the error at iteration $s$ by $e_t\coloneqq\xb^{[t]}-\xb_{ls}^{\star}$, and let $\Bb_t=(\Ib_d-2\xi\cdot\Ab^T\Sb_{[t]}^T\Sb_{[t]}\Ab)$. It follows that
\begin{align}
  e_{t+1} &= \xb^{[t+1]}-\xb^{[t]} \notag\\
  &= \left(\xb^{[t]}-2\xi\cdot\Ab^T\Sb_{[t]}^T\Sb_{[t]}(\Ab\xb^{[t]}-\bb)\right)-\xb_{ls}^{\star} \notag\\
  &= \xb^{[t]}-2\xi\cdot\left(\Ab^T\Sb_{[t]}^T\Sb_{[t]}\Ab\xb^{[t]}+\Ab^T\Sb_{[t]}^T\Sb_{[t]}\bb\right)-\xb_{ls}^{\star} \label{bound_error_sketch}\\
  &= \Bb_t\xb^{[t]}-\left(\xb_{ls}^{\star}-2\xi\cdot\Ab^T\Sb_{[t]}^T\Sb_{[t]}\bb\right)\\
  &= \Bb_t\xb^{[t]}-\left(\xb_{ls}^{\star}-2\xi\cdot\Ab^T\Sb_{[t]}^T\Sb_{[t]}(\Ab\xb_{ls}^{\star})\right) \notag\\
  &= \Bb_t\left(\xb^{[t]}-\xb_{ls}^{\star}\right) \notag
\end{align}
and therefore $e_{t+1} = \Bb_t\left(\xb^{[t]}-\xb_{ls}^{\star}\right) = \Bb_t\cdot e_t$. This gives us the contraction rate
$$ \|e_{t+1}\|_2^2\leqslant\lambda_1(\Bb_t)^2\cdot\|e_{t}\|_2^2 \quad \implies \quad \gamma_{t+1}=\lambda_1(\Bb_t)\ . $$
\end{proof}

The contraction rate of steepest descent is $\gamma_{SD}=\lambda_1(\Bb_{SD})$ for $\Bb_{SD}=(\Ib_d-2\xi\cdot\Ab^T\Ab)$, as here there is no sketching taking place. This is always smaller than the $\gamma_t$ derived in Theorem \ref{contr_rate_thm}, as
\begin{align*}
  \gamma_t &= \|\Ib_d-2\xi\cdot\Ab^T\Ab\|_2\\
  &= \left\|\Ib_d-2\xi\cdot\Ab^T\Ab+2\xi(\Ab^T\Ab-\Ab^T\Sb_{[t]}^T\Sb_{[t]}\Ab)\right\|_2\\
  &\leqslant \overbrace{\|\Ib_d-2\xi\cdot\Ab^T\Ab\|_2}^{\gamma_{SD}}+2\xi\cdot\left\|\Ab^T\Ab-\Ab^T\Sb_{[t]}^T\Sb_{[t]}\Ab\right\|_2\\
  &= \gamma_{SD}+2\xi\cdot\left\|\Ab^T(\Ib_d-\Sb_{[t]}^T\Sb_{[t]})\Ab\right\|_2\\
  &\leqslant \gamma_{SD}+2\xi\cdot\lambda_1(\Ab)^2\cdot\|\Ib_d-\Sb_{[t]}^T\Sb_{[t]}\|_2
\end{align*}
which is expected, since we cannot do better in terms of convergence rate; than steepest descent. By the fact that each $\Sb_{[t]}$ is a $\ell_2$-subspace embeddings, we conclude that with high probability:
\begin{equation*}
\label{contraction_subs_emb}
  \gamma_t \ \leqslant \  \gamma_{SD}+2\xi\cdot\lambda_1(\Ab)^2\cdot\|\Ib_d-\Sb_{[t]}^T\Sb_{[t]}\|_2 \ \leqslant \ \gamma_{SD}+2\xi\epsilon\cdot\lambda_1(\Ab)^2
\end{equation*}
for all iterations.

Next, we provide the proof of Theorem \ref{subsp_emb_thm}. First, we to present the key results regarding the leverage and block-leverage scores of $\Pib\Ab$ (Lemmas \ref{Lemma_exp_lev_i}, \ref{norm_lvg_bd_lemma}).
Throughout this subsection, by $\ell_i$ we denote the $i^{th}$ leverage score of $\Pib\Ab$ for $\Pib$ a random orthonormal matrix, i.e.
\begin{equation}
\label{lvg_sc_expr}
  \ell_i=\|\Ubt_{(i)}\|_2^2=\|\eb_i^T\Ubt\|_2^2=\eb_i^T\Ubt\Ubt^T\eb_i
\end{equation}
where $\Ubt=\Pib\Ub$; for $\Ub$ the reduced left orthonormal matrix of $\Ab$. By $\eb_i$ we denote the $i^{th}$ standard basis vector of $\R^N$.

\begin{Lemma}
\label{Lemma_exp_lev_i}
For each $i\in\N_N$, we have $\E[\ell_i]=\frac{d}{n}$.
\end{Lemma}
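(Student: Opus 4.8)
The plan is to reduce the $i^{\text{th}}$ leverage score to a quadratic form in the orthogonal projector onto $\mathrm{im}(\Ab)$, and then use the randomness carried by $\Pib$ to annihilate all off-diagonal contributions, leaving a single trace. Writing $\Pb\coloneqq\Ub\Ub^T$ for the rank-$d$ orthogonal projector onto $\mathrm{im}(\Ub)=\mathrm{im}(\Ab)$, and using $\Ubt=\Pib\Ub$, expression \eqref{lvg_sc_expr} becomes
\[
  \ell_i=\eb_i^T\Ubt\Ubt^T\eb_i=\eb_i^T\Pib\Pb\Pib^T\eb_i=(\Pib^T\eb_i)^T\Pb(\Pib^T\eb_i).
\]
The only fact I need about $\Pb$ is $\tr(\Pb)=\tr(\Ub^T\Ub)=\tr(\Ib_d)=d$; this also yields the deterministic identity $\sum_{i=1}^N\ell_i=\tr(\Ubt\Ubt^T)=d$ for every realization of $\Pib$, which is the structural reason $d/N$ must be the uniform value.

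For the block-SRHT we have $\Pib=\Hbh\Db$, so $\Pib^T\eb_i=\Db\Hbh^T\eb_i$; I would set $\wb\coloneqq\Db\Hbh^T\eb_i\in\R^N$, whose $j^{\text{th}}$ entry is $\Db_{jj}(\Hbh)_{ij}$. Since $(\Hbh)_{ij}=\rpm1/\sqrt{N}$ is a fixed sign and $\Db_{jj}$ is an independent Rademacher variable, each $\wb_j$ equals $\rpm1/\sqrt{N}$ with equal probability and the coordinates are mutually independent, giving $\E[\wb_j\wb_k]=\tfrac1N\delta_{jk}$. Expanding $\ell_i=\wb^T\Pb\wb=\sum_{j,k}\wb_j\wb_k\Pb_{jk}$ and taking expectations,
\[
  \E[\ell_i]=\sum_{j,k=1}^N\E[\wb_j\wb_k]\,\Pb_{jk}=\frac1N\sum_{j=1}^N\Pb_{jj}=\frac{\tr(\Pb)}{N}=\frac{d}{N}.
\]

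The same value is obtained for a Haar-distributed $\Pib$ by symmetry: rotational invariance forces $\E[\ell_1]=\cdots=\E[\ell_N]$, and the deterministic constraint $\sum_i\ell_i=d$ then gives $\E[\ell_i]=d/N$ directly. I expect the crux to be the vanishing of the cross-terms: it is exactly the off-diagonal entries $\Pb_{jk}$ with $j\neq k$ that could pull the expected score away from the uniform average, and they disappear only because the signs $\Db_{jj}$ are independent and mean-zero. This pinpoints the signature matrix $\Db$ (not $\Hbh$ by itself) as the true source of the flattening, and explains why the argument cannot go through for a generic fixed orthonormal $\Pib$ lacking such internal randomness.
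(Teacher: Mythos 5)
Your proof is correct, and it takes a route that is partly the same as and partly more refined than the paper's. The paper's proof is a single generic computation: writing $\ell_i=\tr(\eb_i\eb_i^T\Ubt\Ubt^T)$ and replacing the expectation over $\Pib$ by a uniform average over row positions $j$ (an implicit exchangeability assumption on the distribution of $\Pib$), so that $\E[\ell_i]=\frac{1}{N}\tr\big(\sum_j\eb_j\eb_j^T\Ubt\Ubt^T\big)=\frac{1}{N}\tr(\Ubt\Ubt^T)=\frac{d}{N}$. Your Haar-case argument (row-exchangeability forces $\E[\ell_1]=\cdots=\E[\ell_N]$, and the deterministic constraint $\sum_i\ell_i=d$ pins each down to $d/N$) is exactly this idea made explicit, and stating the hidden hypothesis --- that the law of $\Pib$ must be invariant under row permutations --- is a genuine improvement in rigor over the paper's telegraphic trace manipulation. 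Your SRHT computation is a genuinely different argument that the paper does not give for this lemma: since the law of $\Hbh\Db$ is not obviously row-exchangeable (only a transitive subgroup of permutations fixes its distribution), the paper's symmetry step does not directly apply there, and the paper instead handles the Hadamard case later via the Flattening Lemma, which yields only high-probability bounds rather than exact expectations. Your second-moment calculation $\E[\wb_j\wb_k]=\frac{1}{N}\delta_{jk}$, which kills the off-diagonal entries of the projector $\Ub\Ub^T$ and leaves $\tr(\Ub\Ub^T)/N$, is self-contained, identifies the signature matrix $\Db$ as the true source of the flattening, and correctly explains why the statement fails for a fixed deterministic orthonormal $\Pib$. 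What the paper's proof buys is brevity and uniformity across all (exchangeable) random orthonormal projections; what yours buys is a precise statement of the hypotheses under which the lemma holds and a rigorous extension to the block-SRHT instantiation.
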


\begin{proof}
By \eqref{lvg_sc_expr}, we have
\begin{align*}
  \E[\ell_i] &= \E\left[\ell_i\tr(\eb_i^T\Ubt\Ubt^T\eb_i)\right]\\
  &= \E\left[\ell_i\tr(\eb_i\eb_i^T\cdot\Ubt\Ubt^T)\right]\\
  &= \sum_{j=1}^N\frac{1}{N}\cdot\tr(\eb_i\eb_i^T\cdot\Ubt\Ubt^T)\\
  &= \frac{1}{N}\cdot\tr\left(\sum_{j=1}^N\eb_i\eb_i^T\cdot\Ubt\Ubt^T\right)\\
  &= \frac{1}{N}\cdot\tr\left(\Ib_N\cdot\Ubt\Ubt^T\right)\\
  &= \frac{1}{N}\cdot\tr\left(\Ubt\Ubt^T\right)\\
  &= \frac{d}{N}\ .
\end{align*}
\end{proof}

Let $\ellb_i$ denote the $i^{th}$ normalized leverage score, i.e. $\ellb_i=\frac{\ell_i}{d}$. The $\iota^{th}$ block block-leverage scores of $\Ab$ is denoted by $\ellg_\iota$, i.e.
\begin{equation}
\label{norm_bl_lvg_sc_expr}
  \ellg_\iota=\frac{1}{d}\cdot\|\Ib_{(\K_\iota)}\Ubt\|_F^2=\frac{1}{d}\cdot\Big(\sum_{j\in\K_\iota}\ell_j\Big)=\sum_{j\in\K_\iota}\ellb_j\ .
\end{equation}
To prove Lemma our results, we first recall Hoeffding's inequality.

\begin{Thm}[Hoeffding's Inequality, \cite{Mah16}]
\label{Hoef_in}
Let $\{X_i\}_{i=1}^m$ be independent random variables such that $X_i\in[a_i,b_i]$ for all $i\in\N_m$, and let $X=\sum_{i=1}^mX_i$. Then
\begin{equation*}
\label{Hoeffding_id}
  \Pr\left[\big|X-\E[X]\big|\geqslant t\right] \leqslant 2\cdot\exp\left\{\frac{-2t^2}{\sum_{j=1}^m(a_i-b_i)^2}\right\}.
\end{equation*}
\end{Thm}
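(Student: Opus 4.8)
The plan is to prove this through the standard Chernoff-type exponential moment method. First I would introduce a free parameter $\lambda>0$ and apply Markov's inequality to the exponentiated deviation, writing
\[
  \Pr\left[X-\E[X]\geqslant t\right] = \Pr\left[e^{\lambda(X-\E[X])}\geqslant e^{\lambda t}\right] \leqslant e^{-\lambda t}\cdot\E\left[e^{\lambda(X-\E[X])}\right].
\]
Since the $X_i$ are independent, the moment generating factor splits as a product, $\E[e^{\lambda(X-\E[X])}]=\prod_{i=1}^m\E[e^{\lambda(X_i-\E[X_i])}]$, which reduces the problem to controlling each single-variable factor separately.

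The key auxiliary step, and the main obstacle, is \emph{Hoeffding's Lemma}: for a centered random variable $Y$ with $Y\in[a,b]$ and $\E[Y]=0$, one has $\E[e^{\lambda Y}]\leqslant\exp\{\lambda^2(b-a)^2/8\}$. I would establish this by using convexity of $y\mapsto e^{\lambda y}$ to bound $e^{\lambda y}$ on $[a,b]$ by the chord joining its endpoint values, taking expectations, and then analyzing the resulting cumulant generating function as a function of $\lambda$; a second-order Taylor expansion shows that its second derivative is at most $(b-a)^2/4$, which integrates to the claimed exponent. Applying this with $Y=X_i-\E[X_i]\in[a_i-\E[X_i],\,b_i-\E[X_i]]$, an interval of width exactly $b_i-a_i$, yields $\E[e^{\lambda(X_i-\E[X_i])}]\leqslant\exp\{\lambda^2(b_i-a_i)^2/8\}$.

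Combining the product factorization with Hoeffding's Lemma gives
\[
  \Pr\left[X-\E[X]\geqslant t\right] \leqslant \exp\left\{-\lambda t+\frac{\lambda^2}{8}\sum_{i=1}^m(b_i-a_i)^2\right\}.
\]
The final step is to optimize over the free parameter: minimizing the exponent $-\lambda t+\lambda^2 S/8$ with $S=\sum_{i=1}^m(b_i-a_i)^2$ gives $\lambda^{\star}=4t/S$ and optimal exponent $-2t^2/S$, so the upper tail is bounded by $\exp\{-2t^2/S\}$. An identical argument applied to $-X$ controls the lower tail by the same quantity, and a union bound over the two tails produces the factor of $2$ in the stated two-sided inequality. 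Throughout, the delicate part is Hoeffding's Lemma itself, since the cleanest route relies on the convexity and Taylor-expansion argument for the cumulant generating function rather than on any direct moment computation.
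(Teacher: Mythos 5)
Your proposal is correct: the Chernoff exponential-moment method, Hoeffding's Lemma via the chord/convexity bound on $e^{\lambda y}$, optimization at $\lambda^{\star}=4t/S$, and a union bound over the two tails is precisely the canonical proof of this inequality. Note that the paper itself offers no proof to compare against — this theorem is imported as a known result from the cited reference and used as a black box (in the proof of the leverage-score concentration lemma) — and your argument is exactly the standard one that reference would give.
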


\begin{Lemma}
\label{norm_lvg_bd_lemma}
The normalized leverage scores $\{\ellb_i\}_{i=1}^N$ of $\Pib\Ab$ satisfy
$$ \Pr\left[|\ellb_i-1/N|<\epsilon\right] > 1-2\cdot e^{2\epsilon^2/N} $$
for any $\epsilon>0$.
\end{Lemma}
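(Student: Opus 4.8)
The plan is to pair the first-moment computation of Lemma~\ref{Lemma_exp_lev_i} with Hoeffding's inequality (Theorem~\ref{Hoef_in}). First I would recall from \eqref{lvg_sc_expr} that $\ell_i=\eb_i^T\Ubt\Ubt^T\eb_i=\wb_i^T P\wb_i$, where $P=\Ub\Ub^T$ is the rank-$d$ orthogonal projector onto $\mathrm{im}(\Ab)$ and $\wb_i=\Pib^T\eb_i$ is the $i^{th}$ row of $\Pib$ regarded as a column vector. Lemma~\ref{Lemma_exp_lev_i} already gives $\E[\ellb_i]=\E[\ell_i]/d=1/N$, so the task reduces to bounding the deviation of $\ellb_i$ from this mean by an exponential tail.

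The key step is to expose $\ell_i$ as a sum of independent, bounded contributions so that Theorem~\ref{Hoef_in} applies. Specializing to the sign/Hadamard ensemble $\Pib=\Hbh\Db$, the coordinates $\wb_{i,k}=\pm1/\sqrt{N}$ are independent, whence $\wb_{i,k}^2=1/N$ is deterministic and $\ell_i=\sum_k\wb_{i,k}^2P_{kk}+\sum_{k\neq k'}\wb_{i,k}\wb_{i,k'}P_{kk'}=\tfrac1N\tr(P)+(\text{mean-zero part})=\tfrac dN+(\text{mean-zero part})$, recovering the mean and isolating the fluctuation. I would then control the magnitude of each coordinate's contribution using the flattening estimate behind Corollary~\ref{cor_fl_lem}, i.e.\ that every entry $\Ubt_{ij}=O(1/\sqrt N)$ with high probability, so that the per-coordinate ranges $(a_k-b_k)$ are $O(1/N)$ and $\sum_k(a_k-b_k)^2=\Theta(1/N)$. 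Plugging these ranges into Theorem~\ref{Hoef_in} with $t=\epsilon$ yields the two-sided bound $\Pr[|\ellb_i-1/N|\geqslant\epsilon]\leqslant 2e^{-\Theta(\epsilon^2 N)}$, which is the claimed concentration.

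The hard part will be securing the independence that Theorem~\ref{Hoef_in} demands. For a general (Haar) orthonormal $\Pib$ the coordinates of $\wb_i$ lie on the unit sphere and are \emph{not} independent, and even in the sign ensemble the true fluctuation of $\ell_i$ is the second-order Rademacher chaos $\sum_{k\neq k'}\wb_{i,k}\wb_{i,k'}P_{kk'}$ rather than a sum of independent terms. I would resolve this either by a decoupling/linearization step that, after conditioning, reduces the chaos to a conditionally linear form to which Hoeffding applies, or---for the Haar case---by passing to the Gaussian representation $\wb_i=\gb/\|\gb\|_2$ and bounding the ratio $\|P\gb\|_2^2/\|\gb\|_2^2$ through $\chi^2$ tail estimates. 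Pinning down the independent bounded summands, verifying $\E[X]=1/N$, and computing $\sum_k(a_k-b_k)^2$ from the flattened entries is the main technical obstacle; once that decomposition is in place, the inequality is an immediate application of Theorem~\ref{Hoef_in}.
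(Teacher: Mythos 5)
Your proposal has a genuine gap, and it is precisely the step you yourself flag as the hard part: neither of your decompositions can be completed with the tools you cite. For the sign ensemble, the fluctuation $\sum_{k\neq k'}\wb_{i,k}\wb_{i,k'}P_{kk'}$ is a second-order Rademacher chaos, not a sum of independent bounded terms, so Theorem \ref{Hoef_in} does not apply to it; a decoupling argument or a Hanson--Wright-type inequality would be needed, and none of that is carried out. For Haar-distributed $\Pib$ the coordinates of a row are dependent as well, and the $\chi^2$-ratio route is only named, not executed. Moreover, your target bound $2e^{-\Theta(\epsilon^2 N)}$ is not what the lemma asserts --- the lemma's exponent is $\epsilon^2/N$, weaker by a factor of $N^2$ --- and your specialization to $\Pib=\Hbh\Db$ narrows the setting: this lemma lives in the general-orthonormal-sketch appendix and feeds Theorem \ref{subsp_emb_thm_Unif}, not the Hadamard-specific Theorem \ref{subsp_emb_thm}, whose analysis the paper handles separately via the Flattening Lemma.

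The paper's own proof avoids all of this machinery. Since $\ell_i\in[0,d]$ deterministically, $\ellb_i=\ell_i/d$ is a single random variable bounded in $[0,1]$, with $\E[\ellb_i]=1/N$ by Lemma \ref{Lemma_exp_lev_i}; applying Theorem \ref{Hoef_in} with $m=1$ (one bounded summand, so no independence structure is required) gives
$$ \Pr\left[|\ellb_i-1/N|\geqslant\epsilon\right]\leqslant 2\cdot e^{-2\epsilon^2}\leqslant 2\cdot e^{-2\epsilon^2/N} \ , $$
which is the stated bound. In other words, the lemma never requires exposing the quadratic-form structure of $\ell_i$ at all; you overshot toward a much stronger concentration statement, and the obstacle you ran into is real but entirely unnecessary for the claim being made. (Incidentally, the exponent in the lemma's statement should read $e^{-2\epsilon^2/N}$; the displayed $e^{2\epsilon^2/N}$ is a sign typo, as the paper's proof confirms.)
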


\begin{proof}{[Lemma \ref{norm_lvg_bd_lemma}]}
We know that $\ell_i\in[0,d]$ for each $i\in\N_N$, thus $\ellb_i\in[0,1]$ for each $i$. By Lemma \ref{Lemma_exp_lev_i}, it follows that
$$ \E[\ellb_i]=\E[\ell_i/d]=\frac{1}{d}\cdot\E[\ell_i]=\frac{1}{N}. $$

Now, fix an $\epsilon>0$. By applying Theorem \ref{Hoef_in}, we get
\begin{equation*}
\label{norm_lvg_bd}
  \Pr\left[|\ellb_i-1/N|\geqslant\epsilon\right] \leqslant 2\cdot e^{-2\epsilon^2/N}
\end{equation*}
thus
\begin{equation*}
  \Pr\left[|\ellb_i-1/N|<\epsilon\right] > 1-2\cdot e^{-2\epsilon^2/N}\ .
\end{equation*}
\end{proof}

\begin{Lemma}
\label{bd_block_lvg_Unif}
  For all $\iota\in\N_K$ and $\K_\iota\subsetneq\N_N$ of size $\tau=N/K$
  $$ \Pr\left[\big|\ellg_\iota-1/K\big|<\tau\epsilon\right] = \Pr\left[\ellg_\iota<_{N\epsilon}1/K\right] > 1-2\tau\cdot e^{-2\epsilon^2/N} $$
  for $\epsilon>0$.
\end{Lemma}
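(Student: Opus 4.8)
The plan is to reduce this block-level concentration statement to the per-coordinate tail bound already established in Lemma~\ref{norm_lvg_bd_lemma}, combining it with a triangle inequality and a union bound over the $\tau$ rows that make up the block $\K_\iota$. The centering constant $1/K$ comes out for free: recalling from \eqref{norm_bl_lvg_sc_expr} that $\ellg_\iota=\sum_{j\in\K_\iota}\ellb_j$, linearity of expectation together with $\E[\ellb_j]=1/N$ (Lemma~\ref{Lemma_exp_lev_i}) gives $\E[\ellg_\iota]=\tau/N=1/K$, using $\tau=N/K$.

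The main structural step I would carry out is to contain the desired block event inside the simultaneous concentration of its $\tau$ summands. By the triangle inequality, on the event that $|\ellb_j-1/N|<\epsilon$ for every $j\in\K_\iota$ we have
$$\big|\ellg_\iota-1/K\big|=\Big|\sum_{j\in\K_\iota}(\ellb_j-1/N)\Big|\leqslant\sum_{j\in\K_\iota}\big|\ellb_j-1/N\big|<\tau\epsilon,$$
which is exactly $\ellg_\iota<_{N\epsilon}1/K$ in the footnote's notation, since $N\epsilon\cdot(1/K)=N\epsilon\cdot(\tau/N)=\tau\epsilon$; this also verifies the claimed equality of the two probabilities in the statement. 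Hence the target event contains the intersection of the $\tau$ per-row events.

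Finally I would apply the complement form of the union bound. Lemma~\ref{norm_lvg_bd_lemma} supplies $\Pr[|\ellb_j-1/N|\geqslant\epsilon]\leqslant 2e^{-2\epsilon^2/N}$ for each fixed $j$, so
$$\Pr\big[\exists\,j\in\K_\iota:\ |\ellb_j-1/N|\geqslant\epsilon\big]\leqslant\tau\cdot 2e^{-2\epsilon^2/N},$$
and taking complements yields $\Pr[|\ellg_\iota-1/K|<\tau\epsilon]>1-2\tau e^{-2\epsilon^2/N}$, as required.

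The one point that needs care, and the reason I would route through a union bound rather than applying Theorem~\ref{Hoef_in} directly to the sum $\sum_{j\in\K_\iota}\ellb_j$, is that the individual leverage scores $\{\ellb_j\}$ are \emph{not} independent: they are coupled through the shared rows of $\Ubt=\Pib\Ub$ and in fact satisfy the hard constraint $\sum_{j}\ell_j=d$. Because Hoeffding's hypothesis of independence fails at the block level, I sidestep it via the union bound, paying only the mild price of the extra factor $\tau$ in the failure probability — which is precisely the $\tau$ appearing in the stated bound.
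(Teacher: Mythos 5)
Your proof is correct, and its skeleton matches the paper's: both contain the block event $\{|\ellg_\iota-1/K|<\tau\epsilon\}$ inside the intersection $\bigwedge_{j\in\K_\iota}\{|\ellb_j-1/N|<\epsilon\}$ (via the triangle inequality, which you spell out and the paper leaves implicit) and then invoke the per-row bound of Lemma~\ref{norm_lvg_bd_lemma}. Where you genuinely diverge is the final bookkeeping step. The paper lower-bounds the probability of the intersection by the product $\left(1-2e^{-2\epsilon^2/N}\right)^\tau$ and then applies the binomial approximation $(1-x)^\tau\approx 1-\tau x$; the product step tacitly treats the $\tau$ per-row events as independent (or at least positively associated), which is exactly the independence that — as you correctly point out — fails, since the leverage scores are coupled by the constraint $\sum_j \ell_j=d$, and the final step is an approximation rather than an inequality (it could be made one-sided via Bernoulli's inequality, but the paper does not say so). Your union bound over the complements requires no independence whatsoever and delivers the bound $1-2\tau e^{-2\epsilon^2/N}$ exactly, so your route is both more elementary and strictly more rigorous while arriving at the same quantitative conclusion; the only cosmetic discrepancy is that the union bound yields $\geqslant$ rather than the strict $>$ in the statement, a looseness the paper's own proof shares. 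Your verification that the event $\{|\ellg_\iota-1/K|<\tau\epsilon\}$ coincides with $\{\ellg_\iota<_{N\epsilon}1/K\}$, via $N\epsilon\cdot(1/K)=\tau\epsilon$, is also a point the paper asserts without checking.
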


\begin{proof}{[Lemma \ref{bd_block_lvg_Unif}]}
By Lemma \eqref{norm_lvg_bd_lemma}, it follows that
\begin{align*}
  \Pr\left[\big|\ellg_\iota-1/K\big|<\tau\epsilon\right] &> \Pr\left[\bigwedge_{j\in\K_\iota}\big\{|\ellb_i-1/N|<\epsilon\big\}\right]\\
  &> \left(1-2\cdot e^{-2\epsilon^2/N}\right)^\tau\\
  &\overset{\Join}{\approx} 1-2\tau\cdot e^{-2\epsilon^2/N}
\end{align*}
where in $\Join$ we applied the binomial approximation.
\end{proof}

The proof of Corollary \ref{subsp_emb_thm_Unif} is a direct consequence of Lemma \ref{bd_block_lvg_Unif} and Theorem \ref{subsp_emb_thm_lvg}. We note that in our statement we make the assumption that $\ell_\iota=1/K$ for all $\iota$, even though this is not the case, as Lemma \eqref{bd_block_lvg_Unif} allows a small deviation. One could generalize Theorem \ref{subsp_emb_thm_lvg} to accommodate sampling according to \textit{approximate} block-leverage scores, e.g. \cite{DMMW12}. This is not studied in our work.

\begin{Thm}
\label{subsp_emb_thm_lvg}
The sketching matrix $\Sbp$ constructed by sampling blocks of $\Ab$ with replacement according to their normalized block-leverage scores $\{\ellg_\iota\}_{\iota=1}^K$ and rescaling each sampled block by $\sqrt{1/(q\ellg_\iota)}$, is a $(1\rpm\epsilon)$-embedding of $\Ab$; as defined in \eqref{subsp_emb_id}. Specifically, for $q=\Theta(\frac{d}{\tau}\log{(2d)}/\epsilon^2)$:
\begin{equation*}
  \Pr\big[\|\Ib_d-\Ub^T\Sbp^T\Sbp \Ub\|_2\leqslant\epsilon\big]\geqslant 1-e^{\Theta(1)}.
\end{equation*}
\end{Thm}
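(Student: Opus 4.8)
The plan is to realize the normalized Gram matrix $\Ub^T\Sbp^T\Sbp\Ub$ as a sum of $q$ independent, random, positive semidefinite $d\times d$ matrices, and then to invoke a matrix Chernoff bound to show that this sum concentrates around $\Ib_d$ in spectral norm. Writing $\iota_1,\dots,\iota_q$ for the indices of the $q$ sampled blocks (drawn i.i.d.\ with $\Pr[\iota_k=\iota]=\ellg_\iota$), the rescaling by $\sqrt{1/(q\ellg_\iota)}$ gives
\[
  \Ub^T\Sbp^T\Sbp\Ub \;=\; \sum_{k=1}^q X_k, \qquad X_k \coloneqq \frac{1}{q\,\ellg_{\iota_k}}\,\Ub_{(\K_{\iota_k})}^T\Ub_{(\K_{\iota_k})}\succeq 0 .
\]
First I would verify unbiasedness: because $\Pr[\iota_k=\iota]=\ellg_\iota$, the sampling weights cancel the rescaling, so $\E[X_k]=\tfrac1q\sum_{\iota=1}^K\Ub_{(\K_\iota)}^T\Ub_{(\K_\iota)}=\tfrac1q\Ub^T\Ub=\tfrac1q\Ib_d$, whence $\E\big[\Ub^T\Sbp^T\Sbp\Ub\big]=\Ib_d$. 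This is the block analogue of Lemma \ref{lemma_exp} and identifies $\Ib_d$ as the target.

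The main step is to apply a matrix Chernoff inequality to $\sum_k X_k$. Since $\lambda_{\max}(\E[\sum_k X_k])=\lambda_{\min}(\E[\sum_k X_k])=1$, the bound takes the form $\Pr\big[\|\sum_k X_k-\Ib_d\|_2\geqslant\epsilon\big]\leqslant 2d\cdot\exp(-c\,\epsilon^2/R)$, where $R$ is any almost-sure upper bound on $\lambda_{\max}(X_k)$. Everything hinges on the choice of $R$: using $\ellg_\iota=\tfrac1d\|\Ub_{(\K_\iota)}\|_F^2$,
\[
  \lambda_{\max}(X_k)=\frac{\|\Ub_{(\K_{\iota_k})}\|_2^2}{q\,\ellg_{\iota_k}}
  =\frac{d}{q}\cdot\frac{\|\Ub_{(\K_{\iota_k})}\|_2^2}{\|\Ub_{(\K_{\iota_k})}\|_F^2},
\]
so the quantity to control is the spectral-to-Frobenius ratio of each sampled block. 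Feeding $R$ into the Chernoff tail and solving $2d\exp(-c\epsilon^2/R)\leqslant\text{const}$ for the number of blocks is what produces the stated $q=\Theta\!\big(\tfrac{d}{\tau}\log(2d)/\epsilon^2\big)$ and the claimed high-probability guarantee.

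I expect this ratio bound to be the crux. Generically one only has $\|\Ub_{(\K_\iota)}\|_2^2\leqslant\|\Ub_{(\K_\iota)}\|_F^2$, i.e.\ ratio at most $1$, which yields only the weaker $R\leqslant d/q$ and hence $q=\Theta(d\log(2d)/\epsilon^2)$ blocks. The factor-$\tau$ improvement requires each $\tau\times d$ block to be \emph{balanced}, with its $\min(\tau,d)$ singular values roughly equal, so that $\|\Ub_{(\K_\iota)}\|_2^2\lesssim\tfrac1\tau\|\Ub_{(\K_\iota)}\|_F^2$ and therefore $R\lesssim d/(q\tau)$. This balance is precisely what the flattening of the (block-)leverage scores supplies: after the orthonormal rotation $\Pib$ is applied, the row scores are pushed toward $d/N$ and the mass of each block is spread evenly across its singular directions (cf.\ Lemma \ref{norm_lvg_bd_lemma} and the surrounding estimates). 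This is why the theorem is stated with the $\tfrac{d}{\tau}$ factor, and why it is invoked in Theorems \ref{subsp_emb_thm_Unif} and \ref{subsp_emb_thm} only once the rotation has flattened the scores. The residual work — tracking the Chernoff constants and reading off the failure probability — is then routine.
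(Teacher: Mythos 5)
Your high-level strategy---expressing $\Ub^T\Sbp^T\Sbp\Ub$ as a sum of $q$ i.i.d.\ PSD random matrices, verifying $\E[X_k]=\tfrac1q\Ib_d$, and invoking a matrix Chernoff bound---is the same machinery the paper itself relies on for its embedding guarantees (cf.\ Theorem \ref{matr_Chern} and the proof of Proposition \ref{prop_SRHT_b}; the paper defers the proof of Theorem \ref{subsp_emb_thm_lvg} itself to the extended version \cite{CMPH22}). Your decomposition, the unbiasedness computation, and the identity $\lambda_{\max}(X_k)=\frac{d}{q}\cdot\|\Ub_{(\K_{\iota_k})}\|_2^2\big/\|\Ub_{(\K_{\iota_k})}\|_F^2$ are all correct, and you have correctly located where the factor $\tfrac{d}{\tau}$ must come from.

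The problem is that the step you flag as the crux is left unproven, and it cannot be proven from the theorem's hypotheses. The theorem is stated for an arbitrary $\Ab$ and an arbitrary partition: no rotation $\Pib$ has been applied, so no flattening is available to appeal to. Worse, flat leverage scores do \emph{not} imply the within-block balance you need: a block consisting of $\tau$ identical rows, each of norm $\sqrt{d/N}$, has perfectly flat row scores yet is rank one, so its spectral-to-Frobenius ratio is $1$, not $\approx 1/\tau$. The balance condition is really a statement that the rows \emph{inside} each block are nearly orthogonal, which neither Lemma \ref{norm_lvg_bd_lemma} nor block-leverage-score sampling provides. Moreover, without such an assumption the claimed rate is genuinely false. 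Take $\Ub$ whose $\iota^{th}$ block, for $\iota\in\N_d$, contains the single nonzero row $\eb_\iota^T$ (all other blocks zero). Then $\ellg_\iota=1/d$ for these $d$ blocks, sampling is uniform over them, and any $(1\rpm\epsilon)$-embedding with $\epsilon<1$ must sample \emph{every} one of the $d$ blocks at least once; by coupon collecting this forces $q=\Omega(d\log d)$, which exceeds $\Theta\big(\tfrac{d}{\tau}\log(2d)/\epsilon^2\big)$ once $\tau\gg 1/\epsilon^2$. So what your argument actually establishes is the weaker bound $q=\Theta(d\log(2d)/\epsilon^2)$ (equivalently $r=q\tau=\Theta(\tau d\log(2d)/\epsilon^2)$ rows); the stated $\tfrac{d}{\tau}$ rate requires an explicit incoherence hypothesis on the blocks, added to the statement, not inferred. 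Incidentally, your diagnosis exposes the same subtlety in the paper's own sibling proof: the variance computation preceding Proposition \ref{prop_SRHT_b} expands $\E[\Xb^T\Xb]$ row-wise and silently drops the within-block cross terms $\langle\Vbh_l,\Vbh_{l'}\rangle\Vbh_l^T\Vbh_{l'}$, $l\neq l'$, which is precisely where the block-balance issue hides.
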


\section{Proofs of Section \ref{block_SRHT_sec}}

In this appendix, we present two lemmas which we use to bound the entries of $\Vbh\coloneqq\Hbh\Db\Ub$, and its \textit{leverage scores} $\ell_i\coloneqq\|\Vbh_{(i)}\|_2^2$, for which $\sum_{i=1}^N\ell_i=d$. Leverage scores induce a sampling distribution which has proven to be useful in linear regression \cite{DMMW12,Woo14,Mah16,Wan15} and GC \cite{CPH20a}. From these lemmas, we deduce that the leverage scores of $\Hbh\Db\Ab$ are close to being uniform, implying that the \textit{block-leverage scores} \cite{CPH20a} are also uniform, which is precisely what Lemma \ref{bd_lvg_Had} states.

Lemma \ref{fl_lem} is a variant of the Flattening Lemma \cite{AC06,Mah16}, a key result to Hadamard based sketching algorithms, which justifies uniform sampling. In the proof, we make use of the Azuma-Hoeffding inequality; a concentration result for the values of martingales that have bounded differences. We also recall a matrix Chernoff bound \cite[Fact 1]{Woo14}, which we apply to prove our subspace embedding guarantees. Finally, we present proofs of Proposition \ref{prop_SRHT_b} and Theorems \ref{GC_SGD_thm}, \ref{subsp_emb_thm}.

\begin{Lemma}[Azuma-Hoeffding Inequality, \cite{Mah16}]
\label{Az_Hoef_in}
For zero mean random variable $Z_i$ (or $Z_0,Z_1,\cdots,Z_m$ a martingale sequence of random variables), bounded above by $|Z_i|\leqslant \beta_i$ for all $i$ with probability 1, we have
$$ \Pr\bigg[\big|\sum_{j=0}^m Z_j\big|>t\bigg] \leqslant 2\exp\left\{\frac{t^2}{2\cdot\big(\sum_{j=0}^m(\beta_j)^2\big)}\right\}. $$
\end{Lemma}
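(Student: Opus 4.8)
The plan is to prove this by the exponential-moment (Chernoff) method, specialized to martingale differences. First I would fix $\lambda>0$ and apply Markov's inequality to the exponentiated partial sum, obtaining $\Pr\big[\sum_{j=0}^m Z_j>t\big]\leqslant e^{-\lambda t}\cdot\E\big[\exp\{\lambda\sum_{j=0}^m Z_j\}\big]$. Everything then reduces to controlling the moment generating function $\E[\exp\{\lambda\sum_j Z_j\}]$, which I would do by peeling off one increment at a time.

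For the peeling step I would work with the natural filtration $\Ff_{j-1}=\sigma(Z_0,\dots,Z_{j-1})$ and use the martingale-difference property $\E[Z_j\mid\Ff_{j-1}]=0$ together with the tower rule, so that $\E[\exp\{\lambda\sum_{j=0}^m Z_j\}]=\E\big[\exp\{\lambda\sum_{j=0}^{m-1} Z_j\}\cdot\E[e^{\lambda Z_m}\mid\Ff_{m-1}]\big]$. The inner conditional MGF is bounded by Hoeffding's lemma: since $|Z_j|\leqslant\beta_j$ almost surely, $Z_j$ lies in an interval of length $2\beta_j$ and has conditional mean zero, hence $\E[e^{\lambda Z_j}\mid\Ff_{j-1}]\leqslant\exp\{\lambda^2\beta_j^2/2\}$ pointwise. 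Iterating this estimate through $j=m,m-1,\dots,0$ telescopes to $\E[\exp\{\lambda\sum_{j=0}^m Z_j\}]\leqslant\exp\{\tfrac{\lambda^2}{2}\sum_{j=0}^m\beta_j^2\}$.

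Combining the two displays gives $\Pr\big[\sum_j Z_j>t\big]\leqslant\exp\{-\lambda t+\tfrac{\lambda^2}{2}\sum_j\beta_j^2\}$ for every $\lambda>0$, and I would then optimize over $\lambda$, taking $\lambda=t/\sum_j\beta_j^2$ to obtain the one-sided tail $\exp\{-t^2/(2\sum_j\beta_j^2)\}$. Running the identical argument on the martingale differences $-Z_j$ controls the lower tail, and a union bound over the two tails yields the factor of $2$ and the stated two-sided inequality. (Note that for the bound to be meaningful the exponent must carry a minus sign, $2\exp\{-t^2/(2\sum_j\beta_j^2)\}$, which is exactly what this derivation produces.)

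The step requiring the most care is Hoeffding's lemma for the conditional MGF. Establishing $\E[e^{\lambda X}]\leqslant e^{\lambda^2(b-a)^2/8}$ for a mean-zero $X\in[a,b]$ rests on bounding $e^{\lambda x}$ on $[a,b]$ by its chord (convexity) and then applying a second-order Taylor estimate to the logarithm of the resulting expression; the only subtlety specific to the martingale setting is that this must hold conditionally and uniformly over the realization of $\Ff_{j-1}$, which is precisely what the almost-sure bound $|Z_j|\leqslant\beta_j$ secures.
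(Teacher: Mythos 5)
Your proof is correct, and in fact the paper contains no proof of this statement to compare against: the lemma is imported as a known result from \cite{Mah16} and used as a black box in the proof of the Flattening Lemma (Lemma \ref{fl_lem}). Your argument --- Markov's inequality on the exponentiated sum, the tower rule with the filtration $\Ff_{j-1}$ to peel off one increment at a time, Hoeffding's lemma $\E[e^{\lambda Z_j}\mid\Ff_{j-1}]\leqslant e^{\lambda^2\beta_j^2/2}$ for the conditional MGF, optimization at $\lambda=t/\sum_j\beta_j^2$, and a union bound over the two tails --- is the standard derivation, and every step is sound. Two remarks. First, you are right to flag the sign: as printed in the paper the exponent $t^2/\big(2\sum_j\beta_j^2\big)$ is positive, making the right-hand side at least $2$ and the bound vacuous; the correct statement carries a minus sign, which is what your optimization produces and is also the form the paper actually uses when it invokes the lemma (the exponent in \eqref{pr_sum_Zj} is $-N\rho^2/2$). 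Second, your reading of the hypothesis is the right one: for the inequality to hold in the martingale case, the $Z_j$ must be the martingale \emph{differences}, so that $\E[Z_j\mid\Ff_{j-1}]=0$ almost surely; the paper's phrasing ``martingale sequence of random variables'' is loose on this point, and your tower-property peeling makes the needed assumption explicit.
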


\begin{Thm}[Matrix Chernoff Bound, {\cite[Fact 1]{Woo14}}]
\label{matr_Chern}
  Let $\Xb_1,\cdots,\Xb_q$ be independent copies of a symmetric random matrix $\Xb\in\R^{d\times d}$, with $\E[\Xb]=0, \|\Xb\|_2\leqslant \gamma$, $\|\E[\Xb^T\Xb]\|_2\leqslant \sigma^2$. Let $\Zb=\frac{1}{q}\sum_{i=1}^q\Xb_i$. Then, $\forall\epsilon>0$:
  \begin{equation}
  \label{matr_Chern_expr}
    \Pr\Big[\|\Zb\|_2>\epsilon\Big]\leqslant2d\cdot\exp\left(-\frac{q\epsilon^2}{\sigma^2+\gamma\epsilon/3}\right).
  \end{equation}
\end{Thm}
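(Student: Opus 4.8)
The plan is to prove this as a \emph{matrix Bernstein inequality}, using the matrix Laplace transform method (Ahlswede--Winter / Tropp), which lifts the classical scalar Bernstein argument to the non-commutative setting. Since each $\Xb_i$ is symmetric, so is $\Zb$, and therefore $\|\Zb\|_2=\max\{\lambda_{\max}(\Zb),\lambda_{\max}(-\Zb)\}$. Because the family $\{-\Xb_i\}$ satisfies exactly the same hypotheses as $\{\Xb_i\}$, it suffices to bound $\Pr[\lambda_{\max}(\Zb)>\epsilon]$ and then double; that doubling, together with the dimensional factor $d$ that appears below, is what produces the prefactor $2d$. Note also that symmetry gives $\Xb_i^T\Xb_i=\Xb_i^2$, so the variance hypothesis reads $\|\E[\Xb_i^2]\|_2\leqslant\sigma^2$.

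First I would apply Markov's inequality to the exponentiated top eigenvalue: for every $\theta>0$,
\[ \Pr[\lambda_{\max}(\Zb)\geqslant\epsilon]\leqslant e^{-\theta\epsilon}\cdot\E\big[\tr\exp(\theta\Zb)\big]. \]
Writing $\theta\Zb=\sum_{i=1}^q(\theta/q)\Xb_i$ and invoking subadditivity of the matrix cumulant generating function (a consequence of Lieb's concavity theorem, equivalently the iterated Golden--Thompson inequality), I would bound
\[ \E\big[\tr\exp(\theta\Zb)\big]\leqslant \tr\exp\Big(\sum_{i=1}^q\log\E\,e^{(\theta/q)\Xb_i}\Big). \]

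The crux is the per-summand operator MGF estimate. Setting $a=\theta/q$, for a zero-mean symmetric $\Xb_i$ with $\|\Xb_i\|_2\leqslant\gamma$ one has $\log\E\,e^{a\Xb_i}\preceq \frac{e^{a\gamma}-a\gamma-1}{\gamma^2}\,\E[\Xb_i^2]$. Summing and using $\big\|\sum_i\E[\Xb_i^2]\big\|_2\leqslant q\sigma^2$, so that $\sum_i\E[\Xb_i^2]\preceq q\sigma^2\,\Ib_d$, gives $\sum_i\log\E\,e^{a\Xb_i}\preceq \frac{e^{a\gamma}-a\gamma-1}{\gamma^2}q\sigma^2\,\Ib_d$, whence $\tr\exp(\cdot)\leqslant d\cdot\exp\!\big(\frac{e^{a\gamma}-a\gamma-1}{\gamma^2}q\sigma^2\big)$. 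Substituting back yields
\[ \Pr[\lambda_{\max}(\Zb)\geqslant\epsilon]\leqslant d\exp\!\Big(q\big[\tfrac{e^{a\gamma}-a\gamma-1}{\gamma^2}\sigma^2-a\epsilon\big]\Big). \]

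It then remains to optimize the scalar exponent over $a>0$, which is identical to the scalar Bernstein calculation. Using the elementary bound $e^{x}-x-1\leqslant \frac{x^2/2}{1-x/3}$ for $0\leqslant x<3$ with $x=a\gamma$, and taking $a=\epsilon/(\sigma^2+\gamma\epsilon/3)$, the bracketed quantity becomes $-\tfrac12\,\epsilon^2/(\sigma^2+\gamma\epsilon/3)$; doubling for the two tails produces exactly the claimed form (up to the standard constant in the numerator). The main obstacle lies entirely in the genuinely non-commutative steps, namely justifying the cumulant subadditivity through Lieb's theorem and establishing the per-summand operator MGF inequality, since matrices do not commute and one cannot naively exponentiate the sum; everything downstream is the classical scalar Bernstein optimization.
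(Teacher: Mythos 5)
The paper never proves this theorem: it is imported verbatim as an external fact (\cite[Fact 1]{Woo14}) and applied as a black box in the proof of Proposition \ref{prop_SRHT_b}, so there is no internal proof to compare you against. Judged as a from-scratch proof, your architecture is the standard and correct one --- it is precisely the Tropp/Ahlswede--Winter matrix Laplace transform argument: reduce to one tail using the symmetry of the hypotheses under $\Xb\mapsto-\Xb$, bound $\Pr[\lambda_{\max}(\Zb)\geqslant\epsilon]\leqslant e^{-\theta\epsilon}\,\E[\tr\exp(\theta\Zb)]$, control the trace exponential by subadditivity of the matrix cumulant generating function, insert the operator MGF bound $\log\E\,e^{a\Xb}\preceq\frac{e^{a\gamma}-a\gamma-1}{\gamma^2}\,\E[\Xb^2]$, aggregate the variances via $\sum_i\E[\Xb_i^2]\preceq q\sigma^2\,\Ib_d$, and run the scalar Bernstein optimization. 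All of these steps are sound (you also implicitly use monotonicity of $\tr\exp$ under the Loewner order, which is standard). One small imprecision: Lieb's theorem and iterated Golden--Thompson are not ``equivalent'' routes in general --- the latter yields the weaker variance parameter $\sum_i\|\E[\Xb_i^2]\|_2$ rather than $\|\sum_i\E[\Xb_i^2]\|_2$ --- but since the $\Xb_i$ are i.i.d.\ copies the two coincide here, so no harm is done.

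The genuine issue is the constant, which you dismiss as ``the standard constant in the numerator.'' Your optimization yields $\Pr[\|\Zb\|_2>\epsilon]\leqslant 2d\cdot\exp\left(-\frac{q\epsilon^2}{2(\sigma^2+\gamma\epsilon/3)}\right)$, whereas the statement claims the exponent $-\frac{q\epsilon^2}{\sigma^2+\gamma\epsilon/3}$, a factor of $2$ stronger. This is not slack you could recover by tightening your argument; the statement as printed cannot be proved because it is false in general. Already in the scalar case $d=1$ with $\Xb$ a Rademacher variable ($\gamma=\sigma^2=1$), Cram\'er's theorem gives the exact rate
\begin{equation*}
\lim_{q\to\infty}-\tfrac{1}{q}\log\Pr[\Zb>\epsilon]=\tfrac{1+\epsilon}{2}\log(1+\epsilon)+\tfrac{1-\epsilon}{2}\log(1-\epsilon)=\tfrac{\epsilon^2}{2}+O(\epsilon^4),
\end{equation*}
so for small fixed $\epsilon$ the true tail behaves like $e^{-q\epsilon^2/2}$ and eventually exceeds the claimed bound $2e^{-q\epsilon^2/(1+\epsilon/3)}$ as $q$ grows; the Bernstein form you derived, with the $1/2$, is the correct one. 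The discrepancy is an artifact inherited from the source statement and is immaterial for this paper, since the fact is only invoked inside $\Theta(\cdot)$-style sampling bounds (Proposition \ref{prop_SRHT_b}, Theorems \ref{subsp_emb_thm_Unif} and \ref{subsp_emb_thm}) where constant factors are absorbed. But your write-up should state explicitly that what you prove is the theorem with $2(\sigma^2+\gamma\epsilon/3)$ in the denominator, rather than presenting the mismatch as a routine bookkeeping constant.
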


\begin{Lemma}[Flattening Lemma]
\label{fl_lem}
  For $\yb\in\R^N$ a fixed (orthonormal) column vector of $\Ub$, and $\Db\in\{0,\rpm1\}^{N\times N}$ with random equi-probable diagonal entries of $\rpm1$, we have:
  \begin{equation}
  \label{flat_lem_id}
    \Pr\left[\|\Hbh\Db\cdot\yb\|_\infty> C\sqrt{\log(Nd/\delta)/N}\right]\leqslant\frac{\delta}{2d}
  \end{equation}
  for $0<C\leqslant \sqrt{2+\log(16)/\log(Nd/\delta)}$ a constant.
\end{Lemma}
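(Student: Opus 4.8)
The plan is to control $\|\Hbh\Db\yb\|_\infty$ one coordinate at a time and then pay for the maximum with a union bound over the $N$ rows. First I would fix a row index $i\in\N_N$ and expand the $i^{th}$ entry of the transformed vector as
$$ (\Hbh\Db\yb)_i = \sum_{j=1}^N \Hbh_{ij}\,\Db_{jj}\,\yb_j \, , $$
where the diagonal signs $\Db_{jj}=:d_j$ are i.i.d.\ Rademacher and $|\Hbh_{ij}|=1/\sqrt{N}$ for all $i,j$ by definition of the normalized Hadamard matrix. Setting $Z_j\coloneqq \Hbh_{ij}\,d_j\,\yb_j$, each $Z_j$ has mean zero (since $\E[d_j]=0$), the $Z_j$ are independent, and $|Z_j|\leqslant |\yb_j|/\sqrt{N}=:\beta_j$. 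Because $\yb$ is a column of the orthonormal matrix $\Ub$ it is a unit vector, so the crucial quantity
$$ \sum_{j=1}^N \beta_j^2 = \frac{\|\yb\|_2^2}{N} = \frac{1}{N} $$
is exactly $1/N$; this normalization is what makes the bound clean.

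Next I would view the partial sums $\sum_{j\leqslant k}Z_j$ as a martingale with bounded increments $\beta_j$ and apply the Azuma--Hoeffding inequality (Lemma \ref{Az_Hoef_in}) to $(\Hbh\Db\yb)_i=\sum_j Z_j$, which gives
$$ \Pr\big[\,|(\Hbh\Db\yb)_i|>t\,\big] \leqslant 2\exp\left\{-\frac{t^2}{2\sum_j\beta_j^2}\right\} = 2\exp\left\{-\frac{Nt^2}{2}\right\} \, . $$
Plugging in the threshold $t = C\sqrt{\log(Nd/\delta)/N}$ collapses the exponent to $-\tfrac{C^2}{2}\log(Nd/\delta)$, so the per-coordinate tail is $2\,(Nd/\delta)^{-C^2/2}$.

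To move from a single entry to the $\ell_\infty$ norm I would union-bound over the $N$ rows, obtaining
$$ \Pr\big[\,\|\Hbh\Db\yb\|_\infty>t\,\big] \leqslant 2N\,(Nd/\delta)^{-C^2/2} \, , $$
and then choose $C$ so that the right-hand side is at most $\delta/(2d)$. Taking logarithms, the requirement $2N(Nd/\delta)^{-C^2/2}\leqslant \delta/(2d)$ rearranges to $\tfrac{C^2}{2}\log(Nd/\delta)\geqslant \log(Nd/\delta)+\log 4$, i.e.\ $C^2\geqslant 2 + \log(16)/\log(Nd/\delta)$, which recovers precisely the stated threshold for the constant $C$. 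I do not expect a genuine obstacle here: the argument is essentially routine once the pieces are in place, and the only points demanding care are the exact normalization $\sum_j\beta_j^2=1/N$ coming from $\|\yb\|_2=1$, and the final bookkeeping that pins the constant $C$ to the prescribed value. (The factor $\delta/(2d)$ is what later permits a further union bound over the $d$ columns of $\Ub$ to flatten all block-leverage scores simultaneously with failure probability $\delta/2$.)
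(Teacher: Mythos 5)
Your proof is correct and follows essentially the same route as the paper's: fix a coordinate, write $(\Hbh\Db\yb)_i=\sum_{j=1}^N Z_j$ with $Z_j=\Hbh_{ij}\Db_{jj}\yb_j$ zero-mean, independent and bounded by $|\yb_j|/\sqrt{N}$, apply Azuma--Hoeffding using $\sum_j\beta_j^2=\|\yb\|_2^2/N=1/N$, and finish with a union bound over the $N$ rows. One point in your favor: your bookkeeping correctly identifies the condition on the constant as $C^2\geqslant 2+\log(16)/\log(Nd/\delta)$, a \emph{lower} bound, whereas the paper's statement (and its annotation that the final inequality ``follows from the upper bound on $C$'') writes the inequality in the opposite direction --- an apparent typo, since at the stated threshold value the two coincide and the bound holds with equality.
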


\begin{proof}{[Lemma \ref{fl_lem}]}
Fix $i$ and define $Z_j=\Hbh_{ij}\Db_{jj}\yb_j$ for each $j\in\N_N$, which are independent random variables. Since $\Db_{jj}=\vec{D}_j$ are i.i.d. entries with zero mean, so are $Z_j$. Furthermore $|Z_j| \leqslant |\Hbh_{ij}|\cdot|\Db_{jj}|\cdot|\yb_j| = \frac{|\yb_j|}{\sqrt{N}}$, and note that
$$ \sum_{j=1}^NZ_j=(\Hbh\Db\yb)_{i}=\sum_{j=1}^N\Hbh_{ij}\Db_{jj}\yb_j=\langle\Hbh_{(i)}\odot \overbrace{\text{diag}(\Db)}^{\vec{D}},\yb\rangle $$
where $\odot$ is the Hadamard product. By Lemma \ref{Az_Hoef_in}
\begin{align}
\label{pr_sum_Zj}
  \Pr\bigg[\Big|\sum_{j=1}^N Z_j\Big|&>\rho\bigg] \leqslant 2\exp\left\{\frac{-\rho^2/2}{\sum_{j=1}^N(\yb_j/\sqrt{N})^2}\right\}\notag\\
  &= 2\exp\left\{\frac{-N\rho^2}{2\cdot\langle\yb,\yb\rangle}\right\} \overset{\flat}{=} 2\cdot e^{-N\rho^2/2}
\end{align}
where $\flat$ follows from the fact that $\yb$ is a column of $\Ub$. By setting $\rho=C\sqrt{\frac{\log(Nd/\delta)}{N}}$, we get
\begin{align*}
  \Pr\left[\Big|\sum_{j=1}^N Z_j\Big|>C\sqrt{\frac{\log(Nd/\delta)}{N}}\right] &\leqslant 2\exp\left\{-\frac{C^2\log(Nd/\delta)}{2}\right\}\\
  &= 2\left(\frac{\delta}{Nd}\right)^{C^2/2} \overset{\natural}{\leqslant} \frac{\delta}{2Nd}
\end{align*}
where $\natural$ follows from the upper bound on $C$. By applying the union bound over all $i\in\N_N$, we attain \eqref{flat_lem_id}.
\end{proof}

\begin{Lemma}
\label{bd_lvg_Had}  
  For all $i\in\N_N$ and $\{\eb_i\}_{i=1}^N$ the standard basis:
  $$ \Pr\left[\sqrt{\ell_i}\leqslant C\sqrt{d\log(Nd/\delta)/N}\right]\geqslant 1-\delta/2 $$
  for $\ell_i=\|\Vbh_{(i)}\|_2^2$ the $i^{th}$ leverage score of $\Vbh=\Hbh\Db\Ub$.
\end{Lemma}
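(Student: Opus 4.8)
The plan is to bound each leverage score $\ell_i=\|\Vbh_{(i)}\|_2^2$ by controlling the norm of the $i^{th}$ row of $\Vbh=\Hbh\Db\Ub$, and then to leverage the Flattening Lemma (Lemma~\ref{fl_lem}) applied column-by-column. The key observation is that the $i^{th}$ row of $\Vbh$ is the vector $\big((\Hbh\Db\ub_1)_i,\ldots,(\Hbh\Db\ub_d)_i\big)$, where $\ub_1,\ldots,\ub_d$ are the $d$ orthonormal columns of $\Ub$. Hence $\ell_i=\sum_{k=1}^d\big((\Hbh\Db\ub_k)_i\big)^2$, and each summand is the squared $i^{th}$ coordinate of the transformed column $\Hbh\Db\ub_k$. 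The Flattening Lemma already gives a high-probability $\ell_\infty$ bound on each $\Hbh\Db\ub_k$, which is exactly the right tool to control these coordinates.

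First I would fix $i\in\N_N$ and note that by Lemma~\ref{fl_lem}, for each fixed column $\ub_k$ of $\Ub$ we have
\begin{equation*}
  \Pr\left[|(\Hbh\Db\ub_k)_i|> C\sqrt{\log(Nd/\delta)/N}\right]\leqslant\Pr\left[\|\Hbh\Db\ub_k\|_\infty> C\sqrt{\log(Nd/\delta)/N}\right]\leqslant\frac{\delta}{2d}.
\end{equation*}
Crucially, the $\ell_\infty$ bound in \eqref{flat_lem_id} holds simultaneously for all coordinates of a single column (it was obtained by a union bound over $i\in\N_N$ inside the proof of Lemma~\ref{fl_lem}). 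So the natural approach is to union-bound over the $d$ columns: with probability at least $1-d\cdot\frac{\delta}{2d}=1-\delta/2$, every entry of every transformed column satisfies $|(\Hbh\Db\ub_k)_i|\leqslant C\sqrt{\log(Nd/\delta)/N}$ for all $i$ and all $k$.

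On this good event I would then bound the leverage score directly:
\begin{equation*}
  \ell_i=\sum_{k=1}^d\big((\Hbh\Db\ub_k)_i\big)^2\leqslant d\cdot\left(C\sqrt{\log(Nd/\delta)/N}\right)^2 = C^2 d\log(Nd/\delta)/N,
\end{equation*}
so that $\sqrt{\ell_i}\leqslant C\sqrt{d\log(Nd/\delta)/N}$, which is exactly the claimed bound, holding with probability at least $1-\delta/2$ uniformly over $i$.

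The main subtlety I anticipate is the bookkeeping of the failure probability and which union bounds are taken where. The Flattening Lemma as stated bounds the $\ell_\infty$ norm of a \emph{single} column with failure probability $\delta/(2d)$, having already absorbed a union bound over the $N$ coordinates. To get a statement that holds for the specific row $i$ across all $d$ columns, I must take a further union bound over the $d$ columns, which is what consumes the factor of $d$ and lands at total failure probability $\delta/2$. I would want to be careful that the statement of Lemma~\ref{bd_lvg_Had} is a per-$i$ guarantee rather than a simultaneous-over-all-$i$ guarantee (the latter would cost another factor of $N$); the clean reading is that the column-wise $\ell_\infty$ bound holds for all coordinates at once, so a single invocation over the $d$ columns suffices and the bound is in fact uniform in $i$. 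The only real work beyond invoking Lemma~\ref{fl_lem} is this union-bound accounting and the elementary estimate $\sum_{k=1}^d(\cdots)^2\leqslant d\cdot\max_k(\cdots)^2$.
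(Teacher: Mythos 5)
Your proposal is correct and follows essentially the same route as the paper's own proof: apply the Flattening Lemma (Lemma~\ref{fl_lem}) to each of the $d$ orthonormal columns of $\Ub$, union-bound over those $d$ columns to arrive at failure probability $d\cdot\frac{\delta}{2d}=\delta/2$, and then bound $\ell_i=\sum_{k=1}^d\big((\Hbh\Db\ub_k)_i\big)^2\leqslant C^2 d\log(Nd/\delta)/N$ on the good event. Your accounting of which union bounds happen where is in fact slightly more careful than the paper's wording, which loosely says the union bound is taken ``over all entries'' when it is really taken over the $d$ columns.
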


\begin{proof}{[Lemma \ref{bd_lvg_Had}]}
It is straightforward that the columns of $\Vbh$ form an orthonormal basis of $\Ab$, thus Lemma \ref{fl_lem} implies that for $j\in\N_d$
$$ \Pr\left[\|\Vbh\cdot\eb_j\|_\infty> C\sqrt{\log(Nd/\delta)/N}\right] \leqslant \frac{\delta}{2d} \ . $$
By applying the union bound over all entries of $\Vbh^{(j)}=\Vbh\cdot\eb_j$
\begin{equation}
\label{small_entries}
  \Pr\Bigg[\overbrace{|\eb_i^T\cdot\Vbh\cdot\eb_j|}^{|(\Hbh\Db\Ub)_{ij}|}> C\sqrt{\frac{\log(Nd/\delta)}{N}}\Bigg] \leqslant d\cdot\frac{\delta}{2d} = \delta/2\ .
\end{equation}
We manipulate the argument of the above bound to obtain
$$ \|\eb_i^{T}\cdot\Vbh\|_2=\Big(\sum_{j=1}^d(\Hbh\Db\Ub)_{ij}^2\Big)^{1/2} > { C}\sqrt{d\cdot\frac{\log(Nd/\delta)}{N}} \ , $$
which can be viewed as a scaling of the random variable entries of $\Vbh$. The probability of the complementary event is therefore
$$ \Pr\left[\|\eb_i^T\cdot\Vbh\|_2\leqslant C\sqrt{d\log(Nd/\delta)/N}\right]\geqslant 1-\delta/2 $$
and the proof is complete.
\end{proof}

\begin{Rmk}
The complementary probable event of \eqref{small_entries} can be interpreted as `every entry of $\Vbh$ is small in absolute value'.
\end{Rmk}

\begin{Lemma}
\label{bd_block_lvg_Had}
  For all $\iota\in\N_K$ and $\K_\iota\subsetneq\N_N$ of size $\tau=N/K$
  $$ \Pr\left[\tilde{\ell}_\iota\leqslant C^2d\cdot\log(Nd/\delta)/K\right]>1-\tau\delta/2 \ . $$
  for $0<C\leqslant \sqrt{2+\log(16)/\log(Nd/\delta)}$ a constant.
\end{Lemma}

\begin{proof}{[Lemma \ref{bd_block_lvg_Had}]}
For $\alpha\coloneqq C^2d\cdot\log(Nd/\delta)/N$
$$ \Pr\big[\tilde{\ell}_\iota\leqslant\tau\cdot\alpha\big]>\Pr\big[\{\ell_j\leqslant\alpha:\forall j\in\K_\iota\}\big]\overset{\diamondsuit}{>}(1-\delta/2)^{\tau}$$
where $\diamondsuit$ follows from Lemma \ref{bd_lvg_Had}. By the binomial approximation, we have $(1-\delta/2)^{\tau}\approx 1-\tau\delta/2$.
\end{proof}

Define the symmetric matrices
\begin{equation}
\label{def_Xi}
  \Xb_i = \left(\Ib_d-\frac{N}{\tau}\cdot\Vbh_{(\K^i)}^T\Vbh_{(\K^i)}\right) = \left(\Ib_d-K\cdot \Vbh_{(\K^i)}^T\Vbh_{(\K^i)}\right)
\end{equation}
where $\Vbh_{(\K^i)}=\Vbh_{(\K_\iota)}$ is the submatrix of $\Vbh$ corresponding to the $i^{th}$ sampling trial of our algorithm. Let $\Xb$ be the matrix r.v. of which the $\Xb_i$'s are independent copies. Note that the realizations $\Xb_i$ of $\Xb$ correspond to the sampling blocks of the event in \eqref{subsp_emb_id}. To apply Theorem \ref{matr_Chern}, we show that the $\Xb_i$'s have zero mean, and we bound their $\ell_2$-norm and variance. Their $\ell_2$-norms are upper bounded by
\begin{align}
    \|\Xb_i\|_2 &\leqslant \|\Ib_d\|_2+\|\frac{N}{\tau}\cdot \Vbh_{(\K^i)}^T\Vbh_{(\K^i)}\|_2\notag\\
  &= 1+\frac{N}{\tau}\cdot\|\Vbh_{(\K_\iota)}\|_2^2\notag\\
  &\leqslant 1+\frac{N}{\tau}\cdot\max_{\iota\in\N_K}\left\{\|\Ib_{(\K_\iota)}\cdot\Vbh\|_2^2\right\}\notag\\
  &\leqslant 1+\frac{N}{\tau}\cdot\max_{\iota\in\N_K}\left\{\|\Ib_{(\K_\iota)}\cdot\Vbh\|_F^2\right\} \tag*{$\left[\|\Ab\|_2\leqslant\|\Ab\|_F\right]$}\notag\\
  &\overset{\textdollar}{\leqslant} 1+\frac{N}{\tau}\cdot\left(|\K_\iota|\cdot\max_{j\in\N_N}\left\{\|\eb_j^T\cdot\Vbh\|_2^2\right\}\right)\notag\\
  &\leqslant 1+\frac{N}{\tau}\cdot\big(\tau\cdot(C^2\cdot d\log(Nd/\delta)/N)\big) \tag*{[Lemma \ref{fl_lem}]}\notag\\
  &= 1+C^2\cdot d\log(Nd/\delta) \label{bound_gamma}\\
  &=1+N\alpha\notag
\end{align}
for $\alpha=C^2d\cdot\log(Nd/\delta)/N$ where in $\textdollar$ we used the fact that
$$ \|\Ib_{(\K_\iota)}\cdot\Vbh\|_F^2 = \sum_{j\in\K_\iota}\|\eb_j^T\cdot\Vbh\|_2^2 \leqslant |\K_\iota|\cdot\max_{j\in\K_\iota}\left\{\|\eb_j^T\cdot\Vbh\|_2^2\right\} \ . $$
From the above derivation, it follows that
\begin{align*}
  \|\Vbh_{(\K^i)}\|_2^2 &= \|\Vbh_{(\K^i)}^T\Vbh_{(\K^i)}\|_2 \\
  &\leqslant \frac{\tau}{N}\cdot\left(1+C^2\cdot d\log(Nd/\delta)-\|\Ib_d\|_2\right)\\
  &=\tau C^2d/N\cdot\log(Nd/\delta)\\
  &=\tau\alpha 
\end{align*}
for all $\iota\in\N_K$. By setting $\tau=1$, we get an upper bound on the squared $\ell_2$-norm of the rows of $\Vbh$:
\begin{equation}
\label{bound_out_prod_norm}
  \|\Vbh_l\|_2^2 =\|\Vbh_l\Vbh_l^T\|_2 =\|\Vbh_l^T\Vbh_l\|_2 \leqslant \alpha
\end{equation}
where $\Vbh_l=\Vbh_{(l)}$, for all $l\in\N_N$.

Next, we compute $\Eb\coloneqq\E[\Xb^T\Xb+\Ib_d]$ and its eigenvalues. By the definition of $\Xb$ and its realizations:
{\small
\begin{align*}
  \Xb_i^T\Xb_i &= \left(\Ib_d-N/\tau\cdot \Vbh_{(\K^i)}^T\Vbh_{(\K^i)}\right)^T \cdot \left(\Ib_d-N/\tau\cdot \Vbh_{(\K^i)}^T\Vbh_{(\K^i)}\right)\\
  &= \Ib_d-2\cdot\frac{N}{\tau}\cdot \Vbh_{(\K^i)}^T\Vbh_{(\K^i)} + \left(\frac{N}{\tau}\right)^2\cdot \Vbh_{(\K^i)}^T\Vbh_{(\K^i)}\Vbh_{(\K^i)}^T\Vbh_{(\K^i)}
\end{align*}
}
thus $\Eb$ is evaluated as follows:
{\small
\begin{subequations}
\label{E_eval}  
\begin{align*}
  \E[&\Xb^T\Xb+\Ib_d] = 2\Ib_d - 2\cdot\left(N/\tau\right)\cdot\E\left[ \Vbh_{(\K^i)}^T\Vbh_{(\K^i)}\right] \\ &{\white=}+ \left(N/\tau\right)^2\cdot \E\left[\Vbh_{(\K^i)}^T\Vbh_{(\K^i)}\Vbh_{(\K^i)}^T\Vbh_{(\K^i)}\right]\\
  &= 2\Ib_d - 2\cdot\left(N/\tau\right)\cdot\left({\textstyle\sum_{j=1}^K}K^{-1}\cdot \Vbh_{(\K_j)}^T\Vbh_{(\K_j)}\right) \\ &{\white=}+ \left(N/\tau\right)^2\cdot\left({\textstyle\sum_{j=1}^K}K^{-1}\cdot \Vbh_{(\K_j)}^T\left(\Vbh_{(\K_j)}\Vbh_{(\K_j)}^T\right)\Vbh_{(\K_j)}\right)\\
  &= 2\Ib_d - 2\cdot\left({\textstyle\sum_{l=1}^N} \Vbh_l^T\Vbh_l\right) + (N/\tau)\cdot\left({\textstyle\sum_{l=1}^N} \Vbh_l^T\left(\Vbh_l\Vbh_l^T\right)\Vbh_l\right)\\
  &= K\cdot\left({\textstyle\sum_{l=1}^N} \langle \Vbh_l,\Vbh_l\rangle\cdot \Vbh_l^T\Vbh_l\right)
\end{align*}
\end{subequations}
}
where in the last equality we invoked $\sum_{l=1}^N \Vbh_l^T\Vbh_l=\Ib_d$.

In order to bound the variance of the matrix random variable $\Xb$, we bound the largest eigenvalue of $\Eb$; by comparing it to the matrix
$$ \Fb=K\alpha\cdot\left(\sum_{l=1}^N\Vbh_l^T\Vbh_l\right)=K\alpha\cdot\Ib_d $$
whose eigenvalue $K\alpha$ is of algebraic multiplicity $d$. It is clear that $\Eb$ and $\Fb$ are both real and symmetric; thus they admit an eigendecomposition of the form $\Qb\bold{\Lambda}\Qb^T$. Note also that for all $\yb\in\R^d$:
\begin{align}
  \yb^T\Eb\yb &= K\cdot \yb^T\left(\sum_{l=1}^N \Vbh_l^T\left(\Vbh_l\Vbh_l^T\right)\Vbh_l\right)\yb\notag\\
  &\overset{\sharp}{=} K\cdot\sum_{l=1}^N \langle \yb,\Vbh_l\rangle^2\cdot\|\Vbh_l\|_2^2\notag\\
  &\overset{\flat}{\leqslant} K\alpha\cdot\sum_{l=1}^N \langle \yb,\Vbh_l\rangle^2 \label{bound_quadr_E}\\
  &= K\alpha\cdot\sum_{l=1}^N\yb^T\Vbh_l^T\cdot\Vbh_l\yb\notag\\
  &= \yb^T\left(K\alpha\cdot\sum_{l=1}^N\Vbh_l^T\cdot\Vbh_l\right)\yb\notag\\
  &= \yb^T\Fb\yb\notag
\end{align}
where in $\flat$ we invoked \eqref{bound_out_prod_norm}. By $\sharp$ we conclude that $\yb^T\Eb\yb\geqslant0$, thus $\Fb\succeq\Eb\succeq0$.

Let $\wb_i,\zb_i$ be the unit-norm eigenvectors of $\Eb,\Fb$ corresponding to their respective $i^{th}$ largest eigenvalue. Then
$$ \wb_i^T\left(\Qb_\Eb\bold{\Lambda}_\Eb\Qb_\Eb^T\right)\wb_i = \eb_i^T\cdot\bold{\Lambda}_\Eb\cdot\eb_i = \lambda_i \quad {\white\implies} $$
and by \eqref{bound_quadr_E} we bound this as follows:
\begin{align*}
  \lambda_i = \wb_i^T\Eb\wb_i \leqslant K\alpha\cdot\sum_{l=1}^N \langle \wb_i,\Vbh_l\rangle^2\ .
\end{align*}
Since
$$ \wb_1 = \arg\max_{\substack{\vb\in\R^d\\ \|\vb\|_2=1}}\big\{\wb^T\Eb\vb\big\} \ \implies \ \|\Eb\|_2 = \lambda_1=\wb_1^T\Eb\wb_1\ , $$
and $\Fb\succeq\Eb\geqslant0$, it follows that
\begin{align*}
  \|\Eb\|_2&=\wb_1^T\Eb\wb_1 \leqslant \wb_1^T\Fb\wb_1\\
  &\leqslant \arg\max_{\substack{\vb\in\R^d\\ \|\vb\|_2=1}}\big\{\vb^T\Fb\vb\big\} = \|\Fb\|_2 = K\alpha\ .
\end{align*}
In turn, this gives us 
\begin{align}
  \|\E[\Xb^T\Xb]\|_2 &= \|\Eb-\Ib_d\|_2\notag\\
  &\leqslant \|\Eb\|_2+\|\Ib_d\|_2\notag\\
  &\leqslant \|\Fb\|_2+1\notag\\
  &= K\alpha+1\notag\\
  &\leqslant C^2K\frac{d}{N}\log(Nd/\delta)+1 \notag\\
  &= C^2\frac{d}{\tau}\log(Nd/\delta)+1 \label{bound_var}
\end{align}
hence $\|\E[\Xb^T\Xb]\|_2=O\big(\frac{d}{\tau}\log(Nd/\delta)\big)$.

We now have everything we need to apply Theorem \ref{matr_Chern}.

\begin{Prop}
\label{prop_SRHT_b}
The block-SRHT $\Sbp$ guarantees 
$$ \Pr\Big[\|\bold{I}_d-\Ub^T\Sbp^T\Sbp\Ub\|_2>\epsilon\Big] \leqslant 2d\cdot \exp\left\{\frac{-\epsilon^2\cdot q}{\Theta\left(\frac{d}{\tau}\cdot\log(Nd/\delta)\right)}\right\} $$
for any $\epsilon>0$, and $q=r/\tau>d/\tau$.
\end{Prop}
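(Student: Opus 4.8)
The plan is to recognize the quantity $\Ib_d-\Ub^T\Sbp^T\Sbp\Ub$ as precisely the average $\Zb=\frac{1}{q}\sum_{i=1}^q\Xb_i$ of the zero-mean matrices $\Xb_i$ defined in \eqref{def_Xi}, and then invoke the Matrix Chernoff Bound (Theorem \ref{matr_Chern}). First I would write $\Sbp\Ub=\Ombp\Pib\Ub=\Ombp\Vbh$ and use the structure of $\Ombp=\Omb_{\text{part}}\otimes\Ib_\tau$: each of its $q$ row-blocks selects one sampled block $\K^i=\K_{j_i}$ of $\Vbh$ and rescales it by $\sqrt{K/q}$. Hence $(\Sbp\Ub)^T(\Sbp\Ub)=\frac{K}{q}\sum_{i=1}^q\Vbh_{(\K^i)}^T\Vbh_{(\K^i)}$, so that $\Ib_d-\Ub^T\Sbp^T\Sbp\Ub=\frac{1}{q}\sum_{i=1}^q\big(\Ib_d-K\cdot\Vbh_{(\K^i)}^T\Vbh_{(\K^i)}\big)=\frac{1}{q}\sum_{i=1}^q\Xb_i=\Zb$, which is exactly the object controlled by Theorem \ref{matr_Chern}.

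Next I would verify the three hypotheses of Theorem \ref{matr_Chern} for $\Xb$. For the zero-mean condition, uniform sampling with replacement gives $\E[\Xb]=\Ib_d-K\cdot\frac{1}{K}\sum_{j=1}^K\Vbh_{(\K_j)}^T\Vbh_{(\K_j)}=\Ib_d-\Vbh^T\Vbh=\bold{0}$, since the columns of $\Vbh=\Hbh\Db\Ub$ are orthonormal. The remaining two hypotheses have in fact already been established earlier in this section: the operator-norm bound \eqref{bound_gamma} supplies $\gamma=1+C^2d\log(Nd/\delta)=O\big(d\log(Nd/\delta)\big)$, and the variance bound \eqref{bound_var} supplies $\sigma^2=C^2\tfrac{d}{\tau}\log(Nd/\delta)+1=O\big(\tfrac{d}{\tau}\log(Nd/\delta)\big)$; both of these rely on the Flattening Lemma (Lemma \ref{fl_lem}) through the near-uniformity of the block-leverage scores of $\Vbh$.

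With these in hand, applying Theorem \ref{matr_Chern} to $\Zb$ immediately yields $\Pr\big[\|\Ib_d-\Ub^T\Sbp^T\Sbp\Ub\|_2>\epsilon\big]\leqslant 2d\exp\big(-q\epsilon^2/(\sigma^2+\gamma\epsilon/3)\big)$. The only remaining step, and the place requiring a little care, is to simplify the denominator $\sigma^2+\gamma\epsilon/3$ into the claimed $\Theta\big(\tfrac{d}{\tau}\log(Nd/\delta)\big)$. Since $\gamma=\Theta(\tau\sigma^2)$ up to lower-order terms, the linear-in-$\epsilon$ contribution is $\gamma\epsilon/3=\Theta(\tau\epsilon)\cdot\sigma^2$, so in the relevant small-$\epsilon$ regime (i.e. $\tau\epsilon=O(1)$) it is dominated by $\sigma^2$ and the denominator is $\Theta\big(\tfrac{d}{\tau}\log(Nd/\delta)\big)$, giving the stated bound. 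I expect this denominator bookkeeping to be the main (and essentially only) subtlety, since the genuinely hard estimates, namely the bounds on $\gamma$ and $\sigma^2$, were already carried out before the proposition; everything else is assembling the identity $\Zb=\Ib_d-\Ub^T\Sbp^T\Sbp\Ub$ and confirming the zero-mean property.
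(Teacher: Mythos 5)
Your proposal follows exactly the paper's own proof: identify $\Ib_d-\Ub^T\Sbp^T\Sbp\Ub$ with the average $\Zb=\frac{1}{q}\sum_{i=1}^q\Xb_i$ of the matrices from \eqref{def_Xi}, verify the zero-mean hypothesis via $\E[\Xb]=\Ib_d-\Vbh^T\Vbh=\bold{0}$, feed the previously established bounds \eqref{bound_gamma} and \eqref{bound_var} into the Matrix Chernoff bound (Theorem \ref{matr_Chern}), and absorb the denominator $\sigma^2+\gamma\epsilon/3$ into $\Theta\big(\frac{d}{\tau}\log(Nd/\delta)\big)$. If anything, you are slightly more careful than the paper on the final bookkeeping step, which it handles by silently folding the factor $(1+\epsilon\tau/3)$ into the $\Theta$ constant, i.e. implicitly working in the same $\tau\epsilon=O(1)$ regime that you make explicit.
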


\begin{proof}{[Proposition \ref{prop_SRHT_b}]}
Let $\{\Xb_i\}_{i=1}^q$ as defined in \eqref{def_Xi} denote $q$ block samples. Let $j(i)$ denote the index of the submatrix which was sampled at the $i^{th}$ random trial, i.e. $\K_{j(i)}=\K_{j(i)}^i$. We then get
{\small
\begin{align*}
  \Zb &= \frac{1}{q}\sum_{i=1}^t \Xb_{j(i)}\\
  &= \frac{1}{q}\cdot\sum_{i=1}^q\left(\Ib_d-\frac{N}{\tau}\cdot \Vbh_{(\K_{j(i)})}^T\Vbh_{(\K_{j(i)})}\right)\\
  &= \Ib_d-\sum_{i=1}^q\left(\sqrt{N/r}\cdot \Vbh_{(\K_{j(i)})}\right)^T\cdot\left(\sqrt{N/r}\cdot \Vbh_{(\K_{j(i)})}\right)\\
  &= \Ib_d-\sum_{i=1}^q\left(\sqrt{N/r}\cdot\Ib_{(\K_{j(i)})}\cdot \Vbh\right)^T\cdot\left(\sqrt{N/r}\cdot\Ib_{(\K_{j(i)})}\cdot \Vbh\right)\\
  &= \Ib_d-\left(\Ombp\Hbh\Db\Ub\right)^T\cdot\left(\Ombp\Hbh\Db\Ub\right)\\
  &= \Ib_d - \Ub^T\Sbp^T\Sbp\Ub\ .
\end{align*}
}

We apply Lemma \ref{matr_Chern} by fixing the terms we bounded: \eqref{bound_gamma} $\gamma=C^2d\log(Nd/\delta)+1$, \eqref{bound_var} $\sigma^2=C^2\frac{d}{\tau}\log(Nd/\delta)+1$, and fix $q$ and $\epsilon$. The denominator of the exponent in \eqref{matr_Chern_expr} is then
\begin{align*}
  \big(C^2&d/\tau\cdot\log(Nd/\delta)+1\big)+\big((C^2d\log(Nd/\delta)+1)\cdot\epsilon/3\big) = \\
  &= C^2d/\tau\cdot\log(Nd/\delta)\cdot\big(1+\epsilon\tau/3\big)+(1+\epsilon/3)\\
  &= \Theta\left(\frac{d}{\tau}\log(Nd/\delta)\right) 
\end{align*}
and the proof is complete.
\end{proof}

\begin{proof}{[Theorem \ref{subsp_emb_thm}]}
By substituting $q$ in the bound of Proposition \ref{prop_SRHT_b} and taking the complementary event, we attain the statement.
\end{proof}

\section{Proofs of Section \ref{security_sec}}

\begin{proof}{[Theorem \ref{Shan_secr_thm}]}
Denote the application of $\Pib$ to a matrix $\Mb$ by $\Enc_\Pib(\Mb)=\Pib\Mb$. We will prove secrecy of this scheme, which then implies that a subsampled version of the transformed information is also secure. Let $\Abg=\Enc_\Pib(\Ab)$ and $\bbg=\Enc_\Pib(\bb)$.

The adversaries' goal is to reveal $\Ab$. To prove that $\Enc_\Pib$ is a well-defined security scheme, we need to show that an adversary cannot learn recover $\Ab$; with only knowledge of $(\Abg,\bbg)$.

For a contradiction, assume an adversary is able to recover $\Ab$ after only observing $(\Abg,\bbg)$. This means that it was able to obtain $\Pib^{-1}$, as the only way to recover $\Ab$ from $\Abg$ is by inverting the transformation of $\Pib$: $\Ab=\Pib^{-1}\cdot\Abg$. This contradicts the fact that only $(\Abg,\bbg)$ were observed. Thus, $\Enc_\Pib$ is a well-defined security scheme.

It remains to prove perfect secrecy according to Definition \ref{Sh_secr}. Observe that for any $\bar{\Ub}\in\M$ and $\bar{\Qb}\in\Cc$
\begin{equation} \Pr_{\Pib\gets\K}\left[\Enc_\Pib(\bar{\Ub})=\bar{\Qb}\right] = \Pr_{\Pib\gets\K}\left[\Pib\cdot\bar{\Ub}=\bar{\Qb}\right] = \ind \end{equation}
\begin{equation} \ind = \Pr_{\Pib\gets\K}\left[\Pib=\bar{\Qb}\cdot\bar{\Ub}^{-1}\right] \overset{\sharp}{=} \frac{1}{|\Otil_\Ab|} = \frac{1}{|\K|} \end{equation}
where $\sharp$ follows from the fact that $\bar{\Qb}\cdot\bar{\Ub}^{-1}$ is fixed. Hence, for any $\Ub_0,\Ub_1\in\M$ and $\bar{\Qb}\in\Cc$ we have
$$ \Pr_{\Pib\gets\K}\left[\Enc_\Pib(\Ub_0)=\bar{\Qb}\right] = \frac{1}{|\K|} = \Pr_{\Pib\gets\K}\left[\Enc_\Pib(\Ub_1)=\bar{\Qb}\right] $$
as required by Definition \ref{Sh_secr}. This completes the proof.
\end{proof}

We note that through the SVD of $\Abg$, the adversaries can learn the singular values and right singular vectors of $\Ab$, since
\begin{equation}
  \Abg=(\Pib\cdot\Ub_\Ab)\cdot\Sigb_\Ab\cdot\Vb_\Ab^T=\Ub_\Abg\cdot\Sigb_\Ab\cdot\Vb_\Ab^T \ .
\end{equation}
Recall that the singular values are unique and, for distinct positive singular values, the corresponding left and right singular vectors are also unique up to a sign change of both columns. We assume w.l.o.g. that $\Vb_\Abg=\Vb_\Ab$ and $\Ub_\Abg=\Pib\cdot\Ub_\Ab$.

Geometrically, the encoding $\Enc_\Pib$ changes the orthonormal basis of $\Ub_\Ab$ to $\Ub_\Abg$, by rotating it or reflecting it; when $\text{det}(\Pib)$ is +1 or -1 respectively. Of course, there are infinitely many ways to do so, which is what we are relying the security of this approach on.

Furthermore, unless $\Ub_\Ab$ has some special structure (e.g., triangular, symmetric, etc.), one cannot use an off-the-shelf factorization to reveal $\Ub_\Ab$. Even though a lot can be revealed about $\Ab$, i.e. $\Sigma_\Ab$ and $\Vb_\Ab$, we showed that it is not possible to reveal $\Ub_\Ab$; hence nor $\Ab$, without knowledge of $\Pib$.

\begin{proof}{[Corollary \ref{cor_fl_lem}]}
The proof is identical to that of Lemma \ref{fl_lem}. The only difference is that the random variable entries $\tilde{Z}_j=\Hbt_{ij}\Db_{jj}\yb_j$ for $j\in\N_N$ and the fixed $i$ now differ, though they still meet the same upper bound
$$|\tilde{Z}_j| \leqslant |\Hbt_{ij}|\cdot|\Db_{jj}|\cdot|\yb_j| = \frac{|\yb_j|}{\sqrt{N}}\ . $$
Since \eqref{pr_sum_Zj} holds true, the guarantees implied by flattening lemma also do, thus the sketching properties of the SRHT are maintained.
\end{proof}

\begin{Rmk}
Since the Lemma \ref{fl_lem} and Corollary \ref{cor_fl_lem} give the same result for the block-SRHT and garbled block-SRHT respectively, it follows that Theorem \ref{subsp_emb_thm} also holds for the garbled block-SRHT.
\end{Rmk}

\begin{Def}[Ch.3 \cite{KL14}]
\label{comp_sec}
A security scheme is \textbf{computationally secure} if any probabilistic polynomial-time adversary succeeds in breaking it, with at most negligible probability. By negligible we mean it is asymptotically smaller than any inverse polynomial function.
\end{Def}

\begin{proof}{[Theorem \ref{SRHT_comp_sec_thm}]}
Assume w.l.o.g. that a computationally bounded adversary observes $\Pibt\Ab$, for which
$\Abph=\Sbp\cdot\Ab=\Ombp\cdot(\Pibt\Ab)$ the resulting sketch of Algorithm \ref{alg_orthog_sketch}, for $\Pibt\in\Ht_N$. To invert the transformation of $\Pibt$, the adversary needs knowledge of the components of $\Pibt$, i.e. $\Hbh$ and $\Pb$. Assume for a contradiction that there exists a probabilistic polynomial-time algorithm which, is able to recover $\Ab$ from $\Pibt\Ab$. This means that it has revealed $\Pb$, so that it can compute
$$ \overbrace{(\Db\Hbh\Pb^T)}^{\Pibt^T=\Pibt^{-1}}\cdot(\Pb\Hbh\Db)\cdot\Ab = \Pibt^{-1}\cdot\Pibt\cdot\Ab = \Ab\ , $$
which contradicts the assumption that the permutation $\Pb$ is a OWF. Specifically, recovering $\Ab$ by observing $\Pibt\Ab$ requires finding $\Pb$ in polynomial time.
\end{proof}

Finally, we show that $\gh^{[t]}=g^{[t]}$, which we claimed in Subsection \ref{exact_grad_subsec}. Since $\Pib\in O_N(\R)$ for the suggested projections (except that random Rademacher projection), we have $\Pib^T\Pib=\Ib_N$. It then follows that
\begin{align*}
  \gh^{[t]} &= 2\cdot\sum\limits_{j=1}^K\Abt_j^T\left(\Abt_j\xb^{[t]}-\bbt_j\right)\\
  &= \left(\Pib\Ab\right)^T\cdot\left(\Pib\Ab\xb^{[t]}-\Pib\bb\right)\\
  &= \Ab^T\cdot\left(\Pib^T\Pib\right)\cdot\left(\Ab\xb^{[t]}-\bb\right)\\
  &= g^{[t]}
\end{align*}
and this completes the derivation.

\subsection{Counterexample to Perfect Secrecy of the SRHT}
\label{SRHT_counter_example}

Here, we present an explicit example for the SRHT (which also applies to the block-SRHT), which contradicts Definition \ref{Sh_secr}. Therefore, the SRHT cannot provide perfect secrecy.

Consider the simple case where $N=2$, and assume that $\Hbh_2\in\Otil_\Ab$. Since $(\Otil_\Ab,\cdot)$ is a multiplicative subgroup of $\GL_2(\R)$, we have $\Ib_2\in\Otil_\Ab$. Let $\Ub_0=\Ib_2$ and $\Ub_1=\Hbh_2$.

For $d_1,d_2$ i.i.d. Rademacher random variables and
$$ \Db = \begin{pmatrix} d_1 & 0\\ 0 & d_2 \end{pmatrix} , $$
it follows that
\begin{align*}
  \Cb_0=\left(\Hbh_2\Db\right)\cdot\Ub_0 = \Hbh_2\Db = \frac{1}{2}\begin{pmatrix} d_1 & -d_2\\ d_1 & d_2\end{pmatrix}
\end{align*}
and
\begin{align*}
  \Cb_1=\left(\Hbh_2\Db\right)\cdot\Ub_1 &= \frac{1}{2} \begin{pmatrix} 1 & -1\\ 1 & 1\end{pmatrix} \begin{pmatrix} d_1 & 0\\ 0 & d_2 \end{pmatrix} \begin{pmatrix} 1 & -1\\ 1 & 1\end{pmatrix}\\
  &=\frac{1}{2} \begin{pmatrix} 1 & -1\\ 1 & 1\end{pmatrix} \begin{pmatrix} d_1 & -d_1\\ d_2 & d_2 \end{pmatrix}\\
  &= \frac{1}{2}\begin{pmatrix} d_1-d_2 & -d_1-d_2\\ d_1+d_2 & -d_1+d_2 \end{pmatrix} .
\end{align*}
It is clear that $\Cb_0$ always has precisely two distinct entries, while $\Cb_1$ has three distinct entries; with 0 appearing twice for any pair $d_1,d_2\in\{\rpm1\}$. Therefore, depending on the observed transformed matrix, we can disregard one of $\Ub_0$ and $\Ub_1$ as being a potential choice for $\Pib$.

For instance, if $\bar{\Cb}$ is the observed matrix and it has a two zero entries, then
$$ \Pr_{\Pib\gets H_N}\left[\Pib\cdot\Ub_1=\bar{\Cb}\right] > \Pr_{\Pib\gets H_N}\left[\Pib\cdot\Ub_0=\bar{\Cb}\right]=0 $$
which contradicts \eqref{perf_secrecy_id}.

Note that even if we apply a permutation, as in the case of the garbled block-SRHT, we still get the same conclusion. Hence, the garbled block-SRHT also does not achieve perfect secrecy.

\subsection{Analogy with the One-Time-Pad}

It is worth noting that the encryption resulting by the multiplication with $\Pib$; under the assumptions made in Theorem \ref{Shan_secr_thm}, bares a strong resemblance with the one-time-pad (OTP). This is not surprising, as it is one of the few known perfectly secret encryption schemes.

The main difference between the two, is that the the spaces we work over are the multiplicative group $(\Otil_\Ab,\cdot)$ whose identity is $\Ib_N$ in Theorem \ref{Shan_secr_thm}, and the additive group $\big((\Z/2\Z)^\ell,+\big)$ in the OTP; whose identity is the zero vector of length $\ell$.

As in the OTP, we make the assumption that $\K,\M,\Cc$ are all equal to the group we are working over; $\Otil_\Ab$, which it is closed under multiplication. In the OTP, a message is revealed by applying the key on the ciphertext: if $c=m\oplus k$ for $k$ drawn from $\K$, then $c\oplus k=m$. Analogously here, for $\Pib$ drawn from $\Otil_\Ab$: if $\bar{\Cb}=\Pib\cdot\Ub_\Ab$, then $\bar{\Cb}^T\cdot\Pib=(\Ub_\Ab^T\cdot\Pib^T)\cdot\Pib=\Ub_\Ab^T$. An important difference here is that the multiplication is not commutative.

Also, for two distinct messages $m_0,m_1$ which are encrypted with the same key $k$ to $c_0,c_1$ respectively, it follows that $c_0\oplus c_1=m_1\oplus m_2$ which reveals the the XOR of the two messages. In our case, for the bases $\Ub_0,\Ub_1$ encrypted to $\Cb_0=\Pib\Ub_0$ and $\Cb_0=\Pib\Ub_1$ with the same projection matrix $\Pib$, it follows that $\Cb_0^T\cdot\Cb_1=\Ub_0^T\cdot\Ub_1$.

\end{document}